\newtheorem{proposition}{ Proposition}[section]
\newtheorem{remark}{ Remark}
\newtheorem{theorem}{ Theorem}[section]
\newtheorem{definition}{Definition}[section]
\newtheorem{example}{Example}[section]
\newtheorem{condition}{Condition}[section]
\newcommand{\indicator}[1]{\mathbf 1_{\left[ {#1} \right] }}
\newcommand{\E}{\mathbb E}
\newcommand{\Pspace}{\mathcal P}
\newcommand{\Hspace}{\mathbb H}
\newcommand{\Sspace}{\mathbb S}
\newcommand{\F}{{\mathcal F}}
\newcommand{\full}{{\mbox{\scriptsize{full}}}}
\newcommand{\mm}{\textup{d}}
\newcommand{\qq}{\textup{q}}
\newcommand{\dd}{\textup{d}}
\newcommand{\Fpartial}{{\mathcal F}^S }
\newcommand{\tr}{\top }
\newcommand{\premium}{\Pi}
\newcommand{\Xpi}{X^\pi}
\newcommand{\goodchi}{\protect\raisebox{2pt}{\large$\chi$}}
\newcommand{\w}{\text{w}}
\newcommand{\y}{\text{y}}
\begin{document}
\title{Backward SDEs for Control with Partial Information}
\author{A. Papanicolaou\thanks{Department of Finance and Risk Engineering, NYU Tandon School of Engineering, 6 MetroTech Center, Brooklyn NY 11201 {\em ap1345@nyu.edu}. Part of this research was performed while the author was visiting the Institute for Pure and Applied Mathematics (IPAM), which is supported by the National Science Foundation.}}
\maketitle
 \begin{abstract} This paper considers a non-Markov control problem arising in a financial market where asset returns depend on hidden factors. The problem is non-Markov because nonlinear filtering is required to make inference on these factors, and hence the associated dynamic program effectively takes the filtering distribution as one of its state variables. This is of significant difficulty because the filtering distribution is a stochastic probability measure of infinite dimension, and therefore the dynamic program has a state that cannot be differentiated in the traditional sense. This lack of differentiability means that the problem cannot be solved using a Hamilton-Jacobi-Bellman (HJB) equation. This paper will show how the problem can be analyzed and solved using backward stochastic differential equations (BSDEs), with a key tool being the problem's dual formulation.\\ 

 \vspace{.0005cm}
 \noindent{\bf Keywords:} Non-Markov Control, Backward Stochastic Differential Equations, Portfolio Optimization, Partial information.\\
 
 \vspace{.0005cm}
\noindent{\bf Subject classifications:} 91G10, 60G35, 91G80
\end{abstract}
\section{Introduction}
\label{sec:intro}
Consider an investor who seeks to optimally allocate among $(\dd+1)$-many assets: a risk-free instrument (e.g., a money-market or bank account) that pays interest rate $r\geq 0$, and $\dd$-many risky exchange-traded funds (ETFs) denoted $S=(S^1,S^2,\dots, S^\dd)^\tr$ where
\[S^i(t) = \hbox{time-$t$ price of the $i$\textsuperscript{th} ETF} \ .\]
These prices are continuous processes on a filtered probability space $\left(\Omega,(\F_t)_{t\leq T},\mathbb P\right)$. Let $W$ and $B$ denote a pair of $\F_t$ Brownian motions where $W\in C([0,T];\mathbb R^{\mm})$ and $B\in C([0,T];\mathbb R^{\qq})$ for a positive integer $\qq<\infty$, and with
\[dW(t) dW(t)^\tr=I_{\mm\times \mm}dt,\qquad dB(t)dB(t)^\tr=I_{\qq\times \qq}dt\qquad d W(t) dB(t)^\tr=0\ ,\]
where $I_{(\cdot)}$ denotes an identity matrix, and ${(\cdot)}^\tr$ denotes matrix/vector transpose. The ETFs' price process $S\in C([0,T];\mathbb R^\dd)$ has returns that depend on a stochastic factor $Y\in C([0,T];\mathbb R^{\qq})$, as given  by the following hidden Markov model,
\begin{eqnarray}
\label{eq:dS}
\frac{dS^i(t)}{S^i(t)}&=&h^i(Y(t))dt+ \sum_{j=1}^{\mm}\sigma_\w^{ij}dW^j(t)+ \sum_{j=1}^{\qq}\sigma_\y^{ij}dB^j(t)\hspace{.6cm}\hbox{(observed)\ ,}\\
\label{eq:dY}
dY(t)&=&b(Y(t))dt+a(Y(t))dB(t)\hspace{3.7cm}\hbox{(hidden)\ ,}
\end{eqnarray}
where the initial condition $Y(0)$ is unobserved and independent of $W$ and $B$. In order to ensure existence and uniqueness of strong solutions to the SDEs, the coefficients $a,b$, and $h$ are assumed to be $C^1$ and Lipschitz continuous, with matrix $a\in \mathbb R^{\qq\times \qq}$ satisfying the condition $\inf_{y\in\mathbb R^{\qq}}aa^\tr(y)>0$ (i.e., positive definiteness). The matrices $\sigma_\w$ and $\sigma_\y$ combine for the total covariance
\[\sigma=\Big(\sigma_\w\sigma_\w^\top+\sigma_\y\sigma_\y^\top\Big)^{1/2}\in\mathbb R^{\dd\times \mm}\ ,\]
where it is assumed there is a constant $\epsilon$ such that
\begin{equation}
\label{eq:sigmaBound}
0<\epsilon\leq \sigma\sigma^\tr \leq \frac1\epsilon<\infty\ ,
\end{equation}
i.e., $\sigma\sigma^\tr$ is positive definite and bounded.

Let $\Fpartial_t$ denote the $\sigma$-algebra generated by $\{S(u):u\leq t\}$ for any time $t\in[0,T]$. The investor must decide upon an $\Fpartial_t$-adapted allocation vector $\pi(t)\in\mathbb R^\dd$ where for each $i$ 
\[\pi^i(t)=\hbox{time-$t$ proportion of wealth in $i$\textsuperscript{th} ETF}\ .\]
Clearly $\Fpartial_t\subset\F_t$, and in particular $Y(t)$ is not observable given $\Fpartial_t$. Hence the investor will need to filter $Y(t)$ given $\Fpartial_t$, and then use this filter to make an optimal investment decision. For a given strategy $\pi$ the investor's wealth is the process $\Xpi\in C([0,T];\mathbb R^1)$ that is a semi-martingale with
\begin{align}
\label{eq:dXintro}
\frac{d\Xpi(t)}{\Xpi(t)} &=rdt+ \sum_{i=1}^\dd  \pi^i(t)\left(\frac{dS^i(t)}{S^i(t)}-rdt\right)\ ,
\end{align}
where $\pi$ is considered admissible if it is \textbf{S-integrable,} i.e., 
\[\sum_{i=1}^\dd  \int_0^T\left|\pi^i(t)\Xpi(t)\right|^2dt<\infty~~~~\hbox{almost surely,}\]
(see \cite{karatzas2007,KS99}). The investor has a concave utility function $U(x)$ and finds an optimal $\pi$ by solving for her optimal value function,
\[V(t,x)=\sup_\pi\E\left[U(X^\pi(T))\Big|\Fpartial_t\vee\{X^\pi(t)=x\}\right]\ ,\]
where the supremum is taken over all $\Fpartial_t$-adapted $\pi$'s. This is a non-Markov control problem because the optimal $\pi(t)$ will depend on the entire history $\Fpartial_t$. In particular, the filter for $Y(t)$ is a non-Markov process, and as the optimal control will depend in this filter it causes the entire problem to be non-Markov.

This paper analyzes this non-Markov problem with a specialized focus on the effects of partial information. As $\Fpartial_t\subset \F_t$, the investor with only $\Fpartial_t$ is said to be \textit{partially informed,} and naturally there is a disadvantage by not having the full information of $\F_t$. In particular, all processes in \eqref{eq:dS} and \eqref{eq:dY} would be observed if the investor had the information contained in $\F_t$, in which case it stands to reason that there would be an improvement from her optimal $\Fpartial_t$-adapted value function. An investor who observes the information in $\F_t$ is said to be \textit{fully informed.} 

The partially-informed investor will compute the posterior distribution of $Y(t)$ given $\Fpartial_t$, which she could use to write her optimal strategy in \textit{feedback form}\footnote{The definition of `feedback form' is given in \cite{reneSIAM}[Chapter 2] and in \cite{bjork}[Chapter 19].}, but such a characterization is a function of a probability measure, which means it is a function of an infinite-dimensional input. Functions with infinite-dimensional inputs are the main difficulty when solving the partial-information investment problem: the optimal control depends on a measure-valued state that cannot be differentiated in a traditional sense, and hence the problem cannot be solved using a Hamilton-Jacobi-Bellman (HJB) equation. This paper overcomes this difficulty by using backward stochastic differential equations (BSDEs). 

In solving the partial-information problem it is useful to recognize that the market is complete (see \cite{karatzasXue1991,sassHaussman2004}) and then to solve a dual problem. Indeed, under certain basic assumptions (see Condition \ref{cond:novikov}) partial information allows for asset prices to be written in complete-market form,
\[\frac{dS^i(t)}{S^i(t)}=\hat h^i(t) dt+ d\nu^i(t)\ ,\]
where $\hat h^i(t)=\E[h^i(Y(t))|\Fpartial_t]$, and $\nu^i(t)$ is the \textit{innovation} given by
\[\nu^i(t) = \int_0^t\left(\frac{dS^i(u)}{S^i(u)}-\hat h^i(u)du\right)\ ,\]
such that $\sigma^{-1}\nu(t)$ is a $\dd$-dimensional $\Fpartial_t$-adapted Brownian motion. Completeness of the market leads to considerable simplification, as there is a unique equivalent martingale measure (EMM) (i.e., an equivalent measure where $e^{-rt}S(t)$ is a local martingale), making the dual function a straight-forward conditional expectation (i.e., the dual problem's infimum over the set of EMM is trivial because the set is a singleton containing the unique EMM). As conditional expectations can be represented as solutions to BSDEs, it follows that the dual value function is the solution to a BSDE, from which the primal value function and optimal strategy can be computed as well.

In contrast to the partially-informed investor, the fully-informed does not need to filter because she observes the full $\F_t$, and therefore chooses an optimal $\F_t$-adapted $\pi$ that is obtained from a finite-state HJB equation and is written as a function of $\Xpi(t)$ and $Y(t)$. However, the full information model remains an incomplete market model because the $Y$ process cannot be bought or sold, thereby making it somewhat technical to solve the full-information HJB equation. Existence and regularity of solutions to this HJB equation can be shown when the SDE coefficients meet specific assumptions (see \cite{pham2002}). If they exist then HJB-based solutions are convenient, but it is still useful to solve the full-information problem using BSDEs because it allows for comparisons with the BSDEs from partial information. 

The investor's quantification of factor-latency is the so-called \textit{information premium}, or the expected loss in utility due to partial information. From the perspective of partial information, full information is an improvement in the sense that
\[\hspace{1cm}\E\left[V^\full(t,x,Y(t))\Big|\Fpartial_t\right]\geq V(t,x)\ ,\qquad\forall x\geq 0~\hbox{and}~\forall t\in[0,T]\ ,\]
where $V^\full(t,x,y)$ is the fully-informed investor's value function. This inequality is consistent with common-sense intuition that it is better to know the exact values of the $Y$ factors, but it is interesting to point out that this inequality shows how a complete-market investor can expect an improvement  if she were allowed to switch to an incomplete-market. It should also be pointed out that this is an expectation, and it may be possible for $V^\full(t,x,Y(t))<V(t,x)$ (see Example \ref{ex:linearExample}). Quantification of the information premium is an important question that is addressed in this paper using BSDEs.

\subsection{Literature Review}

Portfolio optimization builds on control theory and relies on concepts such as duality and concavity, which are presented in many books and papers including \cite{KS99} and \cite{rogers2002}. Initial works on consumption-portfolio choice and asset pricing under partial information include \cite{gennotte1986} who presents a separation theorem: agents first filter then optimize; \cite{detemple1986} with results on an economy with Gaussian information structures under partial information wherein a Kalman filter applies; \cite{basak2000,basak2005,detemple1994} with results on markets with multiple heterogeneous agents who update their beliefs with the arrival of financial innovations; \cite{dothanFeldman1986} shows how equilibrium interest rates with partial information have a trade-off between latent-variable persistence and the parameters controlling inference; and also \cite{feldman1989} which shows that the expectations hypothesis holds only if rates are non-stochastic. 

In \cite{karatzasXue1991} the partial information portfolio optimization problem is shown to reduce to a complete market problem, a result which is also shown in \cite{BDL2010}. Portfolio optimization with partial information and filtering is done in \cite{brendle2006,carmonaIP4,papanicolaouLee2016,wangWu2008}, but only for linear Gaussian cases. Greater generality and the role of martingales and duality theory are considered in \cite{lakner1998,pham2001}. There is also substantial literature dealing with partial information and (unobserved) regimes following finite-state Markov chains, such as \cite{bauerleRieder2005,sassHaussman2004}. The linear version of the full-information problem is addressed in \cite{kimOmberg}, with attention given to the so-called \textit{nirvana cases} where investors' expected utility is infinite. The role of forward-backward dynamics in portfolio optimization is shown in \cite{detemple1991} with a novel use of Malliavan calculus. Partial information with nonlinear filtering, BSDEs, and indifference pricing are considered in \cite{maniaSantacroce2010}, but under the assumption of a bound on $\sigma^{-1}h$, an assumption which is not made in this paper. 

Backward SDEs are covered in \cite{touzi2006,carmona2013,reneSIAM,elKarouiPengQuenez1997,kobylanski2000,pardouxRascanu,pham}, including important results for existence and uniqueness of solutions, and in \cite{elKarouiRouge2000} and \cite{imkeller2005} BSDEs are applied to problems of stochastic control for utility maximization. There is also an application of BSDEs in \cite{zhou2015} to robust utility maximization under volatility uncertainty. Path dependence and HJB equations with stochastic coefficients are considered in \cite{peng1992}, which can be compared to the BSDEs in non-Markov control problems. Another possibility is to write partial information's infinite-dimensional program using the master equation, similar to \cite{bensoussan2015}; the master equation uses G\^ateaux derivatives in an HJB-type equation with differentiation done over measure-valued inputs. Two important resources for control theory are \cite{flemingSonerBook}, and \cite{bensoussanBook} for control problems with partial information. There is also \cite{bensoussan2009} where the application of partial-information control methods are used to optimize in a market where the price on a basket of goods is noisy, and a modified Mutual Fund Theorem is obtained. Finally, a review of nonlinear filtering is found in \cite{bensoussanBook,florchingerLeGland1991} for the Zakai and Kushner-Stratonovich equations, and Monte-Carlo methods for approximation (i.e., the particle filter) are presented in \cite{cappeMoulinesRyden2005}.

\subsection{Main Results in this Paper}
This paper brings together results from filtering, duality, and BSDE theory, and uses them to solve the nonlinear partial-information optimal portfolio problem. This paper's application of BSDEs is significant because consideration is given to the case of unboundedness of the function $h(y)$ in \eqref{eq:dS}. If $h$ were bounded then the results from \cite{imkeller2005,maniaSantacroce2010} would apply. Unboundedness of $h$ is of considerable interest because it allows for extreme behavior among investors with low risk aversion, but it introduces some technical difficulty in proving existence and uniqueness of BSDE solutions; the proofs are provided in this paper and rely on some of the specific features of the partial-information finance problem. 

The BSDE approach is also used when comparing with full information and quantifying the information premium. The full-information problem is solved using BSDEs and the information premium is represented dynamically using the BSDE coefficients. The information premium is important because it gives quantitative evidence that information matters; partially-informed investors are at a disadvantage to the fully informed. 

The rest of the paper is organized as follows: Section \ref{sec:filteringAndControl} formalizes the filtering and control problem, and introduces the dual formulation; Section \ref{sec:BSDEs} shows how the problem can be solved using BSDEs when $U(x) $ is a power utility, with verification that the solution $\pi$ obtained from the BSDEs is in fact optimal --both for partial and full information; Section \ref{sec:nirvana} provides insight by considering the example of the Gaussian linear case; Section \ref{sec:nonlinearExample} gives a nonlinear example with a simulation of the BSDEs. Appendices \ref{app:partialInfo_BSDE}, \ref{app:fullInfoVerification}, and \ref{app:fullInfo_BSDE} contain technical proofs for the propositions and theorems of Section \ref{sec:BSDEs}.

\section{Filtering and Control}
\label{sec:filteringAndControl}
It will be assumed throughout that $h$ satisfies the Novikov condition

\begin{condition}[Novikov]
\label{cond:novikov}
The function $h$ is such that
\begin{equation}
\label{eq:novikov}
\E\exp\left(\frac{1}{2\epsilon}\int_0^T\left\|h(Y(t))-\mathbf r\right\|^2dt\right)<\infty\ ,
\end{equation}
where $\mathbf r=(r,r,\dots,r)^\tr\in\mathbb R^d$, $\|\cdot\|$ denotes the Euclidean norm, and $\epsilon>0$ is the bounding constant in \eqref{eq:sigmaBound}.
\end{condition}
Clearly \eqref{eq:novikov} holds for $h$ bounded, but it will be interesting to consider $h$ unbounded along with low risk aversion (these ideas will become clearer in later sections).

\subsection{Filtering}
\label{sec:filtering}
In matrix/vector form, the observations are given by 
\[\frac{dS(t)}{S(t)}= h(Y(t))dt+\sigma_\w dW(t)+\sigma_\y dB(t)\ .
\]
The filter is defined for an appropriate test function $g$ as
\[\hat  g(t) = \E\left[ g(Y(t))\Big|\Fpartial_t\right]\ ,\]
for any $g$ such that $\sup_{t\in[0,T]}\E\| g(Y(t))\|^2<\infty$. Using $\hat h(t)= \E[ h(Y(t))|\Fpartial_t]$, an important feature from filtering theory is the innovations process
\begin{equation}
\label{eq:dnu}
\nu(t) =\int_0^t \left(\frac{dS(u)}{S(u)}-\hat h(u)du\right)\ ,
\end{equation}
which is a Gaussian process, namely 
\[\zeta(t)=\sigma^{-1}\nu(t)\]
is $\Fpartial_t$-adapted $\dd$-dimensional Brownian motion. The innovations process is used to re-write equation \eqref{eq:dS} in a complete-market form,
\begin{align}
\label{eq:dScomplete}
\frac{dS(t)}{S(t)}&=\hat h(t)dt+\sigma d\zeta(t)\ .
\end{align}
This is a complete market because there is a unique equivalent martingale measure (EMM), namely $\frac{d\mathbb Q}{d\mathbb P}=Z(t)$ that is given by the Dolean-Dade exponent (due to Condition \ref{cond:novikov}),
\begin{align}
\nonumber
\frac{d\mathbb Q}{d\mathbb P}\Bigg|_{\Fpartial_t}&=Z(t) \\
\label{eq:Z}
&= \exp\left( -\frac12\int_0^t\left\|\sigma^{-1}(\hat h(u)-\mathbf{r} )\right\|^2 du-\int_0^t(\sigma^{-1}(\hat h(u)-\mathbf{r}) )^\tr d\zeta(u)\right) \ .
\end{align}

\subsection{Control for Optimal Terminal Wealth with Partial Information}
The investor chooses an $\Fpartial_t$-adapted strategy $(\pi(t))_{t\leq T}$ and has a self-financing wealth process, as given in equation \eqref{eq:dXintro}, that can be written using the innovations process,
\begin{align}
\nonumber
\frac{d\Xpi(t)}{\Xpi(t)} &=rdt+\sum_{i=1}^\dd\pi^i(t)(\hat h^i(t)-r)dt+ \sum_{i=1}^\dd \pi^i(t)d\nu^i(t)\ .
\end{align}
The investor's strategy is selected from an admissible set $\mathcal A$ given by
\begin{equation}
\label{eq:A}
 \mathcal A =\left\{\Fpartial_t\hbox{-adapted}~~\pi:[0,T]\times\Omega\rightarrow \mathbb R^\dd\ ,~~\hbox{s.t.}~~ \int_0^T\Big|\Xpi(t)\|\pi(t)\|\Big|^2dt<\infty~~\hbox{a.s.}\right\}\ ,
\end{equation}
(see \cite{karatzas2007,KS99}). For any $\pi\in \mathcal A$ the wealth process is almost surely non-negative, which rules out arbitrage from doubling strategies.

The investor has a utility function $U:\mathbb R^+\rightarrow\mathbb R^+$ that is concave and satisfies the Inada conditions:
\begin{condition}
\label{cond:inada}
The utility function $U(x)$ is continuously differentiable with $U'(x)>0$ and $U''(x)<0$ for all $x\geq 0$, and satisfies the Inada conditions, $\lim_{x\nearrow\infty}U'(x)=0$ and $\lim_{x\searrow0}U'(x)=\infty$.
\end{condition}
The utility function used throughout this paper is of constant relative risk aversion (CRRA), or simply the power utility, 
\[U(x) = \frac{1}{1-\gamma}x^{1-\gamma},\]
for $\gamma> 0$ and $\gamma\neq 1$. The investor seeks to maximize expected terminal utility of discounted wealth, with her control being selected from the class of admissible strategies given in \eqref{eq:A}. This leads the investor to find her optimal value function $V(t,x)$, which is formally written as a supremum over strategies in $\mathcal A$,
\begin{equation}
\nonumber
V(t,x) =\sup_{\pi\in\mathcal A} \E\left[U\left(\Xpi(T)\right)\Big|\Fpartial_t\vee\{\Xpi(t)=x\}\right]
\end{equation}
for all $x>0$. 

The nonlinearity introduced by the supremum can be avoided by considering the dual formulation of this problem. Let $V^\star $ denote the solution to the dual value function (see \cite{lakner1998,rogers2002,KS99}), 
\begin{align}
\label{eq:dualV}
V^\star(t,p) &= \inf_{\mathbb Q\ll \mathbb P}\E\left[ U^\star\left(pe^{-r(T-t)}\frac{d\mathbb Q}{d\mathbb P}\Big|_{\Fpartial_T}\right)\Bigg|\Fpartial_t\right]=\E\left[U^\star\left(pe^{-r(T-t)}\frac{Z(T)}{Z(t)}\right)\Bigg|\Fpartial_t\right]\ ,
\end{align}
for all $p>0$ where $\mathbb Q\ll \mathbb P$ denotes the family of equivalent probability measures under which $e^{-rt}S(t)$ is an $\Fpartial_t$ (local) martingale. Clearly, completeness of the market and the unique EMM given by equation \eqref{eq:Z} are the reason why the infimum is dropped in \eqref{eq:dualV}. The dual value function $V^\star$ is also a non-Markov process, yet it will be seen in Section \ref{sec:BSDEs} that it can be expressed using BSDEs, and hence it will be possible to obtain tractable representations of the solution to the partial-information control problem.

For continuous processes driven by Brownian motions, the general relationship between $V$ and $V^\star$ is discussed in \cite{rogers2002}, namely that
\begin{align*}
V^\star(t,p)&=\sup_{x>0}\left(V(t,x)-xp\right)\qquad\hbox{for all $p>0$}\ .
\end{align*}
In general $V(t,x)\leq \inf_p\left(V^\star(t,p)+xp\right)$ for all $x>0$, but if Condition \ref{cond:inada} holds and if $V(t,x)<\infty$ for some $x>0$, then $V$ and $V^\star$ are conjugates (i.e., they are Fenchel-Legendre transforms of one another),\footnote{Inada conditions and concavity are the main requirements for conjugacy in a complete market. In comparison, conjugacy in an incomplete market requires the additional condition of \textit{asymptotic elasticity,} $\overline\lim_{x\rightarrow\infty}xU'(x)/U(x)<1$ as shown in \cite{KS99}.} 
\begin{align*}
V(t,x)&=\inf_{p>0}\left(V^\star(t,p)+xp\right)\qquad\hbox{for all $x>0$}\ .
\end{align*}
For the power utility, first-order conditions yield the transform 
\[U^\star(p) = \frac{\gamma }{1-\gamma}p^{-\frac{1-\gamma}{\gamma}}\ , \]
and the expression in \eqref{eq:dualV} can be rewritten as
\begin{equation}
\label{eq:Vstar_rep}
V^\star(t,p)=U^\star\left(pe^{-r(T-t)}\right)\xi(t) =  \frac{\gamma }{1-\gamma}\left(pe^{-r(T-t)}\right)^{-\frac{1-\gamma}{\gamma}} \xi(t)\ ,
\end{equation}
where 
\[\xi (t) = Z(t)^{ \frac{1-\gamma}{\gamma}}\E\left[Z(T)^{-\frac{1-\gamma}{\gamma}}\Big|\Fpartial_t\right] \ .\]
For $p>0$ it is clear that $V^\star$ is a finite and strictly convex function if $|\xi(t)|<\infty $ almost surely. To ensure finiteness of $\xi(t)$, the model parameters and the risk aversion must permit the following condition:

\begin{condition}
\label{cond:mgf_Z}
The model parameters in \eqref{eq:dS}, \eqref{eq:dY}, and the power utility's risk aversion $\gamma$, are such that
\begin{align*}
\E\exp\left(\frac{2|\gamma-1||\gamma-2|}{\epsilon\gamma^2}\int_0^T\|\hat h(t)\|^2dt\right)<\infty\ ,
\end{align*}
where $\epsilon>0$ is the bounding constant given in \eqref{eq:sigmaBound}, with the derivation of this bound following from Proposition \ref{prop:sup_xiK_bound}. This bound ensures $|V^\star(t,p)|<\infty$ for all $p\in(0,\infty)$. 
\end{condition}
The set of $h$ functions for which Condition \ref{cond:mgf_Z} holds is not empty, as shown in the following remark.
\begin{remark}[Nonlinear $h$ Satisfying Condition \ref{cond:mgf_Z}]
\label{remark:nonEmptyH}
Partial information can be reduced to a condition on full information through multiple applications of Jensen's inequality,
\begin{align*}
&\mathbb E\exp\left(\frac{2|\gamma-1||\gamma-2|}{\epsilon\gamma^2}\int_0^T\|\hat h(t)\|^2dt\right)\\
&\leq \frac1T\int_0^T\mathbb E\exp\left(\frac{2T|\gamma-1||\gamma-2|}{\epsilon\gamma^2}\| h(Y(t))\|^2\right)dt\ .
\end{align*}
Hence, Condition 2.3 is satisfied for any $h$ such that 
\[\sup_{t\in[0,T]}\mathbb E\exp\left(\frac{2T|\gamma-1||\gamma-2|}{\epsilon\gamma^2}\| h(Y(t))\|^2\right)dt<\infty\ .\] Certainly this includes bounded nonlinear functions. An explicit example in one dimension involves $Y$ being a Cox-Ingersoll-Ross (CIR) process,
\[dY(t) = \kappa(\bar Y-Y(t))dt + a\sqrt{Y(t)}dB(t)\]
where $\kappa>0$, $\bar Y>0$, $0<a^2\leq 2\bar Y\kappa$, and $h(y) = \sqrt  y$, with a sufficient condition for Condition \ref{cond:mgf_Z} being $\frac{2T|\gamma-1||\gamma-2|}{\epsilon\gamma^2}<\frac{2\kappa}{a^2}$. Note that this example does not have the condition of $\inf_{y\in\mathbb R^{\qq}}aa^\tr(y)>0$, but this does not pose an issue because the SDE for $Y(t)$ is well defined for $a^2\leq 2\bar Y\kappa$. Section \ref{sec:nonlinearExample} will explore this example further.
\end{remark}

The need for Condition \ref{cond:mgf_Z} is seen in the proof of Proposition \ref{prop:sup_xiK_bound} in Appendix \ref{app:partialInfo_BSDE}, from which it is seen that
\begin{align}
\nonumber
\E\sup_{t\in[0,T]}|\xi(t)|^2&\leq\E \sup_{t\in[0,T]}\left(\frac{Z(T)}{Z(t)}\right)^{-2\frac{1-\gamma}{\gamma}}\\
\label{eq:finiteXi}
&\leq\E\exp\left(\frac{2|\gamma-1||\gamma-2|}{\epsilon\gamma^2}\int_0^T(\|\hat h(t)\|^2+\|\mathbf{r}\|^2)dt\right)<\infty\ ,
\end{align}
and it will be important in Section \ref{sec:BSDEs} to have $\E\sup_{t\in[0,T]}|\xi(t)|^2<\infty$ as part of the existence and uniqueness theory for $\xi$ to be a solution to a BSDE. Furthermore, defining $G(t)$ to be
\[G(t)= \xi(t)^{\gamma}\ ,\]
due to Condition \ref{cond:mgf_Z}, the finiteness in equation \eqref{eq:finiteXi} implies conjugacy of the Legendre transforms,
\begin{align}
\nonumber
V(t,x) &= \inf_{p>0}\left(V^{\star}(t,p)+xp\right)\\
\nonumber
&= \inf_{p>0}\left(\frac{\gamma }{1-\gamma}\left(pe^{-r(T-t)}\right)^{-\frac{1-\gamma}{\gamma}} \xi(t)+xp\right)\\
\label{eq:Vansatz}
&=U\left(xe^{r(T-t)}\right)G(t)\ .
\end{align}
Equation \eqref{eq:Vansatz} allows for optimal solutions to be obtained by solving the dual problem, with the optimal $V$ being obtained via straightforward (numerical) calculation of a Fenchel-Legendre transform on $V^\star$. Regardless of the chosen function to be computed, nonlinear filtering causes $V$ and $V^\star$ to require specially-designed backward recursive algorithms because of infinite dimensionality in the conditioning. To be more precise, the conditioning on $\Fpartial_t$ is an infinite-dimensional object and for numerical methods will need to be replaced with a finite-dimensional approximation. Sometimes there are ways to write the filtering distribution in a finite-dimensional form (e.g., using a Kalman filter \cite{brendle2006}, or finite-dimensional Markov chains \cite{bauerleRieder2005}), but general nonlinear filtering doesn't have such forms. 

Before starting the next section it is important to define the concept of \textit{investor nirvana.} Nirvana is defined in \cite{kimOmberg} as follows:

\begin{definition}[Investor Nirvana]
\label{def:nirvana}
For unbounded $U$, an investor achieves nirvana at $(t,x)$ if $V(t,x)=\infty$. For bounded $U$, nirvana is achieved when $V(t,x) = \max_xU(x)$.
\end{definition}
Nirvana can occur for a variety parameter regimes (see \cite{kimOmberg}), in particular power-utility investors with low risk aversion can achieve nirvana in the linear problem (see Section \ref{sec:nirvana}). Definition \ref{def:nirvana} will be used in Section \ref{sec:premium} when comparing the value functions of the partially-informed and the fully-informed investor. Definition \ref{def:nirvana} will also be used when considering cases where $V^\star(t,p)=\infty$ because it may be unclear whether there is nirvana or a duality gap (i.e., strict inequality such that $V(t,x)<\inf_{p>0}\left(V^\star(t,p)+xp\right)=\infty$ for some $x>0$). 
\begin{proposition}
\label{prop:nirvana}
In the partial-information case, investor nirvana cannot occur for $\gamma\in(0,1)$ if Condition \ref{cond:mgf_Z} holds, and cannot occur for $\gamma>1$ given \eqref{eq:novikov}.
\end{proposition}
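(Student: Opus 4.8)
The plan is to read everything off the conjugacy representation \eqref{eq:Vansatz}, $V(t,x)=U\!\left(xe^{r(T-t)}\right)G(t)$ with $G(t)=\xi(t)^{\gamma}$ and $\xi(t)=Z(t)^{\frac{1-\gamma}{\gamma}}\E\!\left[Z(T)^{-\frac{1-\gamma}{\gamma}}\Big|\Fpartial_t\right]$, and to translate nirvana (Definition \ref{def:nirvana}) into a statement about the sign and finiteness of $\xi(t)$. Because $xe^{r(T-t)}\in(0,\infty)$ the factor $U\!\left(xe^{r(T-t)}\right)$ is finite and its sign is fixed by the regime, so the entire question collapses to controlling $\xi(t)$. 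The two regimes behave differently only because the power $-\frac{1-\gamma}{\gamma}$ acting on $Z(T)$ is negative when $\gamma\in(0,1)$ but lies in $(0,1)$ when $\gamma>1$; this single sign change is what forces Condition \ref{cond:mgf_Z} in the first case and only \eqref{eq:novikov} in the second.

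For $\gamma\in(0,1)$ we have $1-\gamma>0$, so $U$ is positive and unbounded above and nirvana means $V(t,x)=\infty$. Since $U\!\left(xe^{r(T-t)}\right)$ is finite and positive, this can happen only if $G(t)=\xi(t)^{\gamma}=\infty$, i.e.\ only if $\xi(t)=\infty$. I would then invoke the estimate \eqref{eq:finiteXi}, whose final bound is finite exactly under Condition \ref{cond:mgf_Z}, to conclude $\E\sup_{t\in[0,T]}|\xi(t)|^{2}<\infty$. This integrability gives $\sup_{t\in[0,T]}\xi(t)<\infty$ almost surely, hence $\xi(t)<\infty$ and $V(t,x)<\infty$ for every $(t,x)$, so nirvana is impossible.

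For $\gamma>1$ we have $1-\gamma<0$, so $U$ is negative with $\sup_x U(x)=0$, and nirvana means $V(t,x)=0$; since $U\!\left(xe^{r(T-t)}\right)<0$ this is equivalent to $G(t)=\xi(t)^{\gamma}=0$, i.e.\ $\xi(t)=0$. I would rule this out by positivity: the Dol\'eans--Dade exponential \eqref{eq:Z} satisfies $Z(t)>0$ and $Z(T)>0$ almost surely, so $Z(t)^{\frac{1-\gamma}{\gamma}}>0$ and $\E\!\left[Z(T)^{-\frac{1-\gamma}{\gamma}}\Big|\Fpartial_t\right]>0$, giving $\xi(t)>0$ and therefore $V(t,x)<0=\max_x U(x)$. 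To make the representation \eqref{eq:Vansatz} legitimate in this regime I would also establish finiteness of $\xi(t)$: with $\alpha:=-\frac{1-\gamma}{\gamma}=\frac{\gamma-1}{\gamma}\in(0,1)$ the map $z\mapsto z^{\alpha}$ is concave, so conditional Jensen combined with the martingale identity $\E\!\left[Z(T)\Big|\Fpartial_t\right]=Z(t)$ yields $\E\!\left[Z(T)^{\alpha}\Big|\Fpartial_t\right]\le Z(t)^{\alpha}$ and hence $0<\xi(t)\le1$. Thus $V(t,x)\in\big[U\!\left(xe^{r(T-t)}\right),0\big)$ sits strictly below the utility supremum, and nirvana cannot occur.

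I expect the only delicate step to be the martingale identity used in the $\gamma>1$ case: one must verify that \eqref{eq:novikov} upgrades $Z$ from a local to a true $\mathbb P$-martingale, so that $Z(T)$ is a genuine density and $\E[Z(T)\mid\Fpartial_t]=Z(t)$. This follows from the bound $\|\sigma^{-1}(\hat h(u)-\mathbf r)\|^{2}\le\frac{1}{\epsilon}\|\hat h(u)-\mathbf r\|^{2}\le\frac{1}{\epsilon}\E\!\left[\|h(Y(u))-\mathbf r\|^{2}\Big|\Fpartial_u\right]$ (using $\sigma\sigma^\tr\ge\epsilon$ and conditional Jensen) together with a Novikov argument analogous to Remark \ref{remark:nonEmptyH}, after which the conclusion follows in each regime from the sign and boundedness of the power utility together with, respectively, the bound \eqref{eq:finiteXi} and the conditional-Jensen estimate above.
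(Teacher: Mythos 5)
Your proof is correct and follows essentially the same route as the paper's: for $\gamma\in(0,1)$ both arguments deduce $\xi(t)<\infty$ almost surely from the bound \eqref{eq:finiteXi} under Condition \ref{cond:mgf_Z} and conclude via the duality upper bound, while for $\gamma>1$ both reduce nirvana to $\xi(t)=0$ and rule it out by strict positivity of $Z(T)/Z(t)$, which is exactly what \eqref{eq:novikov} guarantees. Your extra conditional-Jensen step giving $\xi(t)\le 1$ (and the attendant upgrade of $Z$ to a true martingale) is not needed for the conclusion --- the paper obtains $0\le\xi(t)\le1$ directly from the weak duality inequality --- and note that to avoid circularity in the $\gamma\in(0,1)$ case you should invoke only the one-sided bound $V(t,x)\le\inf_{p>0}\left(V^\star(t,p)+xp\right)$ rather than full conjugacy, since conjugacy presupposes $V(t,x)<\infty$.
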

\begin{proof} 
For $\gamma>1$ it follows from equations \eqref{eq:Vstar_rep} and \eqref{eq:Vansatz} that 
\[U\left(xe^{r(T-t)}\right)\leq V(t,x)\leq \inf_p\left(V^\star(t,p)+xp\right)=U\left(xe^{r(T-t)}\right) \xi(t)^\gamma\leq0\ ,\]
for all $x\in(0,\infty)$, implying that $0\leq \xi(t)\leq 1$. From equation \eqref{eq:novikov} it follows that 
\[\mathbb P\Big(\inf_{t\in[0,T]}\log(Z(T)/ Z(t))=-\infty\Big)=0\ ,\]
so that $\mathbb P\left((Z(T)/Z(t))^{-\frac{1-\gamma}{\gamma}}>0|\Fpartial_t\right)>0$ almost surely. This implies 
\[\xi(t)=\E\left[\left(\frac{Z(T)}{Z(t)}\right)^{-\frac{1-\gamma}{\gamma}}\Big|\Fpartial_t\right] >0\]
almost surely.

For $\gamma\in(0,1)$ with Condition \ref{cond:mgf_Z} not being violated, it follows that equation \eqref{eq:finiteXi} holds, and so $\xi(t)<\infty$ almost surely for all $t\in[0,T]$. Hence, $V(t,x)\leq \inf_p\left(V^\star(t,p)+xp\right)<\infty$ for all $x\in(0,\infty)$.
\end{proof}

\begin{remark}[Other Utility Functions]
\label{remark:otherUtilities}
This paper considers the problem only for power utility function. However, for exponential utility there should be results similar to power utility with $\gamma>1$, although there may be some technical difficulties in adapting the Inada conditions and wealth process to the entire real line. Log utility is a simple case that does not require BSDEs, as the optimal solution is simply the myopic strategy (see \cite{GKSW2014,lakner1998}). 
\end{remark}

\section{Solutions Using Backward Stochastic Differential Equations (BSDEs)}
\label{sec:BSDEs}
The partial-information dual function $V^\star(t,p)$ can be obtained by solving a BSDE. Solutions to BSDEs are constructed in the following function spaces,
\begin{align}
\nonumber
\Pspace_\dd&=\Big\{\hbox{the set of $\dd$-dimensional $\Fpartial_t$-adapted measurable processes on $\Omega\times[0,T]$}\Big\}\\
\nonumber
\Hspace_T^2(\Pspace_\dd)&=\left\{y\in \Pspace_\dd~~\hbox{s.t.}~~\E\int_0^T\|y(t)\|^2dt<\infty \right\}\\
\nonumber
\Sspace_T^2(\Pspace_\dd)&=\left\{y\in \Pspace_\dd\cap C([0,T];\mathbb R^\dd)~~\hbox{s.t.}~~\E\sup_{t\in[0,T]}\|y(t)\|^2<\infty \right\}\\
\label{eq:spaces}
\Sspace_T^\infty(\Pspace_\dd)&=\left\{y\in \Pspace_\dd\cap C([0,T];\mathbb R^\dd)~~\hbox{ s.t.}~~\sup_{t\in[0,T]}\|y(t)\|<\infty~~\hbox{a.s.}\right\}\ .
\end{align}
This section has the derivation of the BSDE for $V^\star(t,p)$ given by \eqref{eq:Vstar_rep}, and will give the conditions for existence and uniqueness. 

\subsection{The Partial-Information Value Function}
\label{sec:partialInfo_BSDE}
Define the martingale 
\[M(t)= \E\left[Z(T)^{-\frac{1-\gamma}{\gamma}}\Big|\Fpartial_t\right]\qquad\qquad\qquad\hbox{for }0\leq t\leq T\ ,\]
so that 
\[\xi(t) = Z(t)^{ \frac{1-\gamma}{\gamma}}M(t)\ .\]
Condition \ref{cond:mgf_Z} ensures $M(t)$ is square integrable, $\E M(t)^2<\E M(T)^2=\E Z(T)^{-2\frac{1-\gamma}{\gamma}}<\infty$, and allows for a unique representation of $M(t)$ as
\begin{align}
\label{eq:M-martingaleRep}
M(t) &= \E \left[Z(T)^{-\frac{1-\gamma}{\gamma}}\right]+ \sum_{i=1}^{\mm}\int_0^tM(u)\theta^i(u)d\zeta^i(u)\qquad\qquad\hbox{for }0\leq t\leq T\ ,
\end{align}
where $\theta(t)$ is the unique $\Fpartial_t$-adapted process with $\E\int_0^TM(u)^2\|\theta(u)\|^2du<\infty$ (see \cite{BDL2010}). In fact, $\theta$ is square-integrable by itself, $\theta\in \Hspace_T^2(\Pspace_\dd)$ (see Proposition \ref{prop:thetaIntegrability} in Appendix \ref{app:partialInfo_BSDE}). The representation in \eqref{eq:M-martingaleRep} should not be confused with the standard martingale representation theorem because the filtration generated by $\zeta$ may be smaller than $\Fpartial_t$.

Using the representation of \eqref{eq:M-martingaleRep}, the dual value function $V^\star(t,p) $ is given by the ansatz \eqref{eq:Vstar_rep} and the pair $\left( \xi,\alpha\right)\in \Sspace_T^2(\Pspace_1)\times\Hspace_T^2(\Pspace_\dd)$ that solves a BSDE.

\begin{theorem}
\label{thm:partialInfo_BSDE}
Assume Condition \ref{cond:mgf_Z}. The process $ \xi(t)$ in the representation of $V^\star$ in \eqref{eq:Vstar_rep} is given by the unique pair $\left( \xi,\alpha\right)\in\Sspace_T^2(\Pspace_1)\times\Hspace_T^2(\Pspace_\dd)$ that solves the BSDE,
\begin{align}
\label{eq:partialInfo_BSDE}
-d \xi(t) &=\beta(t,\alpha(t),\xi(t))dt-\sum_{i=1}^{\mm}\alpha^i(t)d\zeta^i(t)\ ,\\
\nonumber
 \xi(T)&=1\ ,
\end{align}
where 
\begin{align*}
\beta(t,\alpha(t),\xi(t))&=\frac{1-\gamma}{\gamma}\sum_{i=1}^{\mm}\left(\sigma^{-1}(\hat h(t)-\mathbf{r})\right)^i\alpha^i(t)+\frac12\frac{1-\gamma}{\gamma^2}\left\|\sigma^{-1}(\hat h(t)-\mathbf{r})\right\|^2\xi(t)\\ 
&=\frac{1-\gamma}{2}\left\|\sigma^{-1}\frac{\hat h(t)-\mathbf{r}}{\gamma}+\frac{\alpha(t)}{\xi(t)}\right\|^2\xi(t)-\frac{1-\gamma}{2|\xi(t)|}\|\alpha(t)\|^2\ .
\end{align*}
\end{theorem}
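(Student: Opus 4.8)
The plan is to establish existence by constructing the solution explicitly from the already-defined process $\xi(t)=Z(t)^{(1-\gamma)/\gamma}M(t)$, then to read off $\alpha$ and verify both forms of the generator $\beta$ by Itô's formula. Abbreviate $\rho=\frac{1-\gamma}{\gamma}$ and let $\lambda(t)=\sigma^{-1}(\hat h(t)-\mathbf{r})$ denote the partial-information market price of risk, so that equation \eqref{eq:Z} says $Z$ is the stochastic exponential with $dZ(t)=-Z(t)\lambda(t)^\tr d\zeta(t)$, while the representation \eqref{eq:M-martingaleRep} reads $dM(t)=M(t)\theta(t)^\tr d\zeta(t)$. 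The terminal condition is then immediate: since $M(T)=Z(T)^{-\rho}$ (being $\Fpartial_T$-measurable), we get $\xi(T)=Z(T)^{\rho}Z(T)^{-\rho}=1$.

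First I would apply Itô's formula to $z\mapsto z^{\rho}$ along $Z$, using $d\langle Z\rangle_t=Z(t)^2\|\lambda(t)\|^2dt$, to obtain $d(Z^{\rho})=Z^{\rho}\bigl(-\rho\,\lambda^\tr d\zeta+\tfrac12\rho(\rho-1)\|\lambda\|^2dt\bigr)$. The product rule for $\xi=Z^{\rho}M$, together with the cross-variation $d\langle Z^{\rho},M\rangle_t=-\rho\,\xi(t)\lambda(t)^\tr\theta(t)\,dt$, then yields
\[
d\xi=\xi\bigl(\theta-\rho\lambda\bigr)^\tr d\zeta+\xi\Bigl(\tfrac12\rho(\rho-1)\|\lambda\|^2-\rho\,\lambda^\tr\theta\Bigr)dt.
\]
Defining $\alpha(t):=\xi(t)\bigl(\theta(t)-\rho\lambda(t)\bigr)$ identifies the diffusion coefficient. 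Substituting $\theta=\alpha/\xi+\rho\lambda$ into the drift and using the identity $\rho(\rho+1)=\frac{1-\gamma}{\gamma^2}$ collapses the drift to $-\beta$ with $\beta=\rho\,\lambda^\tr\alpha+\tfrac12\rho(\rho+1)\|\lambda\|^2\xi$, which is exactly the first displayed form. The second form is pure algebra: completing the square in $\|\tfrac{\lambda}{\gamma}+\tfrac{\alpha}{\xi}\|^2$ reproduces the stated expression, the two $\|\alpha\|^2$ terms cancelling because $\xi>0$ (as the conditional expectation of the strictly positive variable $(Z(T)/Z(t))^{-\rho}$; cf.\ the proof of Proposition \ref{prop:nirvana}), so that $|\xi|=\xi$.

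It then remains to check $(\xi,\alpha)\in\Sspace_T^2(\Pspace_1)\times\Hspace_T^2(\Pspace_\dd)$. The bound $\E\sup_{t}|\xi(t)|^2<\infty$ is exactly \eqref{eq:finiteXi} under Condition \ref{cond:mgf_Z}, giving $\xi\in\Sspace_T^2(\Pspace_1)$; for $\alpha$ I would combine $\theta\in\Hspace_T^2(\Pspace_\dd)$ (Proposition \ref{prop:thetaIntegrability}) with the elementary estimate $\|\alpha\|^2\le 2\xi^2\|\theta\|^2+2\rho^2\xi^2\|\lambda\|^2$ and control $\E\int_0^T\xi^2\|\lambda\|^2dt$ through the exponential moment of Condition \ref{cond:mgf_Z} (Proposition \ref{prop:sup_xiK_bound}).

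The \textbf{main obstacle} is uniqueness, precisely because $h$---and hence $\lambda$---is permitted to be unbounded, so that the linear generator $\beta$ (linear in $(\xi,\alpha)$ in its first form) carries the unbounded coefficient $\rho\lambda$, placing the equation outside the standard Lipschitz BSDE theory. I expect the argument to proceed via an a priori $\mathbb L^2$ estimate on the difference of two solutions, where the $\exp(\int\|\lambda\|^2)$-type integrability furnished by Condition \ref{cond:mgf_Z} (equivalently by \eqref{eq:finiteXi}) supplies the moment control that a bounded market price of risk would otherwise provide; alternatively, a Girsanov change of measure that absorbs the linear term $\rho\,\lambda^\tr\alpha$ into the drift converts the BSDE into a martingale identity whose solution is forced to coincide with the conditional expectation defining $\xi$. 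Either route reduces uniqueness to the exponential integrability already assumed, which is why Condition \ref{cond:mgf_Z} is the pivotal hypothesis in place of boundedness of $h$.
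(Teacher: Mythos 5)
Your derivation half is correct and is essentially identical to the paper's: It\^o applied to $\xi=Z^{(1-\gamma)/\gamma}M$ using \eqref{eq:M-martingaleRep}, the identification $\alpha=\xi\bigl(\theta-\tfrac{1-\gamma}{\gamma}\sigma^{-1}(\hat h-\mathbf r)\bigr)$ as in \eqref{eq:alpha}, the algebra for both forms of $\beta$ (correctly noting that the second form needs $\xi>0$), and the integrability checks via \eqref{eq:finiteXi} and Proposition \ref{prop:thetaIntegrability}. The genuine gap is uniqueness. You correctly flag it as the main obstacle --- with $h$ unbounded the generator has the stochastic, unbounded coefficient $\sigma^{-1}(\hat h-\mathbf r)$, so the standard Lipschitz theory fails --- but you then offer only two conjectural routes (``I expect the argument to proceed via \dots'', ``alternatively \dots'') and carry out neither. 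Neither sketch closes the gap as stated. The a priori $\mathbb L^2$-estimate route, applied to a stochastically Lipschitz generator, produces uniqueness only in exponentially weighted spaces (norms of the form $\E\int_0^T e^{c\int_0^t\|\sigma^{-1}(\hat h-\mathbf r)\|^2du}|\delta\xi(t)|^2dt$), whereas the theorem asserts uniqueness in the plain spaces $\Sspace_T^2(\Pspace_1)\times\Hspace_T^2(\Pspace_\dd)$; bridging the two is additional work you do not supply. The Girsanov route has a subtler defect: even granting that the density absorbing $\tfrac{1-\gamma}{\gamma}\lambda^\tr\alpha$ is a true martingale, a \emph{competing} solution $(\tilde\xi,\tilde\alpha)$ only satisfies $\tilde\alpha\in\Hspace_T^2(\Pspace_\dd)$ under $\mathbb P$; under the new measure its stochastic integral is merely a local martingale, so the ``martingale identity forcing $\tilde\xi$ to coincide with the conditional expectation'' does not follow without a uniform-integrability or localization argument --- and supplying exactly that control for unbounded $\hat h$ is the hard part of the theorem, not a consequence of Condition \ref{cond:mgf_Z} by itself. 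You also never address uniqueness of the second component $\alpha$ at all.

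For comparison, the paper closes the gap by truncation: replace $\hat h$ by $\hat h_K$ (capped at norm $K$) to get the BSDE \eqref{eq:partialInfo_boundedBSDE} with a uniformly Lipschitz generator and hence a unique solution $(\xi_K,\alpha_K)$; show that any solution of \eqref{eq:partialInfo_BSDE} agrees pathwise with $(\xi_K,\alpha_K)$ on the event $\{\tau_K\geq T\}$, where $\tau_K=\inf\{t:\|\hat h(t)\|\geq K\}$ (Proposition \ref{prop:xiAnd_xim_paths}), and establish the uniform bound $\sup_{K>0}\E\sup_{t\in[0,T]}|\xi_K(t)|^2<\infty$ from Condition \ref{cond:mgf_Z} (Proposition \ref{prop:sup_xiK_bound}); then Cauchy--Schwarz together with $\mathbb P(\tau_K<T)\to0$ gives $\E\sup_{t\in[0,T]}|\xi(t)-\xi_K(t)|\to0$ for every solution $\xi$, so all solutions coincide with the common limit. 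Uniqueness of $\alpha$ is then obtained not from an estimate but from the representation theorem: given $\tilde\xi=\xi$, set $\tilde M=Z^{-\frac{1-\gamma}{\gamma}}\tilde\xi=M$, compute $d\tilde M=\tilde M\bigl(\tilde\alpha/\tilde\xi+\tfrac{1-\gamma}{\gamma}\sigma^{-1}(\hat h-\mathbf r)\bigr)^\tr d\zeta$, and invoke uniqueness of the integrand $\theta$ in \eqref{eq:M-martingaleRep} to conclude $\tilde\alpha=\alpha$. Some argument of this type (truncation plus the martingale-representation step, or an equivalent) is what must replace your final paragraph for the proof to be complete.
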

Some remarks are in order before starting the proof of Theorem \ref{thm:partialInfo_BSDE}.
\begin{remark}[Existence of Solutions to \eqref{eq:partialInfo_BSDE}]
\label{remark:partialInfoBSDE}
It should be noted that existence of a solution to \eqref{eq:partialInfo_BSDE} is due to Condition \ref{cond:mgf_Z}, as it allows for the martingale representation in \eqref{eq:M-martingaleRep}, from which a solution is constructed in terms of $\theta$ and $\hat h$,
\begin{align}
\label{eq:G}
\xi(t)&=M(0)\exp\left(-\int_0^t\left(\frac{\beta(u,\alpha(u),\xi(u))}{\xi(u)}+\frac12\left\|\frac{\alpha(u)}{\xi(u)}\right\|^2\right)du+\int_0^t\frac{\alpha(u)^\tr}{\xi(u)} d\zeta(u)\right)\\
\label{eq:alpha}
\frac{\alpha(t)}{\xi(t)}&=\theta(t)-\frac{1-\gamma}{\gamma}\sigma^{-1}\Big(\hat h(t)-\mathbf{r}\Big)\ ,
\end{align}
where it can be checked that $\xi(t)=Z(t)^{ \frac{1-\gamma}{\gamma}}M(t)$ and $\alpha(t)$ is the diffusion term from the It\^o differential of $d\left(Z(t)^{ \frac{1-\gamma}{\gamma}}M(t)\right)$, and hence it follows from Condition \ref{cond:mgf_Z} that $(\xi,\alpha)\in\Sspace_T^2(\Pspace_1)\times \Hspace_T^2(\Pspace_\dd)$. However, it should also be pointed out that $\theta$ is not easily obtained from the martingale representation theorem, but rather is found by solving the BSDE. On the other hand, BSDEs have explicit solution in very few cases, and so numerical methods should be used to find $(\xi,\alpha)$ and $\theta$.
\end{remark}

\begin{remark}[Uniqueness of Solutions to \eqref{eq:partialInfo_BSDE}]
Formulas \eqref{eq:G} and \eqref{eq:alpha} show the existence of a solution to equation \eqref{eq:partialInfo_BSDE} when Condition \ref{cond:mgf_Z} holds. If the function $h$ is bounded, then the coefficient $\beta$ is uniformly Lipschitz and uniqueness follows from an application of the existing theory (see \cite{reneSIAM,elKarouiPengQuenez1997,pham}). The proof for $h$ unbounded uses a truncation argument to show that solutions are a unique limit from a sequence of bounded problems (see Propositions \ref{prop:xiAnd_xim_paths} and \ref{prop:sup_xiK_bound}).
\end{remark}

\begin{remark}
Condition \ref{cond:mgf_Z} may be violated for $\gamma$ near zero, in which case formulas \eqref{eq:G} and \eqref{eq:alpha} do not provide a solution and there could be investor nirvana. It follows from \eqref{eq:G} that in terms of $\alpha(t)$, investor nirvana means 
\begin{align*}
&\mathbb P( \log\xi(t)=\infty)\\
&=\mathbb P\left( -\int_0^t\left(\frac{\beta(u,\alpha(u),\xi(u))}{\xi(u)}+\frac12\left\|\frac{\alpha(u)}{\xi(u)}\right\|^2\right)du+\int_0^t\frac{\alpha(u)^\tr}{\xi(u)} d\zeta(u)=\infty\right)>0\ ,
\end{align*}
 for some $t\in[0,T)$, which is certainly not the case for any $\theta\in \Hspace_T^2(\Pspace_\dd)$.
\end{remark}

\begin{proof}[Proof of Theorem \ref{thm:partialInfo_BSDE}]
The martingale representation in \eqref{eq:M-martingaleRep} is used to write a forward SDE

\begin{align*}
&d \xi(t) \\
&= M(t)d\left( Z(t)^{\frac{1-\gamma}{\gamma}}\right)+ Z(t)^{\frac{1-\gamma}{\gamma}}dM(t)+dM(t)\cdot d\left( Z(t)^{\frac{1-\gamma}{\gamma}}\right)\\
&=Z(t)^{\frac{1-\gamma}{\gamma}}M(t)\sum_{i=1}^{\mm} \left(-\frac{1-\gamma}{\gamma}\left(\sigma^{-1}(\hat h(t)-\mathbf{r})\right)^i+\theta^i(t)\right)d\zeta^i(t)\\
&\hspace{1cm}-\frac{1-\gamma}{\gamma}Z(t)^{\frac{1-\gamma}{\gamma}}M(t)\sum_{i=1}^{\mm}\left(\left(\sigma^{-1}(\hat h(t)-\mathbf{r})\right)^i\theta^i(t)\right)dt\\
&\hspace{2cm}+\frac12\frac{(1-\gamma)(1-2\gamma)}{\gamma^2}Z(t)^{\frac{1-\gamma}{\gamma}}M(t)\left\|\sigma^{-1}(\hat h(t)-\mathbf{r})\right\|^2dt\\
&=-\xi(t)\sum_{i=1}^{\mm}\underbrace{\left(\frac{1-\gamma}{\gamma}\left(\sigma^{-1}(\hat h(t)-\mathbf{r})\right)^i-\theta^i(t)\right)}_{=-\alpha^i(t)}d\zeta^i(t)\\
&-\underbrace{\frac{1-\gamma}{\gamma}\xi(t)\left(\sum_{i=1}^{\mm}\left(\sigma^{-1}(\hat h(t)-\mathbf{r})\right)^i\theta^i(t)-\frac12\left(\frac{1-2\gamma}{\gamma}\right)\left\|\sigma^{-1}(\hat h(t)-\mathbf{r})\right\|^2\right)}_{=\beta(t,\alpha(t),\xi(t))}dt\ ,
\end{align*}
which is \eqref{eq:partialInfo_BSDE} with $\alpha(t)$ and $\beta(t,\alpha,\xi)$ given accordingly. Equation \eqref{eq:partialInfo_BSDE} has non-Lipschitz coefficients if $h$ is not bounded, and therefore uniqueness of solutions is not covered by the general theory for solutions to BSDEs given in \cite{reneSIAM,elKarouiPengQuenez1997,pham}. Instead, uniqueness is shown using a truncation argument and the probabilistic representation of $\xi$ given in \eqref{eq:Vstar_rep}.

For some positive $K<\infty$, define the truncated filter,
\[\hat h_K(t)=\left\{
\begin{array}{cl}K\frac{\hat h(t)}{\|\hat h(t)\|}\ ,&\hbox{if }\|\hat h(t)\|\geq K\\
&\\
\hat h(t),&\hbox{otherwise,}
\end{array}\right.
\]
and consider the bounded BSDE
\begin{align}
\label{eq:partialInfo_boundedBSDE}
-d \xi_K(t) &=\beta_K(t,\alpha_K(t),\xi_K(t))dt-\sum_{i=1}^{\mm}\alpha_K^i(t)d\zeta^i(t)\ ,\\
\nonumber
\xi_K(T)&=1\ ,
\end{align}
where $\beta_K$ is the same drift function from \eqref{eq:partialInfo_BSDE} but with $\hat h_K(t)$ replacing the unbounded $\hat h(t)$. This drift parameter is linear with uniform linear growth bounds,
\begin{align*}
&\left|\beta_K(t,\alpha_K(t),\xi_K(t))\right|\\
&\leq \frac{|1-\gamma|}{\gamma}\left\|\sigma^{-1}(\hat h_K(t)-\mathbf{r})\right\|\|\alpha_K(t)\| +\frac{|1-\gamma|}{2\gamma^2}\left\|\sigma^{-1}(\hat h_K(t)-\mathbf{r})\right\|^2 |\xi_K(t)|\\
&\leq C_K\left(\|\alpha_K(t)\| + |\xi_K(t)|\right)\ ,
\end{align*}
which also serves as a uniform Lipschitz constant. Therefore, equation \eqref{eq:partialInfo_boundedBSDE} fits into the framework of \cite{reneSIAM,elKarouiPengQuenez1997,pham} and has solution $(\xi_K,\alpha_K)$ that is unique in the space $\Sspace_T^2(\Pspace_1)\times\Hspace_T^2(\Pspace_\dd)$.

Now define the stopping time $\tau_K=\inf\left\{t\geq 0:\|\hat h(t)\|\geq K\right\}$, and notice that $\tau_K\nearrow\infty$ almost-surely as $K\nearrow\infty$ because $\hat h(t)$ is integrable (due to the Novikov Condition in \eqref{eq:novikov}). Then using the fact that $\left(|\xi(t)-\xi_K(t)|\right)\indicator{\tau_K\geq T}=0$ from Proposition \ref{prop:xiAnd_xim_paths}, and also using the bound $\sup_{K>0}\E\sup_{t\in[0,T]}|\xi_K(t)|^2<\infty$ from Proposition \ref{prop:sup_xiK_bound}, it is shown that $\xi_K$ converges in mean,
\begin{align*}
\E\sup_{t\in[0,T]}|\xi(t)-\xi_K(t)|&=\E\sup_{t\in[0,T]}|\xi(t)-\xi_K(t)|\indicator{\tau_K<T}\\
&\leq \E\sup_{t\in[0,T]}\left(|\xi(t)|+|\xi_K(t)|\right)\indicator{\tau_K<T}\\
&\leq \left(\left(\E\sup_{t\in[0,T]}(|\xi(t)|+|\xi_K(t)|)^2\right)\E\indicator{\tau_K<T}\right)^{1/2}\\
&\leq \left(2\left(\E\sup_{t\in[0,T]}|\xi(t)|^2+\sup_{K>0}\E\sup_{t\in[0,T]}|\xi_K(t)|^2\right)\E\indicator{\tau_K<T}\right)^{1/2}\\
&\rightarrow 0\qquad\hbox{as }K\rightarrow \infty\ .
\end{align*}
This shows that $\xi_K(t)$ converges to a solution of \eqref{eq:partialInfo_BSDE}. Moreover, this $\xi$ is unique, because if there is another solution $(\tilde \xi,\tilde\alpha)\in\Sspace_T^2(\Pspace_1)\times\Hspace_T^2(\Pspace_\dd)$ solving \eqref{eq:partialInfo_BSDE}, then $\E\sup_{t\in[0,T]}|\tilde \xi(t)-\xi(t)|\leq \E\sup_{t\in[0,T]}|\tilde \xi(t)-\xi_K(t)|+\E\sup_{t\in[0,T]}|\xi(t)-\xi_K(t)|\rightarrow 0$ as $K\rightarrow\infty$, which shows that $\tilde \xi=\xi$ almost surely. 

Finally, uniqueness of $\alpha\in \Hspace_T^2(\Pspace_\dd)$ is shown by contradiction. Recall the formula $\alpha(t)=\xi(t)\left(\theta(t)-\frac{1-\gamma}{\gamma}\left(\sigma^{-1}(\hat h(t)-\mathbf{r})\right)\right)$ from \eqref{eq:alpha}, and suppose \eqref{eq:partialInfo_BSDE} has another solution with $\tilde\alpha\in \Hspace_T^2(\Pspace_\dd)$ such that $\tilde\alpha\neq \alpha$. Uniqueness of $\xi$ was already shown, so it must be that 
\[\tilde \xi(t) = M(0)\exp\left(-\int_0^t\left(\frac{\beta(u,\tilde\alpha(u),\tilde\xi(u))}{\tilde\xi(u)}+\frac12\left\|\frac{\tilde\alpha(u)}{\tilde\xi(u)}\right\|^2\right)du+\int_0^t\frac{\tilde\alpha(u)^\tr }{\tilde\xi(u)}d\zeta(u)\right)=\xi(t)\ ,\]
almost surely for all $t\in[0,T]$. Moreover, there is the process 
\[\tilde M(t) \triangleq Z(t)^{-\frac{1-\gamma}{\gamma}} \tilde \xi(t)=Z(t)^{-\frac{1-\gamma}{\gamma}} \xi(t)=M(t)\ .\]
Then from It\^o's lemma,
\[d\tilde M(t) = \tilde M(t)\left(\frac{\tilde\alpha(t)}{\tilde\xi(t)}+\frac{1-\gamma}{\gamma}\left(\sigma^{-1}(\hat h(t)-\mathbf{r})\right)\right)^\tr d\zeta(t)=M(t)\theta(t)^\tr d\zeta(t)=dM(t)\ ,\]
but $\theta$ is the unique martingale representation for $M(t)$ in the space $\Hspace_T^2(\Pspace_\dd)$ (see Proposition \ref{prop:thetaIntegrability} for proof that any $\theta$ is in $\Hspace_T^2(\Pspace_\dd)$), and so 
\begin{align*}
\tilde\alpha(t) &=\tilde\xi(t)\left( \theta(t)-\frac{1-\gamma}{\gamma}\left(\sigma^{-1}(\hat h(t)-\mathbf{r})\right)\right)\\
&=\xi(t)\left( \theta(t)-\frac{1-\gamma}{\gamma}\left(\sigma^{-1}(\hat h(t)-\mathbf{r})\right)\right)\\
&=\alpha(t)\ ,
\end{align*}
almost-surely for all $t\in[0,T]$.
\end{proof}

\subsection{The Partial-Information Optimal Strategy}
\label{sec:optimalStrategy}

Let $\pi^*$ denote the optimal strategy. From equation \eqref{eq:Vansatz}
\[V(t,x)=\E\left[U\left(X^{\pi^*}(T)\right)\Big|\Fpartial_t\vee\left\{X^{\pi^*}(t)=x\right\}\right]= U\left(xe^{r(T-t)}\right)G(t)\ ,\] 
and the process $V(t,X^{\pi^*}(t))$ is a true martingale. For any strategy $\pi\in\mathcal A$ the process $V(t,\Xpi(t))$ is a supermartingale, for which an SDE can be computed and the optimal strategy chosen so that the SDE has zero drift. This approach to finding the optimal $\pi^*$ yields the same optimum as found in \cite{elKarouiRouge2000} and \cite{imkeller2005}, and is the method used to prove the following result,
\begin{theorem}
Let $\Sigma=\sigma\sigma^\tr$. The optimal strategy is
\begin{equation}
\label{eq:optimalPi}
\pi^*(t) =\Sigma^{-1}\frac{\hat h(t)-\mathbf{r}}{\gamma}+(\sigma^{-1})^\tr\frac{\alpha(t)}{\xi(t)}\ ,
\end{equation}
where $\Sigma^{-1}\frac{\hat h(t)-\mathbf{r}}{\gamma}$ is the so-called \textbf{myopic} strategy and $(\sigma^{-1})^\tr\alpha(t)/\xi(t)$ is a dynamic hedging component due to stochasticity in the drift (see \cite{detempleGarcia2003,merton1971}).
\end{theorem}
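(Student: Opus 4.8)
The plan is to prove optimality via the martingale optimality principle, exactly as signposted before the statement. I take the ansatz $V(t,x)=U(xe^{r(T-t)})\xi(t)^{\gamma}$ from \eqref{eq:Vansatz} and show that the candidate $\pi^*$ in \eqref{eq:optimalPi} makes $V(t,X^{\pi^*}(t))$ a true martingale, while every admissible $\pi\in\mathcal A$ makes $V(t,\Xpi(t))$ a supermartingale. Since $\xi(T)=1$ yields the terminal identity $V(T,x)=U(x)$, the supermartingale property gives $\E[U(\Xpi(T))\mid\Fpartial_t]\le V(t,x)$ for every admissible $\pi$ with $\Xpi(t)=x$, and the martingale property turns this into an equality for $\pi^*$; together these identify $V$ as the value function and $\pi^*$ as its maximizer.

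First I would compute the drift of $V(t,\Xpi(t))$. Writing $F(t,x,\eta)=\frac{1}{1-\gamma}e^{r(1-\gamma)(T-t)}x^{1-\gamma}\eta^{\gamma}$ so that $V(t,\Xpi(t))=F(t,\Xpi(t),\xi(t))$, I apply the two-dimensional It\^o formula to the pair $(\Xpi,\xi)$. The wealth diffusion is $\pi^\tr\sigma\,d\zeta$ (using $d\nu=\sigma\,d\zeta$), and the BSDE \eqref{eq:partialInfo_BSDE} supplies $d\xi=-\beta\,dt+\alpha^\tr d\zeta$, so the relevant quadratic variations are $d\langle\Xpi\rangle=(\Xpi)^2\pi^\tr\Sigma\pi\,dt$, $d\langle\xi\rangle=\|\alpha\|^2dt$, and $d\langle\Xpi,\xi\rangle=\Xpi\,\pi^\tr\sigma\alpha\,dt$. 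Collecting terms and factoring out the strictly positive quantity $e^{r(1-\gamma)(T-t)}(\Xpi)^{1-\gamma}\xi^{\gamma}$ (positive because $\Xpi>0$ for admissible strategies and $\xi>0$ as a conditional expectation of a strictly positive random variable, cf. \eqref{eq:Vstar_rep} and Proposition \ref{prop:nirvana}), the drift reduces to this factor times a bracket whose only $\pi$-dependence is the concave quadratic $\pi^\tr(\hat h-\mathbf r)-\frac{\gamma}{2}\pi^\tr\Sigma\pi+\gamma\,\xi^{-1}\pi^\tr\sigma\alpha$.

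Maximizing this quadratic is immediate since $\gamma>0$ and $\Sigma$ is positive definite: the first-order condition $\gamma\Sigma\pi=(\hat h-\mathbf r)+\gamma\sigma\alpha/\xi$ gives $\pi^*=\frac1\gamma\Sigma^{-1}(\hat h-\mathbf r)+\Sigma^{-1}\sigma\,\alpha/\xi$, and the identity $\Sigma^{-1}\sigma=(\sigma^{-1})^\tr$ (together with $\sigma^\tr\Sigma^{-1}\sigma=I$) converts this into \eqref{eq:optimalPi}. I then substitute $\pi^*$ back into the bracket and invoke the explicit form of $\beta$ from Theorem \ref{thm:partialInfo_BSDE}: the maximized quadratic equals $\frac{1}{2\gamma}\|\sigma^{-1}(\hat h-\mathbf r)\|^2+\xi^{-1}\sum_i(\sigma^{-1}(\hat h-\mathbf r))^i\alpha^i+\frac{\gamma}{2}\xi^{-2}\|\alpha\|^2$, which cancels exactly against the $\xi$-driven terms $-\frac{\gamma}{1-\gamma}\beta/\xi-\frac{\gamma}{2}\xi^{-2}\|\alpha\|^2$, leaving zero. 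Hence the drift vanishes at $\pi^*$ and, by concavity, is nonpositive for every $\pi$; this is what formally underlies the martingale/supermartingale dichotomy.

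The hard part will not be this algebra but the upgrade from \emph{local} to \emph{true} (super)martingale, which is where unboundedness of $h$ bites. I would verify that $\pi^*\in\mathcal A$, i.e. $\int_0^T\big|\Xpi(t)\,\|\pi^*(t)\|\big|^2dt<\infty$ a.s., which follows from $\hat h\in L^2$ a.s. (Novikov) together with $\alpha\in\Hspace_T^2(\Pspace_\dd)$ and $\xi>0$; and I would use the exponential-moment bound of Condition \ref{cond:mgf_Z} and the uniform estimate $\E\sup_{t}|\xi(t)|^2<\infty$ from \eqref{eq:finiteXi} to secure the integrability needed so that the local martingale $V(t,X^{\pi^*}(t))$ is genuinely a martingale and each $V(t,\Xpi(t))$ is genuinely a supermartingale (via localization combined with a uniform-integrability or Fatou argument exploiting nonnegativity of wealth). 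With these integrability facts in hand the martingale optimality principle closes the proof.
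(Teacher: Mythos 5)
Your It\^o computation, the factorization of the drift, the concave quadratic in $\pi$, and the cancellation against $\beta$ from Theorem \ref{thm:partialInfo_BSDE} all match the paper's proof, and the algebra is correct: up to the point where the drift vanishes at $\pi^*$, you and the paper are doing exactly the same thing. The gap is in what you yourself flag as ``the hard part.'' The paper does \emph{not} close the argument with a localization/uniform-integrability/Fatou estimate. Instead it substitutes $\pi^*$ into the diffusion coefficient and uses the identity $\alpha(t)/\xi(t)=\theta(t)-\tfrac{1-\gamma}{\gamma}\sigma^{-1}(\hat h(t)-\mathbf r)$ from \eqref{eq:alpha} to observe that
\begin{equation*}
(1-\gamma)\pi^*(t)^\tr\sigma+\gamma\,\frac{\alpha(t)^\tr}{\xi(t)}=\theta(t)^\tr\ ,
\end{equation*}
so that $dV(t,X^{\pi^*}(t))=V(t,X^{\pi^*}(t))\,\theta(t)^\tr d\zeta(t)$ and hence $V(t,X^{\pi^*}(t))=V(0,X^{\pi^*}(0))\,M(t)/M(0)$. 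The value process along $\pi^*$ is an exact constant multiple of the conditional-expectation martingale $M$, which is a true martingale by construction (square-integrable under Condition \ref{cond:mgf_Z}), so no integrability upgrade is needed at all.

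Your proposed substitute --- that Condition \ref{cond:mgf_Z} together with $\E\sup_{t}|\xi(t)|^2<\infty$ ``secure the integrability needed'' --- does not work as stated. Those bounds control $\xi$ only; the local martingale you must upgrade is the product $U(X^{\pi^*}(t)e^{r(T-t)})\xi(t)^\gamma$, which involves $(X^{\pi^*}(t))^{1-\gamma}$, and nothing in your argument controls moments of the wealth process. Since $X^{\pi^*}$ is a stochastic exponential whose integrand contains the unbounded $\hat h$ (through the myopic term $\Sigma^{-1}(\hat h-\mathbf r)/\gamma$), a direct UI argument would require joint exponential-moment estimates on $\hat h$ and $\alpha/\xi$, and it is not clear that the specific constant in Condition \ref{cond:mgf_Z} suffices for that --- this is exactly where unboundedness of $h$ bites, as you anticipated. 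A secondary issue: your Fatou argument ``exploiting nonnegativity of wealth'' gives the supermartingale property only for $\gamma\in(0,1)$, where $V\geq 0$; for $\gamma>1$ the process $V$ is nonpositive and Fatou runs in the wrong direction (compare the separate treatment of the case $\gamma>1$ in Appendix \ref{app:fullInfoVerification}). The fix is cheap and already in your hands: plug \eqref{eq:alpha} into the diffusion coefficient you computed, and the identification with $M$ falls out immediately.
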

\begin{proof}
Due to the properties of power utility, notice that $(1-\gamma)V(t,x)\geq 0$ for all $x\geq 0$ and all $\gamma>0$, $\gamma\neq 1$.

For any $\pi\in\mathcal A$ the SDE for $V(t,\Xpi(t))$ is
\begin{align}
\nonumber
&dV(t,\Xpi(t))\\
\nonumber
&=d\left(U(\Xpi(t)) e^{r(1-\gamma)(T-t)}\xi(t)^{\gamma}\right)\\
\nonumber
&=V(t,\Xpi(t))\Bigg((1-\gamma)\pi(t)^\tr(\hat h(t)-\mathbf{r})-\frac{\gamma(1-\gamma)\|\sigma^\tr\pi(t)\|^2}{2}+\gamma(1-\gamma)\pi(t)^\tr\sigma\frac{\alpha(t)}{\xi(t)}\\
\nonumber
&\hspace{7cm}-\left(\gamma\frac{ \beta(t,\alpha(t),\xi(t))}{\xi(t)}-\frac{\gamma(\gamma-1)}{2}\left\|\frac{\alpha(t)}{\xi(t)}\right\|^2\right)\Bigg)dt\\
\nonumber
&\hspace{6cm}+V(t,\Xpi(t))\left((1-\gamma)\pi(t)^\tr\sigma+\gamma \frac{\alpha(t)^\tr }{\xi(t)}\right)d\zeta(t)\\
\nonumber
&\leq(1-\gamma)\gamma V(t,\Xpi(t))\sup_{\pi(t)}\Bigg(\pi(t)^\tr\frac{\hat h(t)-\mathbf{r}}{\gamma}-\frac{\|\sigma^\tr\pi(t)\|^2}{2}+\pi(t)^\tr\sigma\frac{\alpha(t)}{\xi(t)}\\
\nonumber
&\hspace{7cm}-\left(\frac{1}{1-\gamma} \frac{\beta(t,\alpha(t),\xi(t))}{\xi(t)}+\frac{1}{2}\left\|\frac{\alpha(t)}{\xi(t)}\right\|^2\right)\Bigg)dt\\
\nonumber
&\hspace{6cm}+V(t,\Xpi(t))\left((1-\gamma)\pi(t)^\tr\sigma+\gamma \frac{\alpha(t)^\tr}{\xi(t)} \right)d\zeta(t)\\
\label{eq:dZeta_integral}
&\hspace{.5cm}=V(t,\Xpi(t))\left((1-\gamma)\pi(t)^\tr\sigma+\gamma \frac{\alpha(t)^\tr}{\xi(t)} \right)d\zeta(t)\ .
\end{align}
The maximized dt term is obtained by maximizing the quadratic form,
\begin{align}
\nonumber
&\pi^*(t)=\arg\max_{\pi(t)}\Bigg(-\|\sigma^\tr\pi(t)\|^2+2\left(\sigma^{-1}\frac{\hat h(t)-\mathbf{r}}{\gamma}+\frac{\alpha(t)}{\xi(t)}\right)^\tr\sigma^\tr\pi(t)\\
\label{eq:quadraticForm_partialProof}
&\hspace{8cm}-2\frac{\beta(t,\alpha(t),\xi(t))}{(1-\gamma)\xi(t)}-\left\|\frac{\alpha(t)}{\xi(t)}\right\|^2\Bigg)\ ,
\end{align}
from which first-order conditions yield $\pi^*(t)$ shown in \eqref{eq:optimalPi}. This maximizer is written in terms of $\alpha(t)$, the filter $\hat h(t)$, and the model parameters, and it is straightforward to check that the right-hand side of \eqref{eq:quadraticForm_partialProof} is equal to zero when evaluated at $\pi(t)=\pi^*(t)$ with $\beta(t,\alpha(t),\xi(t))$ given by Theorem \ref{thm:partialInfo_BSDE}. Hence, $V(t,X^{\pi^*}(t))$ is a supermartingale, and if it can be shown to be a true martingale then it is verified that $\pi^*$ is an optimal strategy (see \cite{zheng2011}).

Inserting the expression \eqref{eq:optimalPi} for $\pi^*(t)$ into \eqref{eq:dZeta_integral}, and then using expression \eqref{eq:alpha} for $\alpha(t)$ in terms of $\theta(t)$, there is the SDE 

\begin{align*}
dV(t,X^{\pi^*}(t))&=V(t,X^{\pi^*}(t))\left((1-\gamma)\pi^*(t)^\tr\sigma+\gamma \frac{\alpha(t)^\tr}{\xi(t)} \right)d\zeta(t)\\
&=V(t,X^{\pi^*}(t))\theta(t)^\tr d\zeta(t)\ ,
\end{align*}
where $\theta(t)$ is the martingale representation from \eqref{eq:M-martingaleRep}. Solving this SDE yields
\begin{align*}
V(t,X^{\pi^*}(t))& = V(0,X^{\pi^*}(0))\exp\left(-\frac12\int_0^t\|\theta(u)\|^2du+\int_0^t\theta(u)^\tr d\zeta(u)\right)\\
&=V(0,X^{\pi^*}(0))\frac{M(t)}{M(0)}\ ,
\end{align*}
which is a true martingale because $M(t)$ is a true martingale. Hence,
\begin{align*}
&\E\left[V(T,X^{\pi^*}(T))\Big|\Fpartial_t\vee \{X^{\pi^*}(t)=x\}\right]\\
&= V(t,X^{\pi^*}(t))+\underbrace{\E\left[\int_t^TV(u,X^{\pi^*}(u))\theta(u)^\tr d\zeta(u)\Big|\Fpartial_t\vee \{X^{\pi^*}(t)=x\}\right]}_{=0}\\
&=V(t,X^{\pi^*}(t))\ .
\end{align*}
This verifies that $\pi^*$ is an optimal strategy.
\end{proof}

\subsection{The Full-Information Value Function}
\label{sec:fullInformation}

Investment under `full information' means that the information in $\F_t$ is available to market participants; there are no hidden states because $(W(u),B(u))_{u\leq t}\in\F_t$. With full information the wealth process is 

\begin{align}
\nonumber
\frac{d\Xpi(t)}{\Xpi(t)}&=rdt+\sum_{i=1}^\dd\pi^i(t)(h^i(Y(t))-r)dt\\
\nonumber
&+ \sum_{i=1}^\dd\sum_{j=1}^{\mm} \pi^i(t)\sigma_\w^{ij}dW^j(t)+ \sum_{i=1}^\dd\sum_{j=1}^{\qq} \pi^i(t)\sigma_\y^{ij}dB^j(t)\ ,
\end{align} 
where $\pi$ is selected from among the set of full-information strategies
\begin{equation}
\label{eq:Afull}
 \mathcal A^{\full }=\left\{\F_t\hbox{-adapted}~~\pi:[0,T]\times\Omega\rightarrow \mathbb R^\dd~~\hbox{s.t.}~~ \int_0^T\Big|\Xpi(t)\|\pi(t)\|\Big|^2dt<\infty~~\hbox{a.s.}\right\}\ .
\end{equation}
Then the optimal investment is a Markov control problem,

\begin{align}
V^\full(t,x,y) 
\label{eq:Vfull}
&= \sup_{\pi\in\mathcal A^\full}\E\left[U\left(X(T)\right)\Big|X(t) = x,Y(t) = y\right]\ .
\end{align} 

\begin{proposition}
\label{prop:nirvanaFull}
Given \eqref{eq:novikov}, investor nirvana cannot occur in the full-information case for $\gamma>1$.
\end{proposition}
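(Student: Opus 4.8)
The plan is to establish the strict inequality $V^\full(t,x,y)<0$ for every $x>0$. For $\gamma>1$ the power utility is everywhere negative and bounded above, with $\sup_{x>0}U(x)=0$, so by the upper-bounded branch of Definition \ref{def:nirvana} nirvana would mean $V^\full(t,x,y)=0$. Since $U<0$ forces $V^\full\le 0$ for free, the whole content of the proposition is the strict gap $V^\full<0$, which I would obtain from a finite, strictly negative dual upper bound. Unlike the partial-information argument of Proposition \ref{prop:nirvana}, the full-information market is incomplete and admits many martingale measures, so there is no clean $\xi$-representation; however, an upper bound only requires weak duality against a single equivalent martingale measure (EMM), so incompleteness is not an obstacle here.

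First I would construct one EMM from the minimal market price of risk $\lambda(t)=[\sigma_\w\,|\,\sigma_\y]^\tr\Sigma^{-1}(h(Y(t))-\mathbf r)$, where $\Sigma=\sigma\sigma^\tr$ and $[\sigma_\w\,|\,\sigma_\y]$ is the $\dd\times(\mm+\qq)$ diffusion matrix of $S$. This $\lambda$ reproduces the excess returns and satisfies the norm identity $\|\lambda(t)\|^2=(h(Y(t))-\mathbf r)^\tr\Sigma^{-1}(h(Y(t))-\mathbf r)\le\tfrac1\epsilon\|h(Y(t))-\mathbf r\|^2$ by \eqref{eq:sigmaBound}. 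Hence $\E\exp\!\big(\tfrac12\int_0^T\|\lambda(t)\|^2dt\big)\le\E\exp\!\big(\tfrac1{2\epsilon}\int_0^T\|h(Y(t))-\mathbf r\|^2dt\big)<\infty$ by the Novikov condition \eqref{eq:novikov}, so the associated exponential martingale $Z^\full$ is a true $\mathbb P$-martingale and defines an EMM. This is exactly why \eqref{eq:novikov} carries the constant $\tfrac1{2\epsilon}$.

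Next I would apply weak duality against this one measure. For any $\pi\in\mathcal A^\full$ the wealth $\Xpi$ stays positive and $e^{-rt}\Xpi(t)$ is a nonnegative local martingale under the EMM, hence a supermartingale, giving the budget constraint $\E[\eta_{t,T}\Xpi(T)\,|\,\F_t]\le x$ with $\eta_{t,T}=e^{-r(T-t)}Z_{t,T}$ and $Z_{t,T}:=Z^\full(T)/Z^\full(t)$. Applying the Fenchel--Young inequality $U(w)\le U^\star(\lambda\eta_{t,T})+\lambda\eta_{t,T}w$ pointwise at $w=\Xpi(T)$, taking $\E[\,\cdot\,|\F_t]$, using the budget constraint, and then the supremum over $\pi$ yields, for every $\lambda>0$,
\[
V^\full(t,x,y)\;\le\;\E\big[U^\star(\lambda\eta_{t,T})\,\big|\,\F_t\big]+\lambda x\;=\;\frac{\gamma}{1-\gamma}\,A\,\lambda^{\frac{\gamma-1}{\gamma}}+\lambda x\;=:\;f(\lambda),
\]
where $A=\E[\eta_{t,T}^{(\gamma-1)/\gamma}\,|\,\F_t]$. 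Because the exponent $q:=(\gamma-1)/\gamma\in(0,1)$, factoring $Z_{t,T}^{q}$ as a stochastic exponential (with integrand $q\lambda$) times $\exp\!\big(\tfrac{q(q-1)}2\int\|\lambda\|^2\big)\le1$ shows $A\le e^{-rq(T-t)}<\infty$, so $f$ is finite for every $\lambda>0$; note that for $\gamma>1$ this finiteness is automatic and needs no extra integrability beyond the EMM itself.

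Finally I would minimize $f$ over $\lambda>0$. Since $\tfrac{\gamma}{1-\gamma}<0$ and $q\in(0,1)$, one finds $f'(\lambda)=x-A\lambda^{-1/\gamma}$, which is strictly negative as $\lambda\downarrow0$; thus $f$ decreases from $f(0^+)=0$ and attains a strictly negative interior minimum at $\lambda^\star=(A/x)^\gamma$. Therefore $V^\full(t,x,y)\le f(\lambda^\star)<0$ for all $x>0$, so $V^\full$ never reaches its upper bound $\sup_{x}U(x)=0$ and nirvana is impossible. The main obstacle is Step~1: verifying that the candidate density is a genuine (true-martingale) EMM, which is precisely where the Novikov condition \eqref{eq:novikov} and the bound \eqref{eq:sigmaBound} enter; once an EMM is secured, the remaining estimates are elementary, the key being that $q\in(0,1)$ keeps the dual moment $A$ finite for all $\gamma>1$.
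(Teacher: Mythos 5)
Your proposal is correct and follows essentially the same route as the paper's proof: construct the minimal-martingale-measure density via the Novikov condition \eqref{eq:novikov} together with the bound \eqref{eq:sigmaBound}, apply weak duality against that single EMM, and conclude $V^\full(t,x,y)<0$ so that nirvana in the sense of Definition \ref{def:nirvana} cannot occur for $\gamma>1$. The only differences are ones of detail---you derive the duality bound from Fenchel--Young plus the supermartingale budget constraint, verify finiteness of the dual moment $A$, and carry out the minimization over $\lambda$ explicitly, all of which the paper leaves implicit (note only that strict negativity of your $f(\lambda^\star)$ also uses $A>0$, which holds since $\eta_{t,T}$ is a.s.\ strictly positive, the point the paper emphasizes via $\mathbb P(\mathcal E(T)/\mathcal E(t)>0\,|\,Y(t)=y)>0$).
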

\begin{proof} 
The market is incomplete but the Novikov condition in \eqref{eq:novikov} means that a possible equivalent martingale measure is the one having Radon-Nikodym derivative
\begin{align}
\nonumber
&\mathcal E(t) = \exp\left( -\frac12\int_0^t\left\|\sigma^{-1}(h(Y(u))-\mathbf{r} )\right\|^2 du\right.\\
\nonumber
&\hspace{3cm}\left.-\int_0^t(h(Y(u))-\mathbf{r} )^\tr\Big((\sigma_\w^{-1})^\tr dW(u)+(\sigma_\y^{-1})^\tr dB(u)\Big)\right) \ ,
\end{align}
i.e., the minimal-entropy martingale measure. Now, it should be clear that $\mathcal E(t)$ can be non-zero, namely $\mathbb P\left(\mathcal E(T)/\mathcal E(t)>0\Big|Y(t)=y\right)>0$, and so

\[\E\left[\mathcal E(T)^{\frac{\gamma-1}{\gamma}}\Big|Y(t)=y\right]>0\ ,\]
from which it follows that the full-information value function has the following duality bound:
 \begin{align*}
 V^\full(t,x,y)&\leq \inf_p\left(\E\left[U^\star\left(pe^{-r(T-t)}\frac{\mathcal E(T)}{\mathcal E(t)}\right)\Big|Y(t)=y\right]+xp\right)\\
 & = \inf_p\left(U^\star(pe^{-r(T-t)})\E\left[\left(\frac{\mathcal E(T)}{\mathcal E(t)}\right)^{\frac{\gamma-1}{\gamma}}\Big|Y(t)=y\right]+xp\right)\\
 &<0\ .
 \end{align*}
Hence, nirvana in the sense of Definition \ref{def:nirvana} does not occur.
\end{proof}
The full-information value function satisfies a Hamilton-Jacobi-Bellman (HJB) equation,

\begin{align}
\nonumber
\left(\frac{\partial}{\partial t}+rx\frac{\partial}{\partial x}+\mathcal L\right)V^\full\hspace{9.5cm}\\
\label{eq:HJB}
+\sup_\pi\left(\frac{x^2}{2}\pi^\tr\Sigma\pi\frac{\partial^2}{\partial x^2}V^\full+x\pi^\tr( h(y)-\mathbf{r})\frac{\partial}{\partial x}V^\full+x\pi^\tr\sigma_\y a(y)^\tr\frac{\partial}{\partial x}\nabla V^\full\right)&=0\\
\nonumber
V^\full\Big|_{t=T}&=U\ ,
\end{align}
where $\Sigma=\sigma\sigma^\tr$, $\nabla $ denotes the gradient in $y$, and
\[\mathcal L=\frac12\sum_{i,j=1}^{\qq}\left(aa^\tr(y)\right)^{ij}\frac{\partial^2}{\partial y_i\partial y_j}+\sum_{i=1}^{\qq}b^i(y)\frac{\partial}{\partial y_i}\ .\]
If \eqref{eq:HJB} has a classical solution then the optimal strategy is written in feedback form,
\begin{equation}
\label{eq:piStarFull}
\pi^*(t,x,y)=-\Sigma^{-1}\Bigg(\left(h(y)-\mathbf{r}\right)\frac{\frac{\partial }{\partial x}V^\full(t,x,y)}{x\frac{\partial^2 }{\partial x^2}V^\full(t,x,y) }-\sigma_\y a(y)^\tr \frac{\frac{\partial}{\partial x}\nabla V^\full(t,x,y)}{x\frac{\partial^2 }{\partial x^2}V^\full(t,x,y) }\Bigg) \ .
\end{equation}
By Theorem 8.1 in Chapter III.8 of \cite{flemingSonerBook}, if $\pi^*$ given by \eqref{eq:piStarFull} is an admissible strategy in $\mathcal A^\full$, then strict concavity of the objective inside the supremum implies that a classical solution to \eqref{eq:HJB} will satisfy a verification lemma.

For the case of power utility there is a simplifying ansatz for the solution to equation \eqref{eq:HJB}, 

\begin{equation}
\label{eq:fullInfoAnsatz}
V^\full(t,x,y) = U\left(xe^{r(T-t)}\right) G^\full(t,y)\ ,
\end{equation}
which means $G^\full$ satisfies the equation
\begin{align}
\label{eq:HJB_G}
\left(\frac{\partial}{\partial t}+\mathcal L\right)G^\full+(1-\gamma)\max_{\pi\in\mathbb R^\dd} f\left(y,\pi,G^\full,a^\tr\nabla G^\full\right)=0&\\
\nonumber
G^\full\Big|_{t=T}=1&\ ,
\end{align}
where the objective function $f$ is strictly concave in $\pi$ for any $(y,\pi,g,\eta)\in\mathbb R^\qq\times\mathbb R^\dd\times\mathbb R^+\times\mathbb R^\qq$, as
\begin{align}
\label{eq:f_concave}
&f\left(y,\pi,g,\eta\right)=\left(-\frac\gamma2\pi^\tr\Sigma\pi+\pi^\tr(h(y)-\mathbf{r})\right)g+\pi^\tr\sigma_\y \eta\ .
\end{align}
The objective in \eqref{eq:f_concave} can be maximized with first-order conditions, where the maximizer is 
\begin{equation}
\label{eq:piStarFull_BSDE}
\pi^*(t,y) = \Sigma^{-1}\left(\frac{h(y)-\mathbf{r}}{\gamma} +\sigma_\y\frac{\eta}{\gamma g}\right)\ ,
\end{equation}
from which it is seen that the maximized objective is
\begin{align}
\nonumber
F(y,g,\eta) &= \max_{\pi\in\mathbb R^\dd}f(y,\pi,g,\eta)\\
\nonumber
&=\frac{g}{2\gamma}\left((h(y)-\mathbf{r}) +\sigma_\y\frac{ \eta}{g}\right)^\tr\Sigma^{-1}\left((h(y)-\mathbf{r}) +\sigma_\y \frac{\eta}{g}\right)\\
\label{eq:F}
&\geq 0\ .
\end{align}
If equation \eqref{eq:HJB_G} has a classical solution, then an optimal strategy is found by evaluating \eqref{eq:piStarFull_BSDE} at $(g,\eta)=(G^\full,a^\tr\nabla G^\full)$,
\[\pi^*(t,y)=\Sigma^{-1}\Bigg(\frac{h(y)-\mathbf{r}}{\gamma} +\frac{1}{\gamma G^\full(t,y)}\sigma_\y a(y)^\tr\nabla G^\full(t,y)\Bigg) \ ,\]
which can be seen as being comprised of two components: a myopic component given by the optimal from the standard Merton problem, plus a dynamic hedging term motivated by stochastic fluctuations in $Y(t)$.

\begin{remark}[Examples of Other Nonlinear HJB Equations]
Some examples in the finance literature where there occurs a nonlinear HJB equation like \eqref{eq:HJB_G} include: optimal portfolio allocation with consumption and an unhedgeable income stream \cite{duffieFlemingSonerZaripho1997}; a generalization of problem \eqref{eq:Vfull} but with scalar $Y(t)$ in \cite{stoikovZaripho2005}. Other examples include the linear case (i.e., $h(y)$ and $b(y)$ linear, $a(y)$ constant in $y$) where the solution to \eqref{eq:HJB_G} can be found with an affine ansatz (see \cite{bensoussanBook,brendle2006}); these linear models can have investor nirvana if there is low risk aversion (see \cite{kimOmberg} or Section \ref{sec:nirvana} of this paper). 
\end{remark}

Equation \eqref{eq:HJB_G} is a semi-linear PDE with uniformly elliptic operator, for which classical solutions have been shown to exist under relatively general circumstances. Existence of smooth solutions are shown \cite{pham2002}, and for scalar cases it is shown in \cite{zariphopoulou2001} that the PDE for $G^\full$ reduces to a power transform of a solution to a linear PDE. Specifically, for the case of $a(y)$ constant in $y$, \cite{pham2002} gives a sufficient condition for smooth solutions to the HJB,

\begin{condition}
\label{cond:cons_a_exist}
If the diffusion matrix $a$ in equation \eqref{eq:dY} is constant in $y$, with
\begin{align*}
&b(y)~\hbox{and }h(y)~\hbox{being $C^1$ and Lipschitz in $y$, and}\\
&\|\sigma^{-1}h(y)\|^2~\hbox{being $C^1$ and Lipschitz in $y$,}
\end{align*} 
then there exists a function $\varphi(t,y)$ differentiable in $t$ and twice differentiable in $y$ such that 
\[G^{\full}(t,y) = \exp(-\varphi(t,y))\ ,\]
i.e., there is a classical solution to equation \eqref{eq:HJB_G}, and with $|\nabla \varphi(t,y)|\leq C(1+|y|)$ for all $t\in[0,T]$ and for all $y\in\mathbb R^\qq$.
\end{condition}

\begin{remark}
For non-constant  $a(y)$, \cite{pham2002} explains how to reparameterize the SDE for $Y(t)$ so that the Condition \ref{cond:cons_a_exist} applies, namely by looking for a function $\phi(y)$ with 

\[\nabla \phi(y) = a(y)^{-1}\qquad\hbox{i.e., the inverse of matrix $a(y)$}\ ,\]
so that $ Y(t) = \phi^{-1}(\tilde  Y(t))$ with
\[d\tilde Y(t) = \tilde b(\tilde Y(t))dt +dB(t)\ ,\]
where $\tilde b(\tilde y) = \Big(\nabla\phi( y)^\top  b( y) + \frac12\hbox{trace}\left[a(y)^\top\left(\nabla\nabla^\top\phi( y)\right)a(y)\right]\Big)\Big|_{y=\phi^{-1}(\tilde y)}$. From here it must be checked that $\tilde b$ is $C^1$ and Lipschitz.
\end{remark}

The solution $G^\full$ is the value function
\begin{align*}
&G^\full(t,y)=1+(1-\gamma)\\
&{\small\times\sup_{\pi\in\mathcal A^\full}\E\left[\int_t^Tf\Big(Y(u),\pi(u),G^\full(Y(u)),a(Y(u))^\tr\nabla G^\full(u,Y(u))\Big)du\Bigg|Y(t)=y\right]}\ .
\end{align*}
As explained on page 143 in Chapter 6 of \cite{pham}, there is a nonlinear Feynman-Kac representation for $G^\full$, with $G^\full(t,Y(t))=\goodchi(t)$ where $\goodchi(t)$ solves the following BSDE,
\begin{align}
\nonumber
-d\goodchi(t) &=(1-\gamma) F\left(Y(t),\goodchi(t) , \psi(t)\right)dt-\psi(t)^\tr dB(t)\ ,\qquad\hbox{for } t\leq T\\
\label{eq:fullInfo_BSDE}
\goodchi(T)&=1\ .
\end{align}
A solution to \eqref{eq:fullInfo_BSDE} is a pair $(\goodchi,\psi)\in\Sspace_T^2(\Pspace_1^\full)\times \Hspace_T^2(\Pspace_{\qq}^\full)$ with 
\begin{align*}
\Pspace_\qq^\full&=\Big\{\hbox{the set of $\qq$-dimensional $\mathcal F_t^B$-adapted measurable processes on $\Omega\times[0,T]$}\Big\}\ ,
\end{align*}
where $\Sspace_T^2$ and $\Hspace_T^2$ are the same as those defined in \eqref{eq:spaces} except with $\Pspace_\qq^\full$. Given the solution to \eqref{eq:fullInfo_BSDE}, the optimal strategy is
\begin{equation}
\label{eq:optimalPifull}
\pi^*(t) =\pi^*(t,Y(t),\goodchi(t),\psi(t))=  \Sigma^{-1}\left(\frac{h(Y(t))-\mathbf{r}}{\gamma}+\sigma_\y\frac{\psi(t)}{\gamma\goodchi(t)}\right)\ ,
\end{equation}
which is similar to the formula in \eqref{eq:piStarFull_BSDE}, and is an admissible strategy (i.e., is S-integrable) because $\goodchi(t)>0$ a.s. by a comparison principle as explained in Theorem 6.2.2 on page 142 in Chapter 6 of \cite{pham}.

Existence of solutions to \eqref{eq:fullInfo_BSDE} are not covered by the general theory in \cite{reneSIAM,elKarouiPengQuenez1997,pham} because $F(t,y,g,p)$ does not have a uniform Lipschitz constant, and is not covered by \cite{kobylanski2000} because $F^2$ has a $g^2 $ term. However, a classical solution $G^\full$ to \eqref{eq:HJB_G} can be evaluated at $Y(t)$ to obtain the solution to the BSDE,
\begin{align}
\label{eq:fullInfo_BSDEsolution}
\goodchi(t)&=G^\full(t,Y(t))\\
\nonumber
\psi(t)&=a(Y(t))^\tr\nabla G^\full(t,Y(t))\ ,
\end{align}
provided that this solution is in $\Sspace_T^2(\Pspace_1^\full)\times \Hspace_T^2(\Pspace_{\qq}^\full)$.

\begin{proposition}
\label{prop:fullInfoBSDEcond}
Suppose Condition \ref{cond:cons_a_exist}.  If
\begin{equation}
\label{eq:polynomialMoments}
\E\exp\left(\frac{2\delta_1|\gamma-1||\gamma-2|}{\epsilon\gamma^2}\int_0^T\|h(Y(t))\|^2dt\right)<\infty\ ,
~~~~\hbox{and}~~~~\E\int_0^T\|Y(t)\|^{2\delta_2}dt<\infty\ ,T
\end{equation}
for some $\delta_1,\delta_2>1$ with $\frac{1}{\delta_1}+\frac{1}{\delta_2}=1$, then the pair given by equation \eqref{eq:fullInfo_BSDEsolution} is in $\Sspace_T^2(\Pspace_1^\full)\times \Hspace_T^2(\Pspace_{\qq}^\full)$, and hence a solution to BSDE \eqref{eq:fullInfo_BSDE}.
\end{proposition}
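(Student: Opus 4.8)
The plan is to take the explicit candidate \eqref{eq:fullInfo_BSDEsolution}, $\goodchi(t)=G^\full(t,Y(t))$ and $\psi(t)=a^\tr\nabla G^\full(t,Y(t))$, and to verify only its integrability; that the pair actually solves \eqref{eq:fullInfo_BSDE} then follows by applying It\^o's formula to $G^\full(t,Y(t))$ and substituting the PDE \eqref{eq:HJB_G}, exactly as the nonlinear Feynman--Kac setup preceding the statement indicates. Under Condition \ref{cond:cons_a_exist} the map $G^\full=\exp(-\varphi)$ is $C^{1,2}$, $a$ is constant, and $Y$ has continuous $\mathcal F_t^B$-adapted paths, so $\goodchi$ and $\psi$ are automatically continuous, adapted, and measurable. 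The real content is therefore the two estimates $\E\sup_t|\goodchi(t)|^2<\infty$ (for $\goodchi\in\Sspace_T^2(\Pspace_1^\full)$) and $\E\int_0^T\|\psi(t)\|^2dt<\infty$ (for $\psi\in\Hspace_T^2(\Pspace_\qq^\full)$), where the two bounds in \eqref{eq:polynomialMoments} must be consumed.

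For the $\Sspace_T^2$ estimate I would bound $G^\full$ by duality, mirroring \eqref{eq:finiteXi}. Comparing $V^\full$ with the strategy $\pi\equiv0$ gives $V^\full\ge U(xe^{r(T-t)})$, hence $0<G^\full\le1$ when $\gamma>1$, making the supremum bound immediate in that regime. For $\gamma\in(0,1)$ the weak-duality inequality from the proof of Proposition \ref{prop:nirvanaFull}, using the minimal-entropy density $\mathcal E$, together with the Legendre transform used in \eqref{eq:Vansatz}, yields
\[
G^\full(t,y)\le\left(\E\left[\left(\mathcal E(T)/\mathcal E(t)\right)^{-\frac{1-\gamma}{\gamma}}\Big|Y(t)=y\right]\right)^{\gamma}.
\]
A completing-the-square / Girsanov estimate of the type behind \eqref{eq:finiteXi} and Proposition \ref{prop:sup_xiK_bound}, now with $\hat h$ replaced by $h(Y)$, then controls $\E\sup_t|G^\full(t,Y(t))|^2$ by an exponential moment of $\int_0^T\|h(Y(t))\|^2dt$, finite because the first bound in \eqref{eq:polynomialMoments} (with $\delta_1>1$) dominates its $\delta_1=1$ version.

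For the $\Hspace_T^2$ estimate I would write $\psi=-a^\tr G^\full\,\nabla\varphi$. Since $a$ is constant and $|\nabla\varphi(t,y)|\le C(1+\|y\|)$ by Condition \ref{cond:cons_a_exist}, this gives the pointwise bound $\|\psi(t)\|^2\le C\,(G^\full(t,Y(t)))^2(1+\|Y(t)\|)^2$. Integrating and applying H\"older's inequality with the conjugate exponents $\delta_1,\delta_2$ separates the two growing factors,
\[
\E\int_0^T\|\psi(t)\|^2\,dt\le C\left(\E\int_0^T\big(G^\full(t,Y(t))\big)^{2\delta_1}dt\right)^{1/\delta_1}\left(\E\int_0^T\big(1+\|Y(t)\|\big)^{2\delta_2}dt\right)^{1/\delta_2}.
\]
The second factor is finite by the polynomial-moment bound in \eqref{eq:polynomialMoments}; the first is finite by the exponential-moment bound there, the $2\delta_1$-th moment estimate for $G^\full$ being calibrated to produce exactly the constant $\frac{2\delta_1|\gamma-1||\gamma-2|}{\epsilon\gamma^2}$. (For $\gamma>1$ one instead uses $G^\full\le1$, so here only the polynomial moment is needed.)

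The main obstacle is the exponential-integrability control of $G^\full$ in the regime $\gamma\in(0,1)$, where $G^\full\ge1$ is genuinely unbounded. Because $G^\full$ is characterized only implicitly through the HJB \eqref{eq:HJB_G}, the bound must be routed through the dual density $\mathcal E$, and the delicate point is the bookkeeping of the exponential rate: one must carry out the moment estimate carefully enough that the $2\delta_1$-th moment of $G^\full$ is controlled by precisely the exponential moment appearing in \eqref{eq:polynomialMoments} rather than a strictly larger one, which is why a crude conditional-Jensen bound does not suffice and the sharper estimate of Proposition \ref{prop:sup_xiK_bound} is invoked. The H\"older split with conjugate $\delta_1,\delta_2$ is exactly the device that reconciles this single exponential condition on $h$ with the linear growth of $\nabla\varphi$ in $y$.
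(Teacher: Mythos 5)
Your proposal is correct and matches the paper's own proof in all essentials: the pointwise bound $\|\psi(t)\|\le C\,G^\full(t,Y(t))\left(1+\|Y(t)\|\right)$ coming from the linear-growth gradient bound in Condition \ref{cond:cons_a_exist}, the H\"older split with conjugate exponents $\delta_1,\delta_2$ separating the $2\delta_1$-moment of $G^\full$ from the $2\delta_2$-moment of $1+\|Y\|$, and control of the $G^\full$ moment through the minimal-entropy duality bound of Proposition \ref{prop:nirvanaFull} combined with estimates in the style of Proposition \ref{prop:sup_xiK_bound}, concluding $\Sspace_T^2$ membership from $\delta_1\geq 1$. Your added touches --- the explicit $\gamma>1$ simplification via $G^\full\le 1$ and the It\^o/Feynman--Kac check that the integrable pair actually solves \eqref{eq:fullInfo_BSDE} --- are refinements the paper leaves implicit, not a different route.
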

\begin{proof}
If Condition \ref{cond:cons_a_exist} holds then the gradient of $\log G^{\full}$ has a linear growth bound, and hence the integrability condition
\begin{align}
\nonumber
&\E\int_0^T\|a(Y(t))^\tr\nabla  G^\full(t,Y(t))\|^2dt\\
\nonumber
&\leq C^2\E\int_0^T|  G^\full(t,Y(t))|^2(1+\|Y(t)\|)^2dt \\
\label{eq:polynomialBound}
&\leq C^2\left(\E\int_0^T|  G^\full(t,Y(t)))|^{2\delta_1}dt\right)^{1/\delta_1}\left(\E\int_0^T\|1+Y(t)\|^{2\delta_2}dt\right)^{1/\delta_2}\ ,
\end{align}
where $\delta_1,\delta_2 \geq 1$ with $\tfrac{1}{\delta_1}+\tfrac{1}{\delta_2}=1$. From the duality bound 
\[V^\full(t,x,y)\leq U^\star(pe^{-r(T-t)})\E\left[\left(\frac{\mathcal E(T)}{\mathcal E(t)}\right)^{\frac{\gamma-1}{\gamma}}\Big|Y(t)=y\right]+xp \ ,\]
we have $\E\sup_{t\in[0,T]} |G^\full(t,Y(t)))|^{2\delta_1}<\infty$ if $\E\sup_{t\in[0,T]}\left(\E\left[\left(\frac{\mathcal E(T)}{\mathcal E(t)}\right)^{\frac{\gamma-1}{\gamma}}\Big|Y(t)\right]\right)^{2\delta_1}<\infty$, and so taking steps similar to those in the proof of Proposition \ref{prop:sup_xiK_bound} it follows that a sufficient condition for finiteness of inequality \eqref{eq:polynomialBound} are the inequalities of \eqref{eq:polynomialMoments}; because $\delta_1\geq 1$ it follows from the first inequality of \eqref{eq:polynomialMoments} that $G^\full(t,Y(t)))\in\Sspace_T^2(\Pspace_1^\full)$.
\end{proof}
In the literature, Proposition 6.3.2 in Chapter 6.3 of \cite{pham} shows equation \eqref{eq:fullInfo_BSDEsolution} to be in $\Sspace_T^2(\Pspace_1^\full)\times \Hspace_T^2(\Pspace_{\qq}^\full)$ if $G^\full(t,y)$ has at most linear growth in $y$ and  if the gradient has a bound of polynomial growth $\|a(y)^\tr\nabla G^\full(t,y)\| \leq C(1+\|y\|^n)$ for some $C\geq 0$ and $n\geq 0$.

\begin{proposition}
\label{prop:verification}
If a unique solution to \eqref{eq:fullInfo_BSDE} exists, then $\pi^*(t)=\pi^*(t,Y(t),\goodchi(t),\psi(t))$ given by \eqref{eq:optimalPifull} is such that $U(X^{\pi^*}(t))\goodchi(t)$ satisfies a verification lemma, and hence $\pi^*$ is the optimal strategy.
\end{proposition}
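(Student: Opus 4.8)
The plan is to prove that $\pi^*$ from \eqref{eq:optimalPifull} is optimal by a standard martingale verification argument, mirroring the structure already used in the proof of the partial-information optimal strategy in Section \ref{sec:optimalStrategy}. The key object is the process $V^\full(t,\Xpi(t),Y(t)) = U(\Xpi(t)e^{r(T-t)})\goodchi(t)$, where $\goodchi(t)=G^\full(t,Y(t))$ solves the BSDE \eqref{eq:fullInfo_BSDE}. The strategy is: (i) compute the It\^o differential of $U(\Xpi(t)e^{r(T-t)})\goodchi(t)$ for an arbitrary admissible $\pi\in\mathcal A^\full$, using the wealth dynamics and the BSDE \eqref{eq:fullInfo_BSDE} for $d\goodchi(t)$; (ii) show that the drift term, after grouping, equals $(1-\gamma)U(\Xpi e^{r(T-t)})\goodchi$ times the difference between $f(Y,\pi,\goodchi,\psi)$ and its maximum $F(Y,\goodchi,\psi)$, which is $\leq 0$ by \eqref{eq:F} and the concavity in \eqref{eq:f_concave}; hence $V^\full(t,\Xpi(t),Y(t))$ is a supermartingale for every $\pi\in\mathcal A^\full$; (iii) substitute $\pi=\pi^*$ from \eqref{eq:optimalPifull} to make the drift vanish exactly, so that $V^\full(t,X^{\pi^*}(t),Y(t))$ becomes a local martingale.

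First I would carry out the It\^o expansion. Writing $V^\full = U(xe^{r(T-t)})g$ with $g=\goodchi(t)$, the differential picks up three contributions: the $U$ term with the wealth SDE (producing the $rx\,\partial_x$ drift, the $\pi^\tr(h(Y)-\mathbf r)$ excess-return term, and the $-\tfrac{\gamma}{2}\|\sigma^\tr\pi\|^2$ It\^o-correction), the $d\goodchi$ term supplied by \eqref{eq:fullInfo_BSDE}, and the cross-variation $d\langle U,\goodchi\rangle$ coming from the common $dB$ driving both $Y$ and the $\sigma_\y$ part of wealth. The cross term is exactly what produces the $\pi^\tr\sigma_\y\psi$ coupling in $f$. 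Collecting the $dt$ terms should reproduce $(1-\gamma)f(Y,\pi,\goodchi,\psi)\,U(xe^{r(T-t)})$ minus the BSDE drift $(1-\gamma)F(Y,\goodchi,\psi)\,U(xe^{r(T-t)})$, i.e. a drift of $(1-\gamma)U(xe^{r(T-t)})\bigl(f(Y,\pi,\goodchi,\psi)-F(Y,\goodchi,\psi)\bigr)$. Since $F$ is the maximum of $f$ over $\pi$ and $U>0$, this drift has sign $(1-\gamma)$ times a nonpositive quantity; combined with $(1-\gamma)V^\full\geq 0$ (noted in the Section \ref{sec:optimalStrategy} proof for power utility) the process is a supermartingale, and vanishes precisely at $\pi=\pi^*$.

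The main obstacle is the step from local martingale to true martingale, which is what actually certifies optimality (a supermartingale dominating a local-martingale candidate only gives an inequality unless the candidate is a genuine martingale). Under $\pi^*$ the diffusion coefficient of $V^\full(t,X^{\pi^*}(t),Y(t))$ is governed by $\psi(t)=a(Y(t))^\tr\nabla G^\full(t,Y(t))$ and the $\sigma_\y$ hedging term; I would establish the martingale property by invoking $(\goodchi,\psi)\in\Sspace_T^2(\Pspace_1^\full)\times\Hspace_T^2(\Pspace_\qq^\full)$ (guaranteed by Proposition \ref{prop:fullInfoBSDEcond}) together with the duality/moment bounds from Condition \ref{cond:cons_a_exist}, so that the stochastic integral is a true $\Hspace_T^2$ martingale with zero conditional expectation. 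Concretely I would either show uniform integrability of the stopped family $V^\full(t\wedge\tau_n,X^{\pi^*}(t\wedge\tau_n),Y(t\wedge\tau_n))$ along a localizing sequence, or directly verify the square-integrability of the integrand $\int_0^T V^\full(u,X^{\pi^*},Y)\,\|{\cdot}\|^2\,du$ using that $G^\full=\exp(-\varphi)$ has controlled growth and $\goodchi$ is bounded in $\Sspace_T^2$. Once the martingale property holds, taking conditional expectations gives $\E[U(X^{\pi^*}(T))\mid\F_t] = V^\full(t,X^{\pi^*}(t),Y(t))$, and combined with the supermartingale bound $\E[U(X^{\pi}(T))\mid\F_t]\leq V^\full(t,x,y)$ for all admissible $\pi$, this verifies that $\pi^*$ attains the supremum in \eqref{eq:Vfull}, completing the verification lemma.
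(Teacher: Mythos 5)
Your first half — the It\^o expansion of $U(\Xpi(t)e^{r(T-t)})\goodchi(t)$, the identification of the drift as $(1-\gamma)U(\Xpi(t)e^{r(T-t)})\left(f(Y,\pi,\goodchi,\psi)-F(Y,\goodchi,\psi)\right)\leq 0$, and its vanishing at $\pi^*$ — is exactly the computation in the paper's Appendix \ref{app:fullInfoVerification}. The gaps are in how you close the argument, and both occur at the points the paper's proof is specifically engineered to avoid. First, a nonpositive drift only makes $U(\Xpi(t)e^{r(T-t)})\goodchi(t)$ a \emph{local} supermartingale. For $\gamma\in(0,1)$ nonnegativity plus Fatou upgrades this to a true supermartingale, but for $\gamma>1$ the process is nonpositive, and a nonpositive local supermartingale need not be a supermartingale (the negative of an inverse three-dimensional Bessel process is a nonpositive local martingale whose expectation strictly increases). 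Your step (ii) treats both cases uniformly via the sign of $(1-\gamma)V^\full$, which does not suffice. The paper splits the cases: for $\gamma>1$ it applies Fatou to the shifted quantity $U\left(\Xpi(T\wedge\tau_k)e^{r(T-T\wedge\tau_k)}\right)\goodchi(T\wedge\tau_k)-U(\Xpi_*(T))$, where $\Xpi_*(T)=\inf_{0\leq t\leq T}\Xpi(t)$, under the extra hypothesis $\E U(\Xpi_*(T))>-\infty$, falling back on a truncation argument otherwise; none of this appears in your plan.

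Second, and more seriously, your key step (iii) — upgrading the local martingale at $\pi^*$ to a true martingale via square-integrability — cannot be carried out with the tools you cite. The integrand of the stochastic integral contains the factor $U(X^{\pi^*}(u)e^{r(T-u)})\propto (X^{\pi^*}(u))^{1-\gamma}$, and since $\pi^*$ in \eqref{eq:optimalPifull} contains $\Sigma^{-1}(h(Y(u))-\mathbf{r})/\gamma$, this wealth factor is an exponential functional of $\int\|\sigma^{-1}(h(Y(u))-\mathbf{r})\|^2du$. With $h$ unbounded — the case this paper insists on — square-integrability of that integrand requires exponential-moment conditions well beyond \eqref{eq:polynomialMoments}, and it certainly does not follow from $(\goodchi,\psi)\in\Sspace_T^2(\Pspace_1^\full)\times \Hspace_T^2(\Pspace_{\qq}^\full)$ or from Condition \ref{cond:cons_a_exist}, which control $\goodchi$ and $\psi$ but say nothing about the wealth process. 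This is precisely why the paper never establishes a true-martingale property: it works only with stopped expectations (equality at $\pi^*$, inequality $\leq$ for general $\pi$) and passes to the limit one-sidedly with Fatou, obtaining the two inequalities $U(xe^{rT})\goodchi(0)\leq V^\full(0,x,y)$ and $V^\full(0,x,y)\leq U(xe^{rT})\goodchi(0)$ without ever needing uniform integrability of the candidate process. To repair your proof you would either have to impose additional exponential moment assumptions on $h(Y)$ strong enough to make $U(X^{\pi^*})\goodchi$ square-integrable, or abandon the martingale upgrade and adopt the paper's stopped-plus-Fatou structure.
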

\begin{proof} (See Appendix \ref{app:fullInfoVerification}).\end{proof}

If there exists a solution to BSDE \eqref{eq:fullInfo_BSDE} then it is unique:

\begin{theorem}
\label{thm:fullInfo_uniqueness}
If there exists $(\goodchi,\psi)\in\Sspace_T^2(\Pspace_1^\full)\times \Hspace_T^2(\Pspace_{\qq}^\full)$ that is a solution to BSDE \eqref{eq:fullInfo_BSDE}, then it is also the unique solution.
\end{theorem}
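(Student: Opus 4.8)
The plan is to prove uniqueness by showing that the first component of \emph{any} solution must coincide with the value function $V^\full$ of \eqref{eq:Vfull}, which is unique by its definition as a supremum; the second component is then pinned down automatically. Concretely, let $(\goodchi^1,\psi^1)$ and $(\goodchi^2,\psi^2)$ be two solutions in $\Sspace_T^2(\Pspace_1^\full)\times \Hspace_T^2(\Pspace_{\qq}^\full)$. For each $i$ I would run the verification computation of Section \ref{sec:optimalStrategy} (and Appendix \ref{app:fullInfoVerification}) to conclude that $\goodchi^i(t)=G^\full(t,Y(t))$, the value-function coefficient in the ansatz \eqref{eq:fullInfoAnsatz}, which does not depend on $i$. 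Since this identification is forced for every solution, $\goodchi^1=\goodchi^2$ almost surely, after which uniqueness of $\psi$ follows from uniqueness of the martingale part of the continuous semimartingale $\goodchi$: if $\goodchi^1=\goodchi^2$ then $\int_0^\cdot(\psi^1(u)-\psi^2(u))^\tr dB(u)\equiv 0$, forcing $\psi^1=\psi^2$ in $\Hspace_T^2(\Pspace_{\qq}^\full)$. Note that existence is assumed in the statement, so no separate existence argument is needed.

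The verification for a fixed solution $(\goodchi,\psi)$ proceeds as follows. First, $\goodchi(t)>0$ almost surely by the comparison principle (Theorem 6.2.2 of \cite{pham}), so the candidate strategy $\pi^*(t)=\Sigma^{-1}\big(\tfrac{h(Y(t))-\mathbf{r}}{\gamma}+\sigma_\y\tfrac{\psi(t)}{\gamma\goodchi(t)}\big)$ from \eqref{eq:optimalPifull} is well defined and, using $\psi\in\Hspace_T^2$ together with the integrability of $h(Y)$, is admissible in $\mathcal A^\full$. Next I would compute the It\^o differential of $\mathcal V^\pi(t)=e^{r(1-\gamma)(T-t)}U(\Xpi(t))\goodchi(t)$ for an arbitrary admissible $\pi$, substituting the BSDE \eqref{eq:fullInfo_BSDE} for $d\goodchi$. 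Because the driver $(1-\gamma)F$ is exactly the maximized Hamiltonian \eqref{eq:F}, the drift of $\mathcal V^\pi$ factors as $(1-\gamma)\mathcal V^\pi\big(f(Y,\pi,\goodchi,\psi)-F(Y,\goodchi,\psi)\big)$ with $f$ from \eqref{eq:f_concave}; since $(1-\gamma)\mathcal V^\pi\ge 0$ and $f-F\le 0$, this drift is nonpositive for every $\pi$ and vanishes precisely at $\pi=\pi^*$, exactly as in the partial-information computation of Section \ref{sec:optimalStrategy}. As $\mathcal V^\pi$ is one-sidedly bounded (nonnegative for $\gamma\in(0,1)$, nonpositive for $\gamma>1$), the resulting local supermartingale is a genuine supermartingale by Fatou, and $\mathcal V^{\pi^*}$ is a local martingale.

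The main obstacle is upgrading the local martingale $\mathcal V^{\pi^*}$ to a \emph{true} martingale, since unboundedness of $h$ blocks a direct appeal to the standard theory. I would localize with $\tau_n=\inf\{t:\;\int_0^t\|\psi(u)/\goodchi(u)\|^2du\ge n\ \text{or}\ |\mathcal V^{\pi^*}(t)|\ge n\}$, so that $\mathcal V^{\pi^*}(\cdot\wedge\tau_n)$ is a genuine martingale, and then pass to the limit using a uniform-integrability bound from $\goodchi\in\Sspace_T^2$ combined with the exponential moment estimates \eqref{eq:polynomialMoments} of Proposition \ref{prop:fullInfoBSDEcond} (the Condition \ref{cond:mgf_Z}-type control on $\E\exp(\tfrac{2\delta_1|\gamma-1||\gamma-2|}{\epsilon\gamma^2}\int_0^T\|h(Y(t))\|^2dt)$), applied through H\"older's inequality to separate the $U(\Xpi)$ and $\goodchi$ factors in $\mathcal V^{\pi^*}$. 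Once $\mathcal V^{\pi^*}$ is a true martingale, taking conditional expectations and using $\goodchi(T)=1$ gives $e^{r(1-\gamma)(T-t)}U(x)\goodchi(t)=\E[U(X^{\pi^*}(T))\,|\,\Xpi(t)=x,Y(t)=y]\le V^\full(t,x,y)$, while the supermartingale property over all admissible $\pi$ yields the reverse inequality $e^{r(1-\gamma)(T-t)}U(x)\goodchi(t)\ge V^\full(t,x,y)$. Hence $\goodchi(t)=V^\full(t,x,Y(t))/\big(e^{r(1-\gamma)(T-t)}U(x)\big)=G^\full(t,Y(t))$, independent of the chosen solution, which completes the argument. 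I expect the delicate points to be the integrability in the localization limit and keeping the sign conventions straight across $\gamma\in(0,1)$ and $\gamma>1$, both of which are reconciled uniformly by the identities $(1-\gamma)\mathcal V^\pi\ge 0$ and $f-F\le 0$.
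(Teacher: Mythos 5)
There is a genuine gap, and it sits exactly where you flagged ``the main obstacle.'' First, a hypothesis mismatch: Theorem \ref{thm:fullInfo_uniqueness} assumes \emph{only} that some solution exists in $\Sspace_T^2(\Pspace_1^\full)\times \Hspace_T^2(\Pspace_{\qq}^\full)$, yet your argument invokes the exponential moment bounds \eqref{eq:polynomialMoments} (hypotheses of Proposition \ref{prop:fullInfoBSDEcond}, which also presupposes Condition \ref{cond:cons_a_exist}); these are not available, so at best you would prove uniqueness under strictly stronger assumptions. Second, and more seriously, even granting \eqref{eq:polynomialMoments}, the true-martingale upgrade of $U(X^{\pi^*}(t)e^{r(T-t)})\goodchi(t)$ does not follow from the proposed H\"older separation. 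The factor you must control is $U(X^{\pi^*}(T))$, and $X^{\pi^*}$ is a stochastic exponential whose integrand is $\pi^*=\Sigma^{-1}\big(\tfrac{h(Y)-\mathbf{r}}{\gamma}+\sigma_\y\tfrac{\psi}{\gamma\goodchi}\big)$; moments of $U(X^{\pi^*}(T))$ require exponential moments of $\int_0^T\|\psi(t)/\goodchi(t)\|^2dt$, whereas the only a priori information on $\psi$ is $\E\int_0^T\|\psi(t)\|^2dt<\infty$. The bounds \eqref{eq:polynomialMoments} control the myopic part of $\pi^*$ (through $h(Y)$) but say nothing about the hedging part $\psi/\goodchi$, which is precisely the unknown component of an \emph{arbitrary} solution. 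Without true martingality you only get the supermartingale direction, i.e.\ $V^\full(t,x,y)\le U(xe^{r(T-t)})\goodchi(t)$, which leaves $\goodchi$ unidentified; the whole strategy of pinning every solution to the value function then collapses. (A smaller unaddressed point: even with a true martingale, passing from the $\F_t$-conditional ess-sup statements to the Markov function $G^\full(t,Y(t))$ needs a dynamic-programming/Markov-selection step, since a priori $\goodchi(t)$ is only $\mathcal F_t^B$-adapted, not a function of $Y(t)$.)

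The paper's proof is designed to avoid exactly this trap: it never identifies $\goodchi$ with the value function and never needs a true-martingale property. Instead it truncates both the control set ($\|\pi\|\le K$) and the state ($\tau_K=\inf\{t:\|Y(t)\|\ge K\}$), so the truncated driver $F_K$ is uniformly Lipschitz and the truncated BSDE \eqref{eq:fullInfo_boundedBSDE} has a unique solution; Proposition \ref{prop:chiK_inside} forces any solution of \eqref{eq:fullInfo_BSDE} to agree with $\goodchi_K$ on $\{\tau_K\ge T\}$, and the $\Sspace_T^2$ bound of Proposition \ref{prop:chiK_bound} plus $\tau_K\nearrow\infty$ lets one compare two arbitrary solutions through $\goodchi_K$ and send $K\to\infty$ --- using nothing beyond the theorem's stated hypotheses. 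One part of your proposal is actually sound and even cleaner than the paper's: once $\goodchi^1=\goodchi^2$ is known, deducing $\psi^1=\psi^2$ from uniqueness of the martingale part of a continuous semimartingale is correct and shorter than the paper's It\^o-isometry computation. But that step only becomes available after the first components are matched, which is exactly what your argument fails to deliver.
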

\begin{proof}(See Appendix \ref{app:fullInfo_BSDE}).\end{proof}

\begin{remark}
Condition \ref{cond:cons_a_exist}, Proposition \ref{prop:fullInfoBSDEcond}, Proposition \ref{prop:verification}, and Theorem \ref{thm:fullInfo_uniqueness} contributed toward existence of BSDE solutions to solve the full-information control problems. Sections \ref{sec:nirvana} and \ref{sec:nonlinearExample} provide financial examples with explicit formulae for classical solutions.
\end{remark}

\begin{remark}[Existence in the Absence of Classical Solutions]
\label{remark:viscositySolutions}
The solution to \eqref{eq:fullInfo_BSDE} can exist without the existence of a classical solution to \eqref{eq:HJB_G}. A solution $(\goodchi,\psi)$ has associated with it a viscosity solution to \eqref{eq:HJB_G}, i.e., there is a deterministic function $G^\full$ such that
\[\goodchi(t)=G^\full(t,Y(t))\qquad\hbox{almost surely, where $G^\full$ is a viscosity solution of \eqref{eq:HJB_G},}\]
(see Proposition 6.3.3 in Chapter 6.3 of \cite{pham}). However, existence of a viscosity solution may not be sufficient for existence of a solution to \eqref{eq:fullInfo_BSDE}, as (i) $G^\full$ must be square integrable and (ii) it is not clear how to construct $\psi$ from the viscosity solution. Moreover, a solution to \eqref{eq:fullInfo_BSDE} might be identified if the viscosity solution is unique, but the current theory for uniqueness of viscosity solutions requires the PDE to satisfy a strong comparison principle and also some growth conditions \cite{kobylanski2000,pham} that are not satisfied by the nonlinear term $F(t,y,g,\eta)$. Lastly, it should be pointed out that the operator $\mathcal L$ is what is called \textit{degenerate elliptic}, and so a classical solution to \eqref{eq:HJB_G} is also a viscosity solution (see \cite{CL1992}), reaffirming that \eqref{eq:fullInfo_BSDEsolution} is the appropriate formula if there is regularity.

\end{remark}

\subsection{The Information Premium}
\label{sec:premium}
Intuitively it would seem that full information is better than partial --or at least that it cannot hurt investment. This is correct, but the full-information market is incomplete because $Y(t)$ is not tradeable, and cannot be reduced to a complete market like that given in \eqref{eq:dScomplete}. Generally speaking, there is added premium and lowered utility when a model is incomplete. However, partial information is an exception, as it turns out that the partially-informed investor expects the fully informed to have an advantage.

From the perspective of the partially-informed investor, the information premium (i.e., the loss in utility due to partial information) is,
\[\premium(t,x)\triangleq\E\left[V^\full(t,x,Y(t))-V(t,x)\Big|\Fpartial_t\right]=U(xe^{-r(T-t)})\E\left[G^\full(t,Y(t))-G(t) \Big|\Fpartial_t\right]\ .\]
This is similar to the loss of information quantified in \cite{brendle2006,carmonaIP4} for the linear Gaussian problem, but is quantified with BSDEs for the general nonlinear case.
\begin{proposition}
\label{prop:infoPremium}
The information premium is equal to
\begin{align}
\nonumber
&\premium(t,x)\\
\nonumber
&=(1-\gamma)U(xe^{r(T-t)})\\
\nonumber
&\times\E\Bigg[\int_t^T\Bigg(F\left(Y(u),\goodchi(u),\psi(u)\right)-\underbrace{\gamma\Bigg(\frac{\beta(u,\alpha(u),\xi(u))}{(1-\gamma)\xi(t)}+\frac{1}{2}\left\|\frac{\alpha(u)}{\xi(u)}\right\|^2\Bigg)G(u)}_{\geq0}\Bigg)du \Bigg|\Fpartial_t\Bigg]\\
\label{eq:premium}
&\geq0\ ,
\end{align}
where $(1-\gamma)U(x)\geq 0$ by definition for all $x\geq 0$.
\end{proposition}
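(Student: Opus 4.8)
The plan is to reduce everything to the two value-function ansätze and the two BSDEs already in hand, and then subtract. Because $V^\full(t,x,Y(t))=U(xe^{r(T-t)})G^\full(t,Y(t))=U(xe^{r(T-t)})\goodchi(t)$ and $V(t,x)=U(xe^{r(T-t)})G(t)$, and the factor $U(xe^{r(T-t)})$ is deterministic (hence $\Fpartial_t$-measurable), I would first pull it out and write $\premium(t,x)=U(xe^{r(T-t)})\,\E[\goodchi(t)-G(t)\mid\Fpartial_t]$. The whole identity \eqref{eq:premium} then amounts to producing integral representations for $\E[\goodchi(t)\mid\Fpartial_t]$ and for $G(t)$, and taking their difference.

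For the full-information term I would integrate the BSDE \eqref{eq:fullInfo_BSDE} from $t$ to $T$, use $\goodchi(T)=1$, and apply $\E[\,\cdot\mid\Fpartial_t]$. The stochastic integral $\int_t^T\psi(u)^\tr dB(u)$ is a martingale in the full filtration $\F$ (the membership $\psi\in\Hspace_T^2(\Pspace_\qq^\full)$ makes it a true, not merely local, martingale), so its $\F_t$-conditional expectation vanishes and, since $\Fpartial_t\subset\F_t$, the tower property kills it under $\E[\,\cdot\mid\Fpartial_t]$ as well, giving $\E[\goodchi(t)\mid\Fpartial_t]=1+(1-\gamma)\,\E[\int_t^T F(Y(u),\goodchi(u),\psi(u))\,du\mid\Fpartial_t]$. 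For the partial-information term I would apply It\^o's lemma to $G=\xi^\gamma$ using \eqref{eq:partialInfo_BSDE}, obtaining the drift coefficient $-\gamma\beta/\xi+\tfrac12\gamma(\gamma-1)\|\alpha\|^2/\xi^2$ and the diffusion $\gamma(\alpha^\tr/\xi)G$; a one-line rearrangement recasts the drift as $-(1-\gamma)\gamma\big(\beta/((1-\gamma)\xi)+\tfrac12\|\alpha/\xi\|^2\big)G$. Integrating from $t$ to $T$ with $G(T)=\xi(T)^\gamma=1$, taking $\E[\,\cdot\mid\Fpartial_t]$, and discarding the $d\zeta$ integral (an $\Fpartial_t$-martingale, the integrability following from $(\xi,\alpha)\in\Sspace_T^2\times\Hspace_T^2$ together with Condition \ref{cond:mgf_Z}) yields $G(t)=1+(1-\gamma)\,\E[\int_t^T\gamma\big(\beta/((1-\gamma)\xi)+\tfrac12\|\alpha/\xi\|^2\big)G\,du\mid\Fpartial_t]$. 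Subtracting, the constants $1$ cancel, I factor out $(1-\gamma)$, and \eqref{eq:premium} follows.

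Establishing the sign $\premium\geq0$ is the part I expect to be the genuine obstacle, since the integrand in \eqref{eq:premium} is \emph{not} pointwise signed: one checks $F\geq0$ by \eqref{eq:F}, and the subtracted term equals $\tfrac{\gamma}{2}\big\|\sigma^{-1}(\hat h(u)-\mathbf r)/\gamma+\alpha(u)/\xi(u)\big\|^2G(u)\geq0$ (using the second form of $\beta$ in Theorem \ref{thm:partialInfo_BSDE} and positivity of $\xi$, which is an expectation of a strictly positive random variable, cf. Proposition \ref{prop:nirvana}), so their difference has no obvious sign and the prefactor $(1-\gamma)$ may itself be negative. I would therefore prove nonnegativity externally rather than from the integrand. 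The optimal partial strategy $\pi^*$ of \eqref{eq:optimalPi} is $\Fpartial_t$-adapted, hence $\F_t$-adapted and admissible in $\mathcal A^\full$; moreover it generates the same physical wealth process $X^{\pi^*}$ in either market, because $dS/S=\hat h\,dt+\sigma\,d\zeta=h(Y)\,dt+\sigma_\w\,dW+\sigma_\y\,dB$ are two representations of one and the same semimartingale. By the dynamic programming inequality for the Markovian full-information problem, $V^\full(t,X^{\pi^*}(t),Y(t))\geq\E[U(X^{\pi^*}(T))\mid\F_t]$ almost surely; applying $\E[\,\cdot\mid\Fpartial_t]$, using the tower property through $\Fpartial_t\subset\F_t$, and identifying $\E[U(X^{\pi^*}(T))\mid\Fpartial_t]=V(t,x)$ by optimality of $\pi^*$, I conclude $\E[V^\full(t,x,Y(t))\mid\Fpartial_t]\geq V(t,x)$, i.e. $\premium(t,x)\geq0$. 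The remaining loose ends are purely technical: the true-martingale status of both stochastic integrals and the Fubini interchange of $\E[\,\cdot\mid\Fpartial_t]$ with $\int_t^T$, all of which follow from the square-integrability memberships $(\xi,\alpha)\in\Sspace_T^2(\Pspace_1)\times\Hspace_T^2(\Pspace_\dd)$ and $(\goodchi,\psi)\in\Sspace_T^2(\Pspace_1^\full)\times\Hspace_T^2(\Pspace_\qq^\full)$ furnished by Theorem \ref{thm:partialInfo_BSDE}, Proposition \ref{prop:fullInfoBSDEcond}, and Condition \ref{cond:mgf_Z}.
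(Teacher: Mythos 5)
Your proposal is correct and takes essentially the same route as the paper: both derive the identity by integrating the two BSDEs \eqref{eq:partialInfo_BSDE} (after the It\^o transformation $G=\xi^\gamma$) and \eqref{eq:fullInfo_BSDE}, conditioning on $\Fpartial_t$ to kill the stochastic integrals, and both obtain the sign \emph{externally} from the observation that the fully-informed investor can mimic partially-informed strategies -- exactly the paper's inequality \eqref{eq:EVgeqV} -- rather than from the (unsigned) integrand. The only difference is cosmetic: the paper formalizes the mimicking idea by comparing suprema over $\mathcal A\subset\mathcal A^\full$ and interchanging supremum with conditional expectation, whereas you run it through the single optimal strategy $\pi^*$ of \eqref{eq:optimalPi} together with the full-information supermartingale (verification) property, both of which the paper's Section \ref{sec:BSDEs} machinery supplies.
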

\begin{proof}
The fully-informed investor has the option to follow the partially-informed optimal strategy, hence,
\begin{align}
\nonumber
&\E\left[V^\full\left(t,x,Y(t)\right)\Big|\Fpartial_t\vee\{X^\pi(t)=x\}\right]\\
\nonumber
&=\E\left[\sup_{\pi\in\mathcal A^\full}\E\left[U(X^\pi(T))\Big|\F_t\vee\{X^\pi(t)=x\}\right]\Big|\Fpartial_t\vee\{X^\pi(t)=x\}\right]\\
\nonumber
&\geq\E\left[\sup_{\pi\in\mathcal A}\E\left[U(X^\pi(T))\Big|\F_t\vee\{X^\pi(t)=x\}\right]\Big|\Fpartial_t\vee\{X^\pi(t)=x\}\right]\\
\nonumber
&\geq\sup_{\pi\in\mathcal A}\E\left[\E\left[U(X^\pi(T))\Big|\F_t\vee\{X^\pi(t)=x\}\right]\Big|\Fpartial_t\vee\{X^\pi(t)=x\}\right]\\
\label{eq:EVgeqV}
&=V(t,x)\ ,
\end{align}
for all $t\in[0,T]$ and all $x\geq0$, 
and
\[\premium(t,x)\geq0\qquad\hbox{for all $x>0$ and $t\in[0,T)$.}\]
Using the BSDEs of \eqref{eq:fullInfo_BSDE} and the inequality shown in \eqref{eq:EVgeqV}, the information premium is written as
\begin{align*}
&\premium(t,x)\\
&=U(xe^{r(T-t)})\E\left[G^\full(t,Y(t))-G(t) |\Fpartial_t\right]\\
&=(1-\gamma)U(xe^{r(T-t)})\\
&\times\E\Bigg[\int_t^T\Bigg(F\left(Y(u),\goodchi(u),\psi(u)\right)-\gamma\underbrace{\Bigg(\frac{\beta(u,\alpha(u),\xi(u))}{(1-\gamma)\xi(t)}+\frac{1}{2}\left\|\frac{\alpha(u)}{\xi(u)}\right\|^2\Bigg)}_{\geq0}G(u)\Bigg)du \Bigg|\Fpartial_t\Bigg]\\
&\geq0\ ,
\end{align*}
where $(1-\gamma)U(x)\geq 0$ by definition for all $x\geq 0$, $F\left(Y(t),\goodchi(t),\psi(t)\right)\geq 0$ for all $t\in[0,T]$, and
\[\frac{\beta(t,\alpha(t),\xi(t))}{(1-\gamma)\xi(t)}+\frac{1}{2}\left\|\frac{\alpha(t)}{\xi(t)}\right\|^2~=~~\frac{1}{2}\left\|\sigma^{-1}\frac{\hat h(t)-\mathbf{r}}{\gamma}+\frac{\alpha(t)}{\xi(t)}\right\|^2\geq 0\]
by  the formula for $\beta(t,\alpha(t),\xi(t))$ given in Theorem \ref{thm:partialInfo_BSDE}. 
\end{proof}
The importance of \eqref{eq:premium} is that it shows how the information premium incrementally grows with time. Alternatively, one could look at $d\left(\E\left[G^\full(t,Y(t))-G(t) |\Fpartial_t\right]\right)$, but the BSDEs provide a different perspective because the coefficients provide a breakdown of the premium's growth.

Before moving to the next section, it should be pointed out how the information premium can be either infinite or undefined. The obvious lower bound $V(t,x)\geq U(xe^{r(T-t)})$ is obtained with $\pi\equiv 0$, and leads to the upper bound
\begin{align}
\nonumber
\premium(t,x)&\leq U\left(xe^{r(T-t)}\right)\left(\E\left[G^\full\left(t,Y(t)\right) \Big|\Fpartial_t\right]-1\right)\ .
\end{align}
These bounds depend on finiteness of the full-information value function, and so investor nirvana for full information occurring with non-zero probability results in either
\begin{itemize}
\item $\premium(t,x)=\infty$ because $V(t,x)<\infty$  and $\E[ V^\full(t,x,Y(t))|\Fpartial_t]=\infty$,
\item $\premium(t,x)=\infty-\infty$ (undefined) because  $V(t,x)=\infty$  and $\E[ V^\full(t,x,Y(t))|\Fpartial_t]=\infty$.
\end{itemize}
These two cases are considered at the end of Section \ref{sec:nirvana}. It should also be pointed out that the information premium is usually positive, as shown numerically in \cite{FPS2015,FPS2016,pap2013}.

\section{The Linear Case}
\label{sec:nirvana}
Consider the linear case with $h(y)=\mu+y$. Suppose that $Y(t)\in\mathbb R^1$ is an Ornstein-Uhlenbeck process, and there is only one risky asset so that $S(t)\in\mathbb R^1$. The SDEs are
\begin{align}
\label{eq:linearModel_dS}
\frac{dS(t)}{S(t)} &= (\mu+Y(t))dt+\sigma \left(\sqrt{1-\rho^2}dW(t)+\rho dB(t)\right)\\
\label{eq:linearModel_dY}
dY(t)&=-\kappa Y(t)dt+ a dB(t)\ ,
\end{align}
with $\kappa, a,\sigma>0$, $\rho\in(-1,1)$, and $\mu\in\mathbb R$ being the long-term mean rate of return. The wealth process is
\begin{align*}
\frac{d\Xpi(t)}{\Xpi(t)}&= rdt+\pi(t)\left(\frac{dS(t)}{S(t)}-rdt\right)\\
&=\Big(\pi(t)(\mu+Y(t))+r(1-\pi(t))\Big)dt+\pi(t)\sigma \left(\sqrt{1-\rho^2}dW(t)+\rho dB(t)\right)\ . 
\end{align*}
For simplicity take $r=\mu=0$. This model is the scalar version of the model considered in \cite{brendle2006, carmonaIP4,wangWu2008}, except that they avoided nirvana situations by considering the case of $\gamma>1$. Indeed, this section considers $\gamma<1$ and examines the stability of a scalar Riccati equation, whereas stability of the matrix Riccati equation in \cite{brendle2006, carmonaIP4} would require a significantly more difficult analysis.

\subsection{The Fully-Informed Investor}
The optimal investment problem for full information is
\[ V(t,x,y)=\sup_\pi\mathbb E\left[U(X(T))\Big|X(t)=x,Y(t)=y\right]\ ,\]
which is the solution $V(t,x,y)$ to the HJB equation 
\begin{align}
\nonumber
V_t +\frac{ a^2}{2}V_{yy}-\kappa yV_y-\frac{\left(yV_x+\rho\sigma a V_{xy}\right)^2}{2\sigma^2 V_{xx}}&=0\\
\nonumber
V\Big|_{t=T}&=U\ ,
\end{align}
where the optimal portfolio is 
\[\pi^* = -\frac1x\frac{yV_x+\rho\sigma a V_{xy}}{\sigma^2 V_{xx}}\ .\] 
For power utility $U(x) = \frac{1}{1-\gamma}x^{1-\gamma}$ the solution of the HJB equation is given by the ansatz $V(t,x,y) = U(x)G(t,y)$, which yields the following equation for $G$:
\begin{align*}
G_t +\frac{ a^2}{2}G_{yy}-\kappa yG_y+\frac{1-\gamma}{\gamma}\frac{\left(yG+\rho\sigma a G_{y}\right)^2}{2\sigma^2 G}&=0\\
G\Big|_{t=T}&=1\ ,
\end{align*}
where 
\[\pi^* = \frac{y}{\gamma\sigma^2}+\frac{\rho a G_y}{\gamma\sigma G}\ .\]
We now apply another ansatz, 
\[G(t,y) = \exp\Big(A(t)y^2+H(t)\Big)\ ,\]
for which there are the ordinary differential equations
\begin{align}
\label{eq:aPrime}
A'(t)+2 a^2\left(1+\frac{(1-\gamma)\rho^2}{\gamma}\right)A^2(t)-2\left(\kappa-\frac{(1-\gamma)\rho a }{\gamma\sigma}\right)A(t)+\frac{1-\gamma}{2\gamma\sigma^2}&=0\\
\label{eq:H_Primes}
H'(t)+ a^2 A(t)&=0\ ,
\end{align}
with terminal conditions $A(T)=0=H(T)$ apply. Then the optimal control is
\[\pi^*(t) = \frac{y}{\gamma\sigma^2}+\frac{2\rho ay A(t)}{\gamma\sigma }\ .\]
Let $A_\pm$ be the roots of the polynomial $2 a^2\left(1+\frac{(1-\gamma)\rho^2}{\gamma}\right)A^2(t)-2\left(\kappa-\frac{(1-\gamma)\rho a }{\gamma\sigma}\right)A(t)+\frac{1-\gamma}{2\gamma\sigma^2}$. From the quadratic equation, these roots are found to be
\begin{equation}
\label{eq:aPlusMinus}
A_\pm = \frac{2\left(\kappa-\frac{(1-\gamma)\rho a }{\gamma\sigma}\right)\pm\sqrt{4\left(\kappa-\frac{(1-\gamma)\rho a }{\gamma\sigma}\right)^2-4\frac{(1-\gamma) a^2}{\gamma\sigma^2}\left(1+\frac{(1-\gamma)\rho^2}{\gamma}\right)}}{4 a^2\left(1+\frac{(1-\gamma)\rho^2}{\gamma}\right)}\ ,
\end{equation}
and the Riccati equation \eqref{eq:aPrime} is written as
 \begin{equation}
 \label{eq:rootRiccati}
A'(t) = -\frac{c}{2}(A(t)-A_+)(A(t)-A_-)\ ,
\end{equation}
where $c = 4 a^2\left(1+\frac{(1-\gamma)\rho^2}{\gamma}\right)$.

\subsubsection{Complex Roots and Nirvana Strategies}
\label{sec:fullInfoNirvana}
The roots $A_\pm$ given by equation \eqref{eq:aPlusMinus} are real iff 
\begin{align}
\nonumber
0&\leq \left(\kappa-\frac{(1-\gamma)\rho a }{\gamma\sigma}\right)^2-\frac{(1-\gamma) a^2}{\gamma\sigma^2}\left(1+\frac{(1-\gamma)\rho^2}{\gamma}\right)\\
\label{eq:expStability}
&= \kappa^2-\frac{(1-\gamma) a }{\gamma\sigma}\left(2\kappa\rho+\frac{ a}{\sigma}\right)\ .
\end{align}
Instabilities can arise if the roots are complex. The best way to understand why is to look at the linearization of the Riccati equation \eqref{eq:aPrime}. Letting $v(t)$ be the solution to the following linear equation,
\[v''-2\left(\kappa-\frac{(1-\gamma)\rho a }{\gamma\sigma}\right)v'+\frac{(1-\gamma) a^2}{\gamma\sigma^2}\left(1+\frac{(1-\gamma)\rho^2}{\gamma}\right)v=0\ ,\]
with the appropriate terminal conditions $v'(T)=0$ and $v(T)\neq 0$, the solution to Riccati equation \eqref{eq:aPrime} is $A(t) = v'(t)/(2 a^2 v(t))$. For a characteristic equation with complex roots, the solution is
\[v(t) = e^{\left(\kappa-\frac{(1-\gamma)\rho a }{\gamma\sigma}\right)(T-t)}\Big(C_1\cos(\Xi(T-t))+C_2\sin(\Xi(T-t))\Big)\ ,\]
where $\Xi$ is the absolute value of the imaginary component
\[\Xi =  \left|\sqrt{  \kappa^2-\frac{(1-\gamma) a }{\gamma\sigma}\left(2\kappa\rho+\frac{ a}{\sigma}\right)}\right|\ ,
\]
and where the constants are chosen to match the terminal conditions, so that $C_1\left(\kappa-\frac{(1-\gamma)\rho a }{\gamma\sigma}\right)=C_2$. Investor nirvana occurs because it may be that $v(t) =0 $ for some $t\in[0,T]$. If this is the case, then $A(t)$ will blow up at some finite time $0\leq t<T$. An example of such an instability is $2\kappa\rho+\frac{ a}{\sigma}>0$ and $\gamma$ tending toward zero. Another instability occurs for $\kappa$ tending toward zero with constant $\gamma\in(0,1)$.

\subsection{The Partially-Informed Investor}
Letting $\widehat Y(t) =\mathbb E[Y(t)|\Fpartial_t]$, the innovations process is
\[\nu(t) = \int_0^t\left(\frac{dS(u)}{S(u)}-\widehat Y(u)du\right)\ ,\]
for which $\frac1\sigma\nu(t)$ is a Brownian motion. Furthermore, letting $\Sigma(t)= \mathbb E(Y(t)-\widehat Y(t))^2$, the investor tracks the hidden process $Y(t)$ using the Kalman filter 
\begin{align*}
d\widehat Y(t)& = -\kappa \widehat Y(t)dt+\frac{1}{\sigma^2}\left( \Sigma(t)+\sigma a\rho\right)d\nu(t)\\
\frac{d}{dt}\Sigma(t)&=-2\kappa\left(\Sigma(t)-\frac{ a^2(1-\rho^2)}{2\kappa}\right)-\frac{2 a\rho}{\sigma}\Sigma(t)-\left(\frac1\sigma\Sigma(t)\right)^2\ ,
\end{align*}
where for $t$ large there is the asymptotic  $\Sigma(t)\rightarrow \overline\Sigma$ as $t\nearrow\infty $ with
\begin{align}
\label{eq:barSigma}
\overline\Sigma &= -(\kappa\sigma^2+ a\rho\sigma)+\sqrt{(\kappa\sigma^2+ a\rho\sigma)^2+\left( a\sigma\sqrt{1-\rho^2}\right)^2}\ ,
\end{align}
Assuming $\Sigma(0) = \overline\Sigma$, then $\frac{d}{dt}\Sigma(t)=0$ for all $t>0$ and the partial-information model is written with constant coefficients and the innovations,

\begin{align*}
\frac{dS(t)}{S(t)} &=\widehat Y(t)dt+ \sigma d\zeta(t)\\
d\widehat Y(t)&=-\kappa \widehat Y(t)dt+\bar a d\zeta(t)\ ,
\end{align*} 
where $\zeta(t)=\frac1\sigma\nu(t)$ is a Brownian motion and $\bar  a=\frac{1}{\sigma}\left( \overline\Sigma+\sigma a\rho\right)$. Hence, the partial-information model is equivalent to the full-information model in equations \eqref{eq:linearModel_dS} and \eqref{eq:linearModel_dY} having $\rho=1$ and diffusion coefficient $\bar a$. 

An investor achieves nirvana when $V(t,x,y)=\infty$ for $\gamma\in(0,1)$, and when $V(t,x,y)=0$ for $\gamma>1$ (see Definition \ref{def:nirvana} or \cite{kimOmberg}). Propositions \ref{prop:nirvana} and \ref{prop:nirvanaFull} showed nirvana cannot occur for $\gamma>1$ for both partial and full information, respectively. For partial-information this can be verified for the linear model by investigating the parameters. Similar to the condition set forth in \eqref{eq:expStability}, the partial-information ansatz involves a real root iff 
\begin{equation}
\label{eq:partialInfoStability} 
\kappa^2-\frac{(1-\gamma)\bar a }{\gamma\sigma}\left(2\kappa+\frac{\bar a}{\sigma}\right) \geq 0\ .
\end{equation}
For $\gamma>1$ a minimum of zero is achieved in \eqref{eq:partialInfoStability} when $\bar a = -\kappa\sigma$. Indeed, from \eqref{eq:barSigma} it is seen that $\bar  a=\frac{1}{\sigma}\left( \overline\Sigma+\sigma a\rho\right)\geq -k\sigma$, so Proposition \ref{prop:nirvana} is verified for $\gamma>1$ because there cannot be a complex root.

For $\gamma\in(0,1)$ there are some interesting cases of investor-nirvana occurrence:

\begin{example}[Infinite Information Premium]
Suppose $-\frac{1}{ 2}<\rho<0$, $-2\rho<\kappa<1$, $\sigma\geq 1$, and $ a$ such that $-2\rho\kappa\sigma< a<\sqrt{-2\kappa\rho\sigma}$. Then from \eqref{eq:partialInfoStability} it is seen that the partially-informed investor will never achieve nirvana for $\gamma\in(0,1)$ because
\[2\kappa+\frac{\overline\Sigma+\sigma a\rho}{\sigma^2}<0\ ,\]
but from \eqref{eq:expStability} it is seen that the fully-informed investor will achieve nirvana as $\gamma$ tends toward zero because
\[2\kappa\rho+\frac{ a}{\sigma}>0\ .\]
Hence, the information premium is infinite if the investors are given enough time.
\end{example}

\begin{example}[Undefined Information Premium]
The parameters can be selected so that the information premium from Section \ref{sec:premium} is undefined (i.e., equal to the difference $\infty-\infty$). Suppose $-\frac{1}{\sqrt 2}<\rho<0$ and $ a =-\frac{\kappa\sigma}{\rho}$. Then $\overline\Sigma =  a\sigma\sqrt{1-\rho^2}$, and the partially-informed investor will achieve nirvana as $\gamma$ tends toward zero because \eqref{eq:partialInfoStability} is violated,
\[2\kappa+\frac{\overline\Sigma+\sigma a\rho}{\sigma^2}=2\kappa+\frac{ a(\sqrt{1-\rho^2}+\rho)}{\sigma}>0\ .\]
The fully-informed investor will also achieve nirvana because \eqref{eq:expStability} is violated
\[2\kappa\rho+\frac{ a}{\sigma}>0\ .\]
Hence, the information premium is undefined if both investors have a long enough investment period.
\end{example}

\begin{example}[Simulation of Paths]
\label{ex:linearExample}
For $\gamma>1$ the Riccati equation for $A(t)$ can be solved explicitly, which allows for easy simulation of the BSDE solutions and the $G$ functions under both partial and full information. For $\gamma>1$ it follows that $A_+>0>A_-$, so $A_-$ is the long-term equilibrium of $A(t)$, and equations \eqref{eq:aPrime} and \eqref{eq:H_Primes} have explicit solutions,
\begin{align*}
A(t) &= A_-\frac{1-e^{-D(T-t)}}{1-\frac{A_-}{A_+}e^{-D(T-t)}}\\
H(t)&=a^2A_-\left((T-t)-\frac{2}{cA_-}\log\left(\frac{A_+-A_-e^{-D(T-t)}}{A_+-A_-}\right)\right)\ ,
\end{align*}
where $A_\pm$ is given by \eqref{eq:aPlusMinus} and $D =2\sqrt{\left(\kappa-\frac{(1-\gamma)\rho a }{\gamma\sigma}\right)^2-\frac{(1-\gamma) a^2}{\gamma\sigma^2}\left(1+\frac{(1-\gamma)\rho^2}{\gamma}\right)}$, and where $c$ is the same as that used in \eqref{eq:rootRiccati}. As $\gamma>1$ it follows that $D>0$, and so the solution is stable for large $T$.

Figure \ref{fig:simulationsLinear} shows a simulation of the linear model with the parameters given in Table \ref{tab:parametersLinear}. The simulation is informative because it shows how paths of $G(t)$ and $G^\full$ compare; in particular it shows how it is possible for $G(t)<G^\full(t,Y(t))$ even though Proposition \ref{prop:infoPremium} has shown $G(t)\geq\E[G^\full(t,Y(t))|\Fpartial_t]$ for $\gamma>1$. 
\begin{table}[h!]
\centering
\begin{tabular}{cccccc}
\multicolumn{6}{c}{Parameter Values}\\
  $\kappa$& $a$&$\rho$&$\sigma$& $ T$& $\gamma$\\
\hline
 8& .3& -.8& .15&1& 1.2
\end{tabular}
\vspace{.2cm}
\caption{\small The parameters for the simulation shown in Figure \ref{fig:simulationsLinear}.}
\label{tab:parametersLinear}
\end{table}
\begin{figure}[htbp]
\centering
\begin{tabular}{cc}
	\includegraphics[width=2.8in]{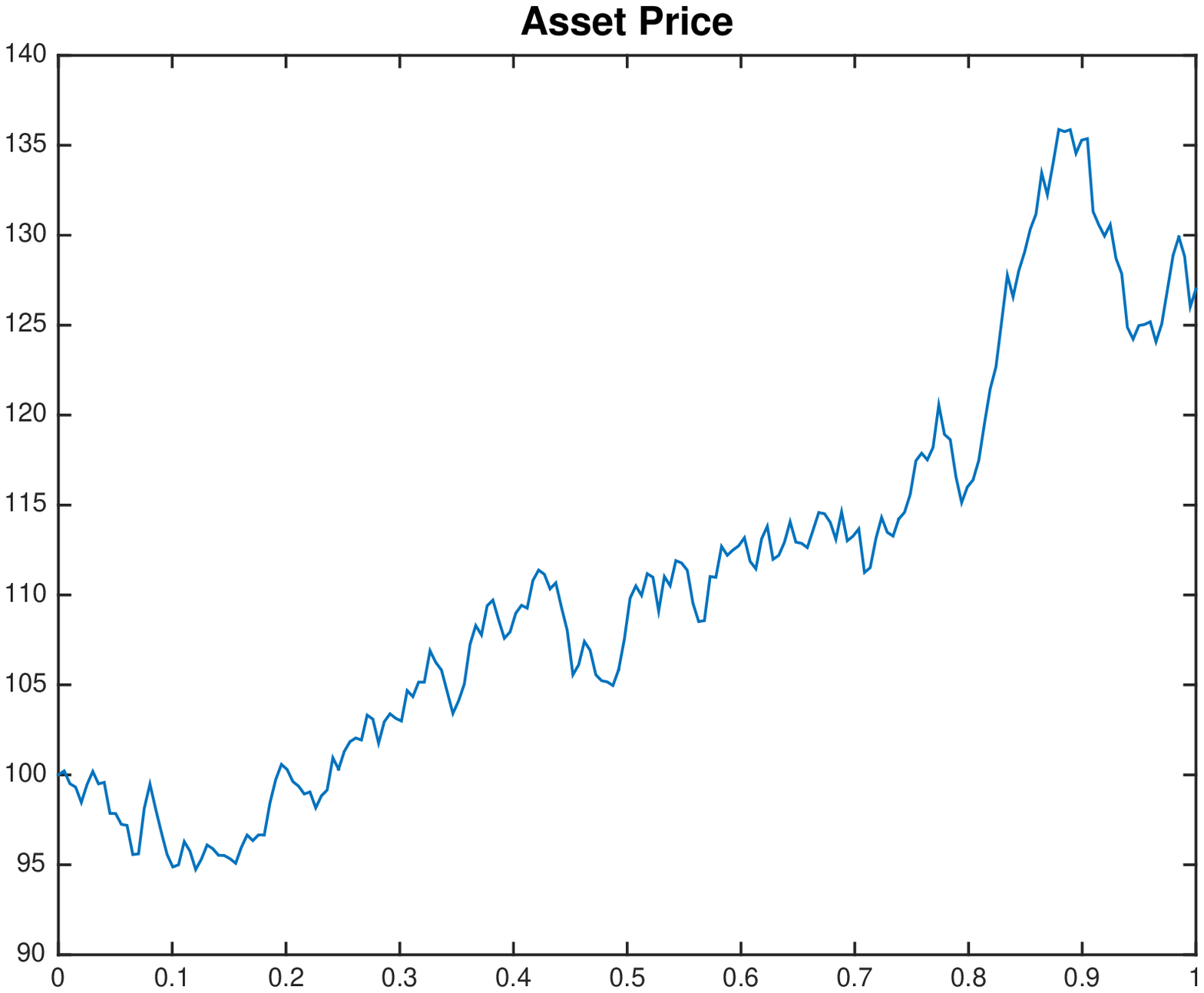}&
	\includegraphics[width=2.8in]{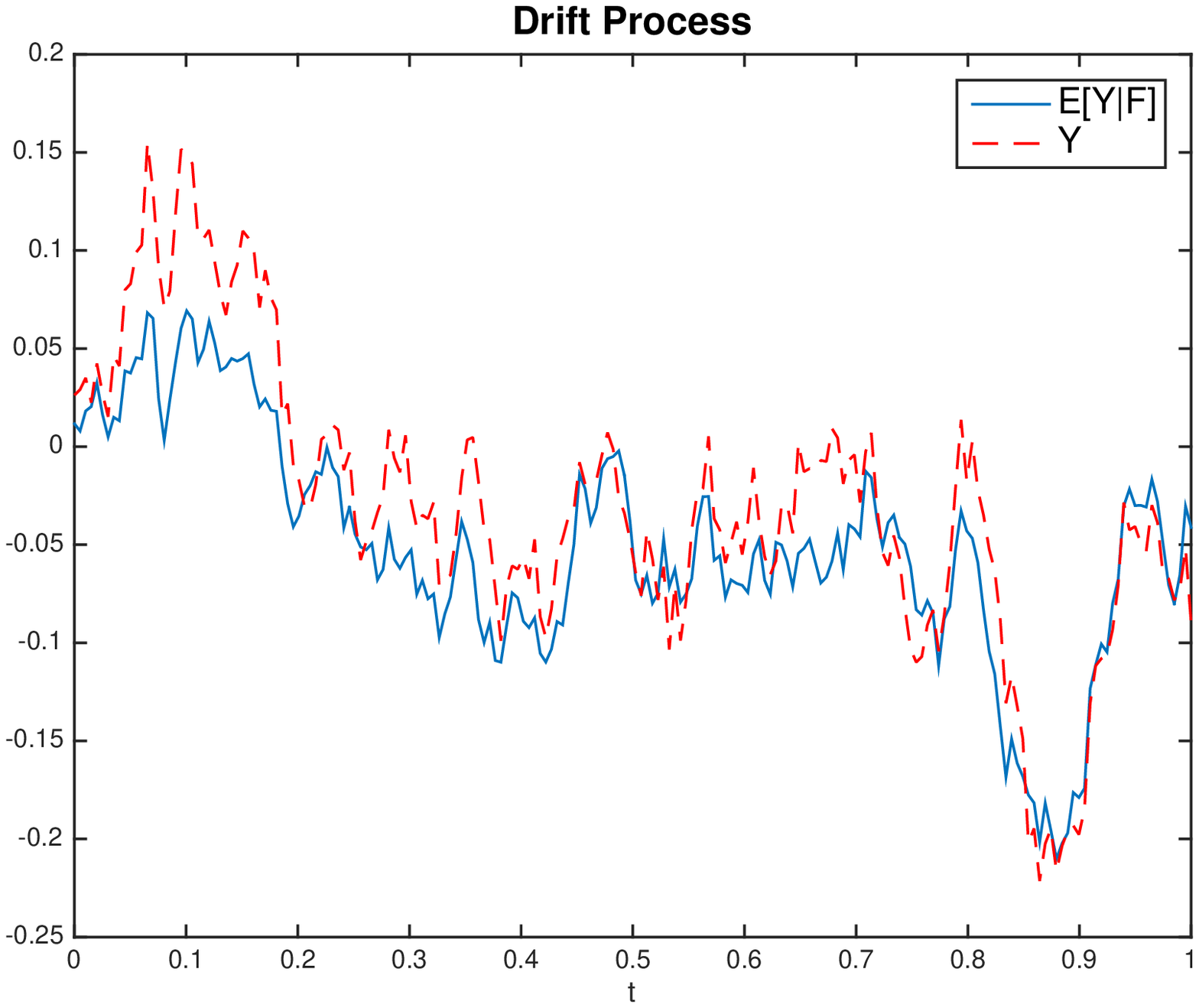}\\
	\includegraphics[width=2.8in]{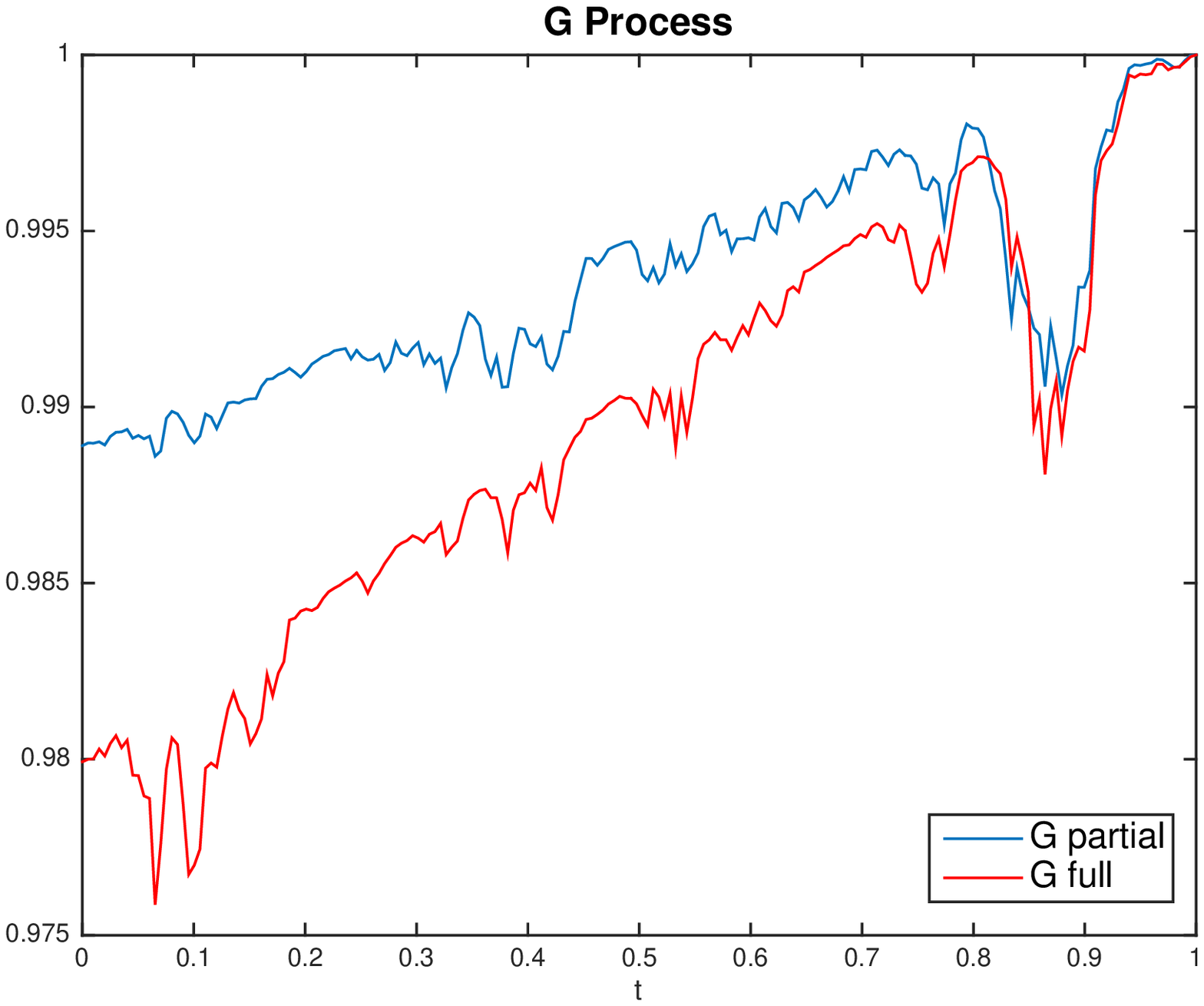}&
	\includegraphics[width=2.8in]{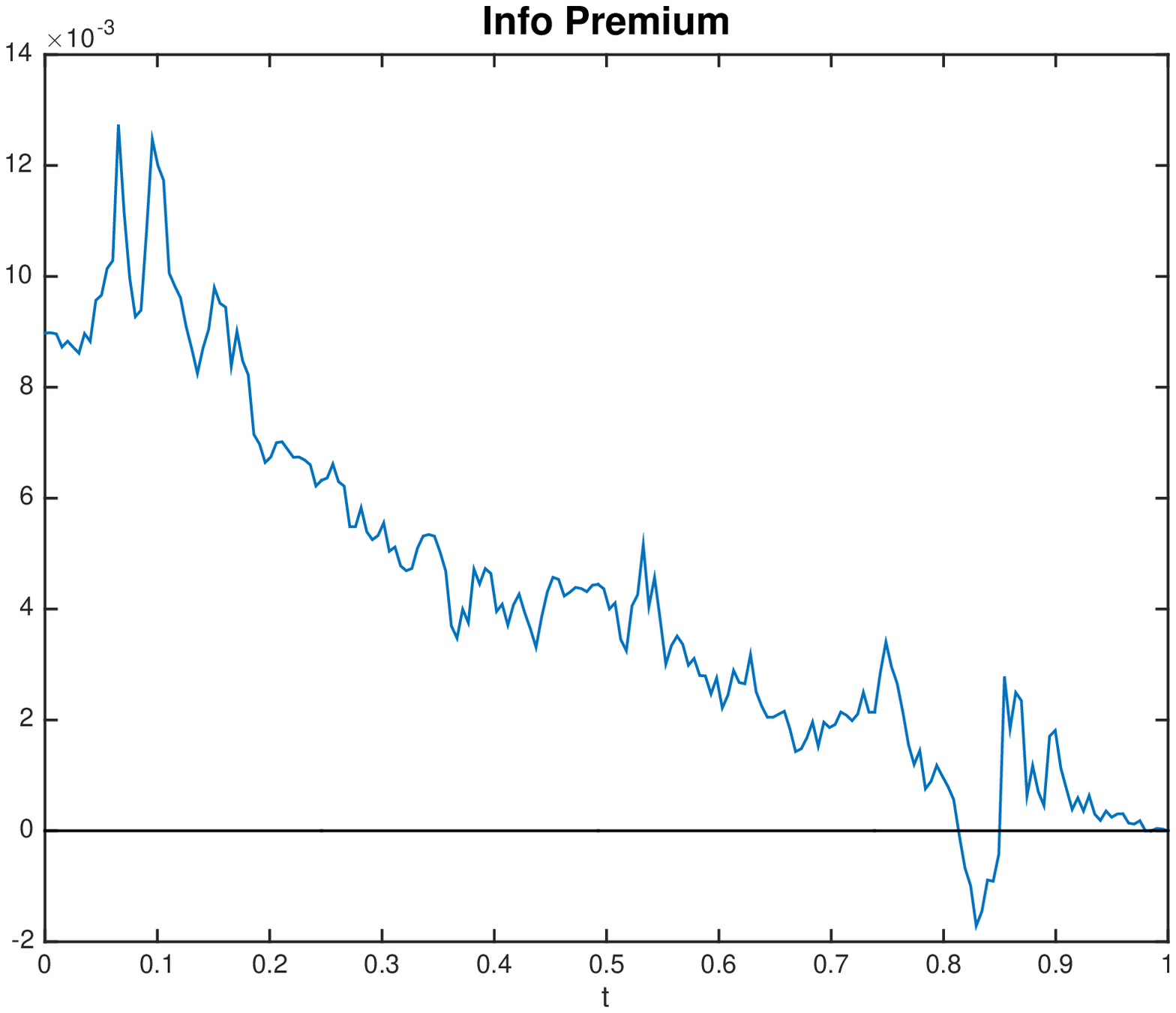}\\
\end{tabular}
\caption{\small Simulation of the linear model using the parameters of Table \ref{tab:parametersLinear}, with $\mu=r=0$. \textbf{Top Left:} The simulated asset price $S(t)$. \textbf{Top Right:} The simulated $Y(t)$ and the filter. \textbf{Bottom Left:} The BSDE solutions $G(t)$ and $G^\full(t,Y(t))$. \textbf{Bottom Right:} The difference $G(t)-G^\full(t,Y(t))$, for which there are a few times $t\in[0,T]$ when $G(t)<G^\full(t,Y(t))$ even though Proposition \ref{prop:infoPremium} has shown $G(t)\geq\E[G^\full(t,Y(t))|\Fpartial_t]$ for $\gamma>1$.}
\label{fig:simulationsLinear}
\end{figure}
\end{example}

\section{A Nonlinear Example}
\label{sec:nonlinearExample}

Recall the example from Remark \ref{remark:nonEmptyH}. Suppose that $Y(t)\in\mathbb R^1$ is a CIR process, and there is only one risky asset so that $S(t)\in\mathbb R^1$. The SDEs are
\begin{align}
\label{eq:nonlinearExample}
\frac{dS(t)}{S(t)} &= c\sqrt{Y(t)}dt+\sigma \left(\sqrt{1-\rho^2}dW(t)+\rho dB(t)\right)\\
\label{eq:nonlinearExample_dY}
dY(t)&=\kappa(\bar Y- Y(t))dt+ a\sqrt{Y(t)} dB(t)\ ,
\end{align}
with $0<a^2\leq 2\kappa\bar Y$, $\rho\in(-1,1)$, $c\in\mathbb R$, and $\bar Y>0$ being the long-term level of $Y(t)$. The wealth process is
\begin{align*}
\frac{d\Xpi(t)}{\Xpi(t)}&= rdt+\pi(t)\left(\frac{dS(t)}{S(t)}-rdt\right)\\
&=\Big(c\pi(t)\sqrt{Y(t)}+r(1-\pi(t))\Big)dt+\pi(t)\sigma \left(\sqrt{1-\rho^2}dW(t)+\rho dB(t)\right)\ . 
\end{align*}
In this example take $\gamma>1$ to avoid nirvana situations. For simplicity take $r=0$ and $\rho = 0$ so that the model is affine.

\subsection{Full Information}
The value function for power utility has an explicit solution. Similar to the fully-informed investor in the linear example of Section \ref{sec:nirvana}, it is shown in \cite{zariphopoulou2001} for ansatz 
\[V^\full(t,x,y) = U(x)G^\full(t,y)\ ,\]
that $G$ solves the PDE (in this case for $\rho =0$)
\begin{align*}
G_t^\full +\frac{ a^2y}{2}G_{yy}^\full+\kappa (\bar Y-y)G_y^\full+\frac{c^2(1-\gamma)}{2\gamma\sigma^2}yG^\full&=0\\
G^\full\Big|_{t=T}&=1\ .
\end{align*}
Using the ansatz,
\[G^\full(t,y) = \exp\Big(A(t)y+H(t)\big)\ ,\]
the solution uses functions $A$ and $H$ satisfying the equations
\begin{align}
\label{eq:aPrime2}
A'(t)+\frac{a^2}{2}A^2(t)-\kappa A(t)+\frac{c^2(1-\gamma)}{2\gamma\sigma^2}&=0\\
\label{eq:H_Primes2}
H'(t)+ \kappa\bar YA(t)&=0\ .
\end{align}
Similar to Example \ref{ex:linearExample}, equations \eqref{eq:aPrime2} and \eqref{eq:H_Primes2} have explicit solutions,
\begin{align*}
A(t) &= A_-\frac{1-e^{-D(T-t)}}{1-\frac{A_-}{A_+}e^{-D(T-t)}}\\
H(t)&=\kappa \bar YA_-\left((T-t)-\frac{2}{a^2A_-}\log\left(\frac{A_+-A_-e^{-D(T-t)}}{A_+-A_-}\right)\right)\ ,
\end{align*}
where 
\begin{align*}
A_\pm &= \frac{\kappa\pm\sqrt{\kappa^2-\frac{c^2(1-\gamma)}{\gamma\sigma^2}a^2}}{a^2}\\
D &=\sqrt{\kappa^2-\frac{c^2(1-\gamma)}{\gamma\sigma^2}a^2}\ .
\end{align*}

\subsection{Partial Information}

Direct simulation of $Z(t)$ from equation \eqref{eq:Z} allows for a numerical approximation of the first component of the solution to BSDE \eqref{eq:partialInfo_BSDE}. Namely, an approximation of $\xi$ from the dual value function in \eqref{eq:Vstar_rep} with a Monte Carlo expectation, where the expectation to be approximated is simplified using It\^o's lemma as done in the proof of Proposition \ref{prop:sup_xiK_bound},
\begin{align*}
\xi(t)&=\E\left[\left(\frac{Z(T)}{Z(t)}\right)^{-\frac{1-\gamma}{\gamma}}\Big|\Fpartial_t\right]\\
&=\E\left[\exp\left(\frac{(1-\gamma)c^2}{2\gamma^2\sigma^2}\int_t^T\widehat Y(u)du\right)\Big|\Fpartial_t\right]\\
&\approx \frac1N\sum_{\ell=1}^N\exp\left(\frac{(1-\gamma)c^2}{2\gamma^2\sigma^2}\int_t^T\widehat Y^{(\ell,t)}(u)du\right) \ ,
\end{align*}
for sample size $N$, where for each $\ell$ there is an independent sample $(\widehat Y^{(\ell,t)}(u))_{u\in[t,T]}$ conditional on $\Fpartial_t$. Samples of $\widehat Y^{(\ell,t)}(t)$ are obtained from forward sequential Monte Carlo (SMC) and computation of the filter. To compute the filter, one can either compute a particle filter for each trajectory of $S$, or one can approximate $Y$ with a finite-state Markov chain and then use a repeated application of Bayes rule over a small time step. The latter approach has been taken here because it is both fast and accurate (i.e., because $Y$ does not have a heavy tail) for this model. Note that simulation of $\xi(t)$ is like a branching process: for two times $t,t+\Delta t\in[0,T]$ the particles initialized at time $t$ cannot be reused for the simulation of particles to be initialized at $t+\Delta t$ (see \cite{touzi2014} for more on branching processes' relation to BSDEs).

The optimal value function is
\[V(t,x)= U(x) \xi(t)^\gamma\ ,\]
and so the information premium is seen by comparing $G(t)=\xi(t)^\gamma$ to $G^\full(t,Y(t))$. Using Jensen's inequality in the same manner as in Remark \ref{remark:nonEmptyH}, Condition \ref{cond:novikov} (Novikov) is satisfied if
\[\frac{c^2T}{2\sigma^2}<\frac{2\kappa}{a^2}\ ,\]
in which case $Z(t)$ is a true $\Fpartial_t$ martingale.

Figure \ref{fig:simulations} shows a comparison of full and partial information for realizations obtained with parameters from Table \ref{tab:parameters}. Noteworthy aspects in this example are: 
\begin{itemize}
\item Compared to the filters in Figure \ref{fig:simulationsLinear}, the filters in Figure \ref{fig:simulations} do not do as good of a job tracking the hidden drift $\sqrt{Y(t)}$. The reason is because the linear example has a strong correlation of $-.8$, which increases the \textit{signal-to-noise ratio (SNR)}. In contrast, this nonlinear example has zero correlation and hence much lower SNR.
\item Compared to the coefficients $G$ shown in Figure \ref{fig:simulationsLinear}, the partial-information $G$ in Figure \ref{fig:simulations} is smoother. This is due to the lack of tracking in the filer (see previous bullet point). 
\item The coefficients $G$ in Figure \ref{fig:simulations} have steeper slopes than those in Figure \ref{fig:simulationsLinear}. This is because the filter $\widehat Y(t)$ is almost constant in time, $\widehat Y(t) \approx \bar Y= .05$, which means positive average portfolio return, and $G(t)\approx \exp\left(\frac{(1-\gamma)\bar Y_t^2}{2\gamma\sigma^2}\right)$. Comparatively, the linear example has parameters chosen so that $\widehat Y(t)\approx 0$ for a net-zero average return. In other words, the parameters are such that the Sharpe ratios are higher in this nonlinear example.
\item The optimal $\pi$ for partial information has not been computed because no numerical method was proposed. 
\end{itemize}
This fourth point is a reiteration of a comment in Remark \ref{remark:partialInfoBSDE}, where it was pointed out that $\theta$ from the martingale representation is difficult to compute and requires a special numerical method; a numerical method for $\alpha$ would accomplish as much. In general, these bullets points highlight possible topics for future exploration in the area of numerical BSDE.

Finally, it should be pointed out that the information premium is low in this nonlinear example, which is seen by observing that $G$ and $G^\full$ are close together in Figure \ref{fig:simulations}. The reason for this is because $\bar Y > r= 0$ with Sharpe ratios $\hat Y_t /\sigma\approx\bar Y/\sigma = .33$ and equal to $1.92$ for $\sigma = .15$ and $\sigma=.026$, respectively, and so both the partially and fully-informed investors are placing a significant portion of their wealth into the risky asset. Comparatively, the linear case of Example \ref{ex:linearExample} would have a more pronounced premium if $\rho=0$; this would be the case because of low SNR, in which case the filter remains close to zero (i.e., $\hat Y_t \approx 0$ for all $t$) causing the Sharpe ratio to be very close zero, and therefore the partially-informed investor would invest very little in the risky asset and experience none of the improved portfolio returns.

\begin{figure}[htbp]
\centering
\begin{tabular}{cc}
\includegraphics[width=2.8in]{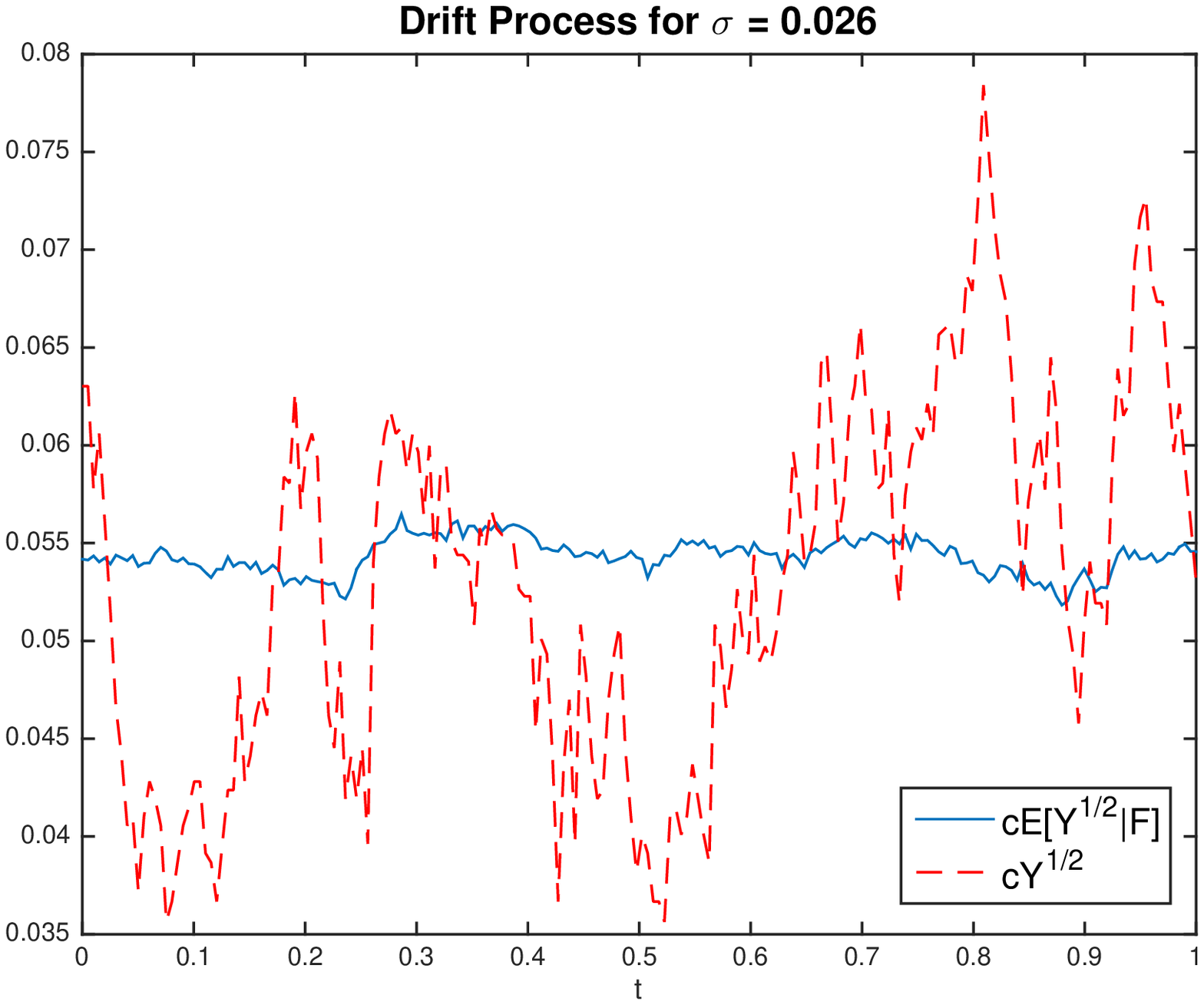}&
	\includegraphics[width=2.8in]{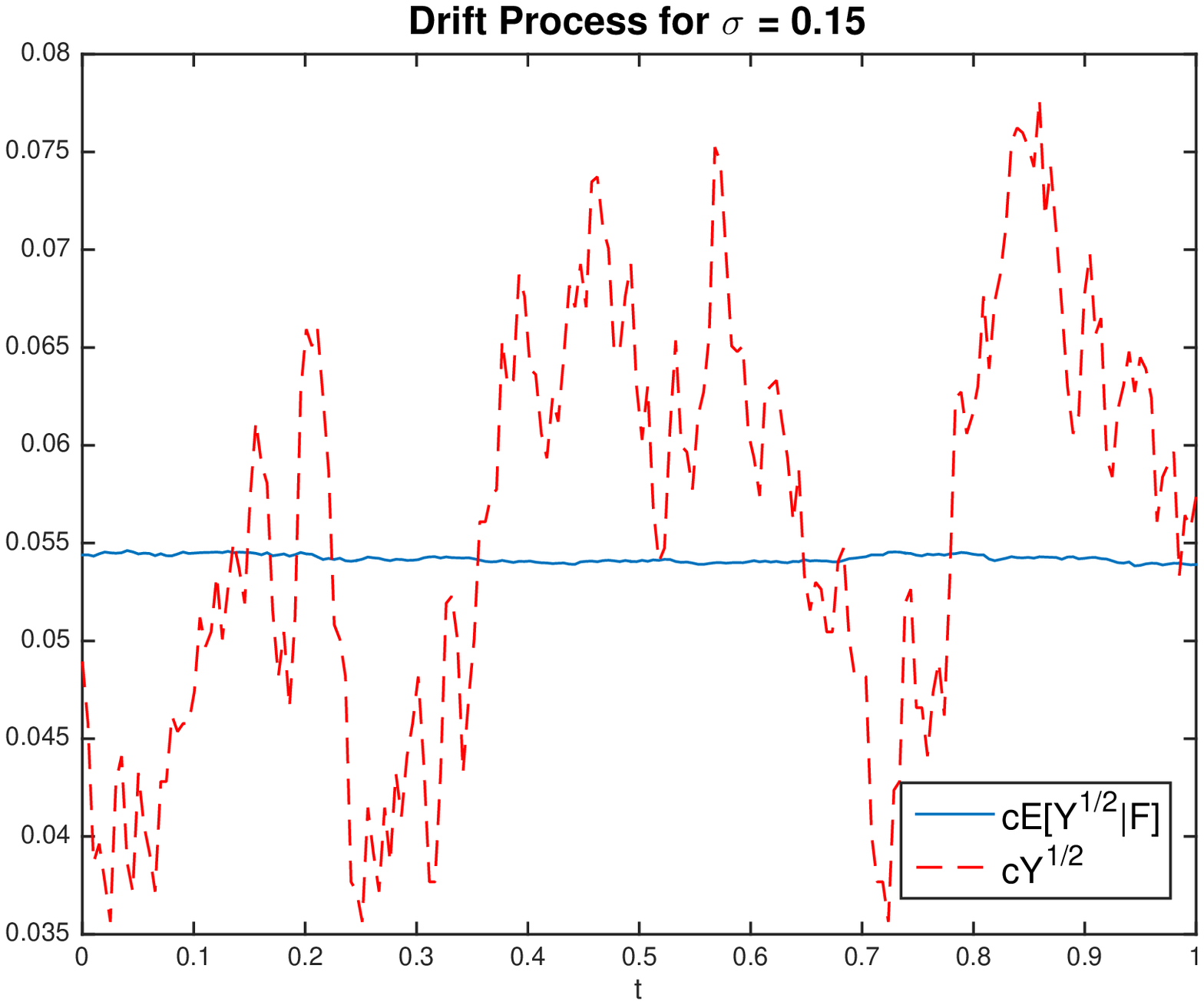}\\
	\includegraphics[width=2.8in]{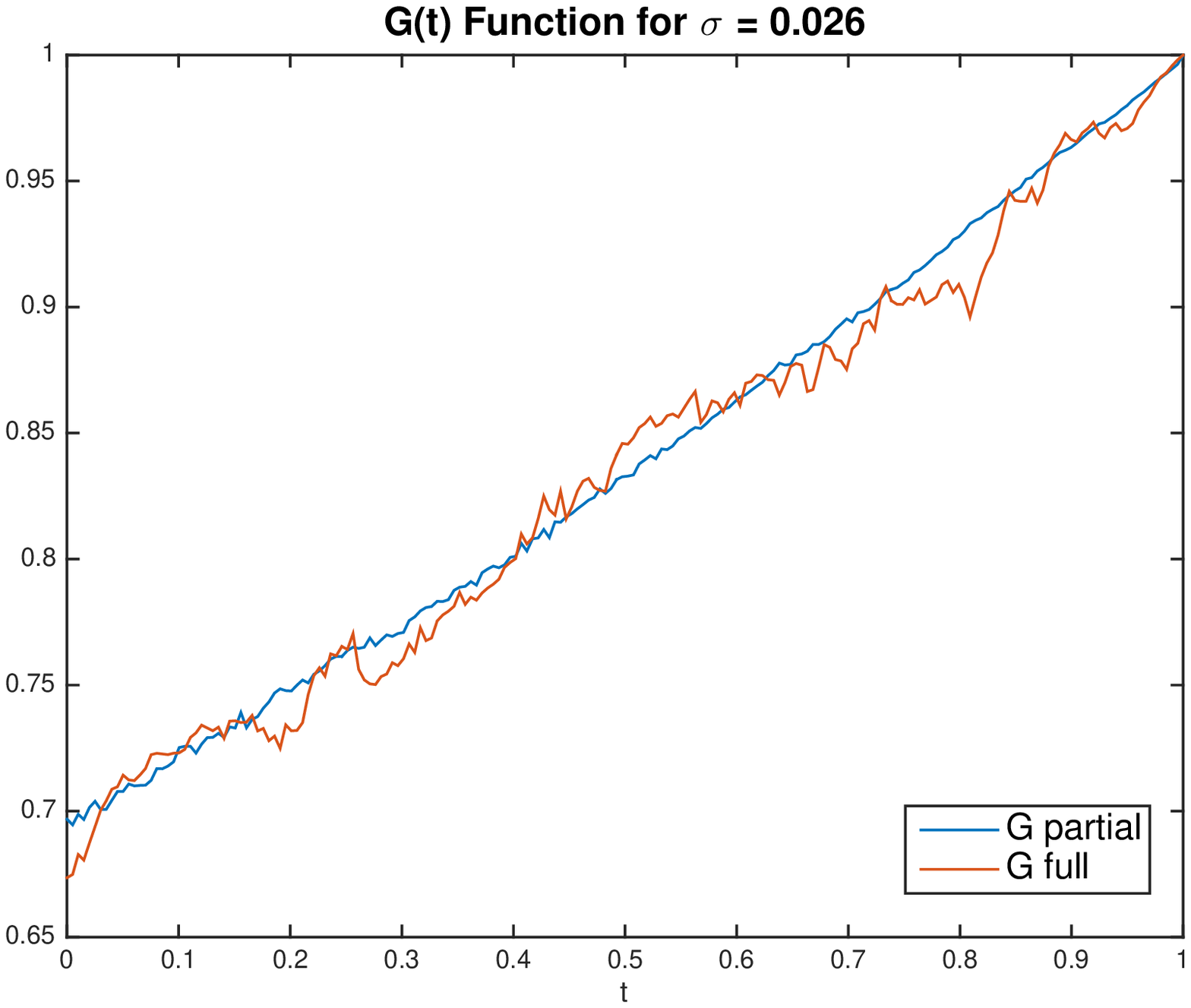}&
	\includegraphics[width=2.8in]{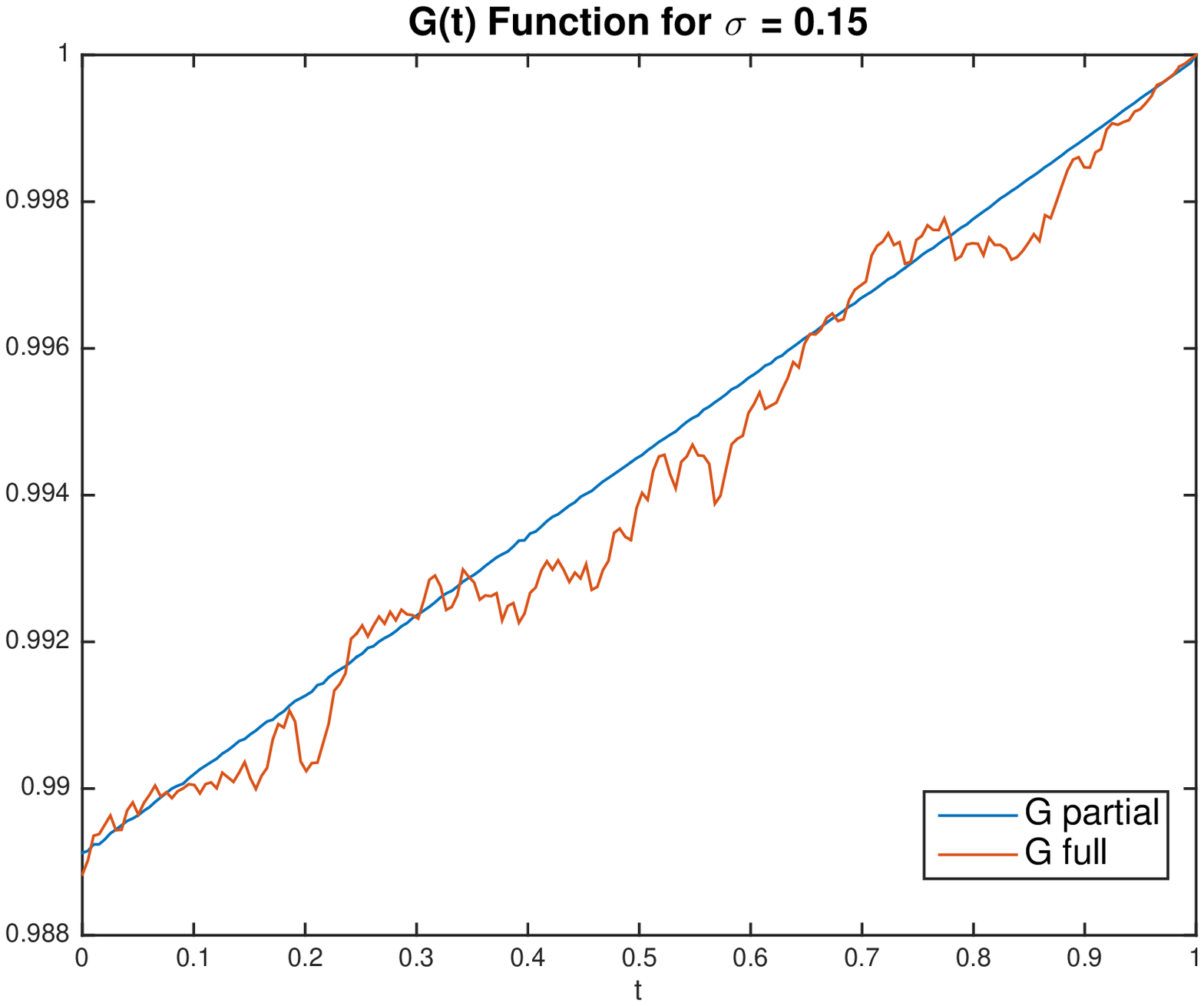}
\end{tabular}
\caption{\small A low-noise simulation with $\sigma=.026$ and a high-noise example with $\sigma=.15$. \textbf{Top Left:} The simulated low-noise $Y(t)$ and its filter. \textbf{Top Right:} The simulated high-noise $Y(t)$ and its filter. \textbf{Bottom Left:} The low-noise $G(t)$'s with sample size $N=10$. \textbf{Bottom Right:} A high-noise $G(t)$'s with sample size $N= 10 $.}
\label{fig:simulations}
\end{figure}
\begin{table}[h!]
\centering
\begin{tabular}{cccccc}
\multicolumn{6}{c}{Parameter Values}\\
 $c$& $\kappa$& $\bar Y$& $a$& $ T$& $\gamma$\\
\hline
.25& 8& .05& .4& 1& 1.2
\end{tabular}
\vspace{.2cm}
\caption{\small Parameter values for the nonlinear example in equations \eqref{eq:nonlinearExample} and \eqref{eq:nonlinearExample_dY}. Different values of $\sigma$ are tested, namely a low value of $.026$ and a high value of $.15$. Note the if the value of $\sigma$ is too low then Condition \ref{eq:novikov} will fail and it is possible for $Z(t)$ to have $\mathbb E[Z(T)/Z(t)|\Fpartial_t]<1$.}
\label{tab:parameters}
\end{table}

\section{Summary \& Conclusions}
\label{sec:concludes}

Investment with filtering under partial information is a non-Markov control problem, but also has some simplicity because the model can be reduced to a complete market. For the case of investors with a power utility function, the dual value function is shown to be the solution to a BSDE. The optimal strategy is also shown to be expressed in terms of the solution to the BSDE, and can be broken into two components: a myopic component where point estimate of $Y_t$ is inserted into the standard Merton problem, and a hedging term due to stochastic drift. In comparison with full information, the information premium is defined to be the expected loss in utility (from the perspective of the partially informed investor), and quantified in terms of the coefficients of the BSDEs.

A possible direction for future work on this problem is on the development of numerical methods for solving the partial-information BSDE; the proposed Monte Carlo approximation of Section \ref{sec:nonlinearExample} is a small step towards this goal. Monte Carlo and particle filtering will be useful, but there is likely to be an exponentially growing number of states taken by the filter, and so further innovation is needed. 

\appendix
\section{Proofs for Section \ref{sec:partialInfo_BSDE}}
\label{app:partialInfo_BSDE}
\begin{proposition}
\label{prop:thetaIntegrability}
If Condition \ref{cond:mgf_Z} holds, then $\theta\in\Hspace_T^2(\Pspace_\dd)$ where $\theta$ is the martingale representation in \eqref{eq:M-martingaleRep}.
\end{proposition}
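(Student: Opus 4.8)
The plan is to leverage the strict positivity of $M$ through a logarithmic transform. Condition \ref{cond:mgf_Z} makes $M$ square integrable, so the representation \eqref{eq:M-martingaleRep}, which reads $dM(t)=M(t)\theta(t)^\tr d\zeta(t)$, is valid and the It\^o isometry gives $\E\int_0^T M(u)^2\|\theta(u)\|^2\,du=\E M(T)^2-M(0)^2<\infty$ at once. The whole difficulty is to discard the weight $M(u)$, which may become small, and thereby upgrade this to $\theta\in\Hspace_T^2(\Pspace_\dd)$. Since $M(t)=\E[Z(T)^{-\frac{1-\gamma}{\gamma}}|\Fpartial_t]>0$ almost surely and $M$ is continuous, $\inf_{t\in[0,T]}M(t)>0$ a.s., so $\int_0^T\|\theta\|^2\,du\leq(\inf_t M(t))^{-2}\int_0^T M^2\|\theta\|^2\,du<\infty$ a.s.; it is its \emph{integrability} that must be proved.

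First I would apply It\^o's formula to $\log M$, which is licit because $M>0$, giving
\begin{equation}
\nonumber
d\log M(t)=\theta(t)^\tr d\zeta(t)-\tfrac12\|\theta(t)\|^2\,dt.
\end{equation}
With the localizing times $\tau_n=\inf\{t\geq0:\int_0^t\|\theta\|^2\,du\geq n\}\wedge T$, the stopped integral $\int_0^{t\wedge\tau_n}\theta^\tr d\zeta$ is a true martingale (its bracket is bounded by $n$), so evaluating at $\tau_n$ and taking expectations yields
\begin{equation}
\nonumber
\E\int_0^{\tau_n}\|\theta(u)\|^2\,du=2\log M(0)-2\,\E\log M(\tau_n).
\end{equation}
Because $\int_0^T\|\theta\|^2\,du<\infty$ a.s. we have $\tau_n\nearrow T$, so by monotone convergence the proposition reduces to bounding $\E\log M(\tau_n)$ from below uniformly in $n$.

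This uniform lower bound --- equivalently, the control of how small $M$ can become --- is the main obstacle, and I would obtain it from the explicit law of $M$. Writing $\lambda(u)=\sigma^{-1}(\hat h(u)-\mathbf{r})$, conditional Jensen applied to the convex map $-\log$ together with $\log Z(T)=-\tfrac12\int_0^T\|\lambda\|^2\,du-\int_0^T\lambda^\tr d\zeta$ gives, for every $t$,
\begin{equation}
\nonumber
\log M(t)\geq\tfrac{1-\gamma}{\gamma}\Big(\tfrac12\,\E\big[\textstyle\int_0^T\|\lambda\|^2\,du\,\big|\,\Fpartial_t\big]+\int_0^t\lambda^\tr d\zeta\Big),
\end{equation}
where $\lambda\in\Hspace_T^2(\Pspace_\dd)$ since $\|\lambda\|^2\leq\tfrac1\epsilon\|\hat h-\mathbf{r}\|^2$ and $\E\int_0^T\|\hat h\|^2\,du<\infty$ by the Novikov condition \eqref{eq:novikov}; in particular $\int\lambda^\tr d\zeta$ is a martingale. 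Evaluating at $t=\tau_n$, taking expectations, and using optional sampling (the stochastic integral has zero mean at $\tau_n$, the tower property discards the conditioning) produces the $n$-independent estimate $\E\log M(\tau_n)\geq\frac{1-\gamma}{2\gamma}\E\int_0^T\|\lambda\|^2\,du$. Substituting into the identity above and letting $n\to\infty$ gives $\E\int_0^T\|\theta\|^2\,du\leq2\log M(0)-\frac{1-\gamma}{\gamma}\E\int_0^T\|\lambda\|^2\,du<\infty$, which is the claim.
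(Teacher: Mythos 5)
Your proposal is correct and takes essentially the same route as the paper's own proof: both rest on the identity $\E\int_0^{\tau}\|\theta(u)\|^2du=-2\,\E\log\bigl(M(\tau)/M(0)\bigr)$ obtained by localizing the stochastic integral in $d\log M$, then conditional Jensen's inequality applied to $M(\cdot)=\E\bigl[Z(T)^{-\frac{1-\gamma}{\gamma}}\big|\Fpartial_\cdot\bigr]$ together with the zero-mean property of $\int\lambda^\tr d\zeta$, arriving at the identical final bound $2\log M(0)-\frac{1-\gamma}{\gamma}\E\int_0^T\|\sigma^{-1}(\hat h(t)-\mathbf{r})\|^2dt$. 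The only cosmetic differences are your explicit choice of localizing sequence (with the accompanying a.s.-finiteness argument for $\int_0^T\|\theta\|^2du$) versus the paper's generic one, and monotone convergence in place of the paper's Fatou/liminf step at the end.
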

\begin{proof}
A stochastic integral is a local martingale, so there is an increasing family of stopping times $(\tau_j)_{j=1,2,3,\dots}$ such that $\tau_j\nearrow\infty $ almost surely and $\int_0^{t\wedge\tau_j}\theta(u)^\tr d\zeta(u)$ is a true martingale. Then
\begin{align*}
&\E \int_0^{T\wedge\tau_j}\|\theta(t)\|^2 dt\\
& =-2\E\left[ -\frac12\int_0^{T\wedge\tau_j}\|\theta(t)\|^2 dt+\int_0^{T\wedge\tau_j}\theta(t)^\tr d\zeta(t)\right]\\
&=-2\E\log\left( M(T\wedge\tau_j)\Big/M(0)\right)\qquad\qquad\hbox{(because $dM(t) = M(t)\theta(t)^\tr d\zeta(t)$ in \eqref{eq:M-martingaleRep})}\ ,\\
&=-2\E\log\left( \E\left[Z(T)^{-\frac{1-\gamma}{\gamma}}\Big|\Fpartial_{T\wedge\tau_j}\right]\Big/M(0)\right)\\
& \leq-2\E\log\left( Z(T)^{-\frac{1-\gamma}{\gamma}}\right)+2\log M(0)\qquad\qquad\qquad\hbox{(Jensen's inequality)}\\
&=\frac{2(1-\gamma)}{\gamma}\E\log Z(T)+2\log M(0)\\
&=-\frac{1-\gamma}{\gamma}\E\int_0^T\left\|\sigma^{-1}(\hat h(t)-\mathbf{r} )\right\|^2 dt+2\log M(0)\\
&<\infty\ .
\end{align*}
This implies $\E\int_0^T\|\theta(u)\|^2 du\leq\liminf_j\E\int_0^{T\wedge\tau_j}\|\theta(u)\|^2 du<\infty$.
\end{proof}
\begin{proposition}
\label{prop:xiAnd_xim_paths}
Let $(\xi,\alpha)\in\Sspace_T^2(\Pspace_1)\times \Hspace_T^2(\Pspace_\dd)$ be a solution to \eqref{eq:partialInfo_BSDE}, and let $(\xi_K,\alpha_K)\in\Sspace_T^2(\Pspace_1)\times \Hspace_T^2(\Pspace_\dd)$ be the unique solution in $\Sspace_T^2(\Pspace_1)\times \Hspace_T^2(\Pspace_\dd)$ for the bounded BSDE in \eqref{eq:partialInfo_boundedBSDE} (in fact $\xi_K\in\Sspace_T^\infty(\Pspace_1)$ as shown in \cite{kobylanski2000}). For the stopping time 
\[\tau_K=\inf\left\{t>0~~\hbox{s.t}~~\|\hat h(t)\|\geq K\right\}\ ,\]
the solutions are equal for all $\omega\in\Omega$ such that $\tau_K\geq T$. That is, $(\xi(t)-\xi_K(t))\indicator{\tau_K>T}=0$ for all $t\in[0,T]$, and $(\alpha(t)-\alpha_K(t))\indicator{\tau_K>T}=0$ for all $t\in[0,T]$.
\end{proposition}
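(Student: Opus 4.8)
The plan is to exploit the single structural fact behind the statement: on the event $\{\tau_K\geq T\}$ the truncated filter agrees with the true filter throughout $[0,T]$, so the two BSDEs carry identical coefficients along those paths, and then to promote this coincidence of \emph{coefficients} into coincidence of the \emph{solution pairs} by a localized stability/uniqueness argument. First I would record the central observation: for $t<\tau_K$ one has $\|\hat h(t)\|<K$, hence $\hat h_K(t)=\hat h(t)$, and therefore $\beta_K(t,\cdot,\cdot)=\beta(t,\cdot,\cdot)$ as functions of $(\alpha,\xi)$ whenever $t<\tau_K$. In particular, on $\{\tau_K\geq T\}$ the drivers agree for every $t\in[0,T]$, and both equations carry the same terminal datum $\xi(T)=\xi_K(T)=1$.

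Next I would study the difference $\bar\xi=\xi-\xi_K$, $\bar\alpha=\alpha-\alpha_K$, which solves $-d\bar\xi(t)=\delta\beta(t)\,dt-\bar\alpha(t)^\tr d\zeta(t)$ with $\bar\xi(T)=0$, and split the driver difference as $\delta\beta(t)=R(t)+\big[\beta_K(t,\alpha,\xi)-\beta_K(t,\alpha_K,\xi_K)\big]$, where $R(t):=\beta(t,\alpha,\xi)-\beta_K(t,\alpha,\xi)$ vanishes for $t<\tau_K$ and the bracketed term is bounded by $C_K\big(|\bar\xi(t)|+\|\bar\alpha(t)\|\big)$ by the uniform Lipschitz constant established in the proof of Theorem \ref{thm:partialInfo_BSDE}. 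Stopping at $S:=\tau_K\wedge T$, the difference solves a Lipschitz (indeed, after a standard linearization, linear) BSDE on $[0,S]$ with a source term that vanishes on $[0,S)$ and terminal value $\bar\xi(S)$; an It\^o expansion of $|\bar\xi(t\wedge S)|^2$ together with Gronwall's inequality yields $\E\sup_{t\leq S}|\bar\xi(t)|^2+\E\int_0^S\|\bar\alpha(t)\|^2dt\leq C\,\E|\bar\xi(S)|^2$. On $\{\tau_K\geq T\}$ one has $S=T$ and $\bar\xi(S)=\bar\xi(T)=0$, which is the reason to expect the first component to collapse there; vanishing of $\bar\alpha$ on the same event should then follow from uniqueness of the martingale representation \eqref{eq:M-martingaleRep}, exactly as in the closing paragraph of the proof of Theorem \ref{thm:partialInfo_BSDE}.

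The hard part will be making this conclusion genuinely \emph{pathwise} on $\{\tau_K\geq T\}$ rather than merely in an $L^2$ average, because $\{\tau_K\geq T\}=\{\sup_{t\leq T}\|\hat h(t)\|<K\}$ is $\Fpartial_T$-measurable but not $\Fpartial_t$-measurable for $t<T$, so one cannot simply carry the indicator inside the conditional expectations and martingale representations that define $\xi$, $\xi_K$, $\theta$, and $\theta_K$; the terminal value $\bar\xi(S)$ is nonzero off $\{\tau_K\geq T\}$, and the global a priori estimate mixes scenarios across the complementary event. My plan to circumvent this is to argue entirely on the localized stochastic interval $t\leq\tau_K\wedge T$, where the two equations coincide as random fields and the truncated problem is Lipschitz, and to apply the stability estimate to the \emph{stopped} processes $\xi(\cdot\wedge\tau_K)$ and $\xi_K(\cdot\wedge\tau_K)$ so that the coincidence of terminal data precisely on $\{\tau_K\geq T\}$ forces $\bar\xi\,\indicator{\tau_K\geq T}\equiv 0$ and hence $\bar\alpha\,\indicator{\tau_K\geq T}\equiv 0$. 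I would take particular care to verify that the localized uniqueness controls the stopped solutions, and that the passage back to the full horizon $[0,T]$ is legitimate exactly on the set where $\tau_K\geq T$.
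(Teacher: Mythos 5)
Your reduction is fine as far as it goes: on $\{\tau_K\geq T\}$ the drivers coincide, the difference $\bar\xi=\xi-\xi_K$, $\bar\alpha=\alpha-\alpha_K$ solves a BSDE whose source term $R(t)=\beta(t,\alpha(t),\xi(t))-\beta_K(t,\alpha(t),\xi(t))$ vanishes on $[0,\tau_K\wedge T)$, and an It\^o expansion plus Gronwall gives
\[
\E\sup_{t\leq S}|\bar\xi(t)|^2+\E\int_0^S\|\bar\alpha(t)\|^2dt\ \leq\ C_K\,\E\,|\bar\xi(S)|^2\ ,\qquad S=\tau_K\wedge T\ .
\]
The gap is that this is where your argument ends, not where it can be completed. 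The right-hand side equals $\E\left[|\xi(\tau_K)-\xi_K(\tau_K)|^2\indicator{\tau_K<T}\right]$, which has no reason to vanish for fixed $K$; and since both sides are expectations over all of $\Omega$, nothing in this inequality is localized to the event $\{\tau_K\geq T\}$. You flag exactly this as ``the hard part,'' but the plan you offer to circumvent it --- apply the same stability estimate to the stopped processes --- is a restatement of the goal rather than a mechanism: any bound obtained by conditioning on $\Fpartial_t$, or by taking expectations of It\^o expansions up to a stopping time, necessarily averages over future continuations lying in both $\{\tau_K\geq T\}$ and its complement, because these two events are indistinguishable in $\Fpartial_t$ for $t<T$. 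This is the substantive issue, not a technicality: unlike a forward SDE, a BSDE solution at time $t$ is pinned down by conditional expectations over all continuations of the path, so agreement of the realized coefficients on $[0,\tau_K\wedge T]$ does not by itself localize the solutions --- in particular the martingale increment $\int_t^T\bar\alpha(u)^\tr d\zeta(u)$ has no reason to vanish on $\{\tau_K\geq T\}$. No quantitative $L^2$ stability estimate of the type you set up can yield a pathwise identity on a set that is only $\Fpartial_T$-measurable.

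The paper's own proof is structurally different and uses no estimate at all: it argues by contradiction via pathwise pasting. Writing $\mathcal O=\{\tau_K\geq T\}$, if $(\xi,\alpha)\neq(\xi_K,\alpha_K)$ somewhere on $\mathcal O$, one defines $(\tilde\xi_K,\tilde\alpha_K)$ to equal $(\xi,\alpha)$ on $\mathcal O$ and $(\xi_K,\alpha_K)$ on $\mathcal O^c$; since the drivers agree on $\mathcal O$, the pasted pair is again a solution of the truncated equation \eqref{eq:partialInfo_boundedBSDE}, contradicting uniqueness of its solution in $\Sspace_T^2(\Pspace_1)\times\Hspace_T^2(\Pspace_\dd)$. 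There, uniqueness of the Lipschitz problem is exploited as a rigid constraint on whole paths rather than through a Gronwall bound, which is what allows the terminal-measurable event to enter at all. It is worth noting that adaptedness of the pasted pair --- indicators of $\mathcal O$ are $\Fpartial_T$- but not $\Fpartial_t$-measurable --- is precisely the measurability subtlety you isolated; the paper passes over it in silence, and your write-up has the merit of making the obstruction explicit. But as submitted, your proposal establishes the stopped-interval estimate and then asserts, rather than derives, the claimed pathwise conclusion, so it does not prove the proposition.
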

\begin{proof} The proof is by contradiction. Letting $\mathcal O=\{\omega\in\Omega~~\hbox{s.t}~~\tau_K\geq T\}$. Suppose $(\xi,\alpha)\neq (\xi_K,\alpha_K)$ for some $\omega\in\mathcal O$. Then there is another solution to \eqref{eq:partialInfo_boundedBSDE},

\[(\tilde \xi_K,\tilde\alpha_K)=\left\{
\begin{array}{ll}(\xi,\alpha)&\hbox{for }\omega\in\mathcal O\\
(\xi_K,\alpha_K)&\hbox{for }\omega\notin\mathcal O\ ,
\end{array}\right.
\]
with $(\tilde \xi_K,\tilde\alpha_K)\neq( \xi_K,\alpha_K)$, but the solution to \eqref{eq:partialInfo_boundedBSDE} is unique. Hence there is a contradiction.
\end{proof}

\begin{proposition}
\label{prop:sup_xiK_bound}
Let $(\xi_K,\alpha_K)\in\Sspace_T^\infty(\Pspace_1)\times \Hspace_T^2(\Pspace_\dd)$ be the unique solution to the BSDE in \eqref{eq:partialInfo_boundedBSDE}. If Condition \ref{cond:mgf_Z} holds, then $\sup_{K>0}\E\sup_{t\in[0,T]}|\xi_K(t)|^2<\infty$.
\end{proposition}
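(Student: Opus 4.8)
The plan is to avoid the standard a priori $\Sspace_T^2$ estimate for the Lipschitz BSDE \eqref{eq:partialInfo_boundedBSDE}, because its constant depends on the Lipschitz modulus $C_K$ of $\beta_K$ and diverges as $K\nearrow\infty$; instead I would use an explicit probabilistic representation of $\xi_K$ that mirrors \eqref{eq:Vstar_rep}. Let $Z_K$ denote the Dol\'eans-Dade exponential \eqref{eq:Z} formed with the truncated filter $\hat h_K$ in place of $\hat h$, and set $M_K(t)=\E[Z_K(T)^{-\frac{1-\gamma}{\gamma}}|\Fpartial_t]$. Because $\hat h_K$ is bounded, $Z_K$ is a true martingale and $M_K$ is square integrable, so $M_K$ admits a martingale representation; running the It\^o computation from the proof of Theorem \ref{thm:partialInfo_BSDE} with $\hat h_K$ shows that $Z_K(t)^{\frac{1-\gamma}{\gamma}}M_K(t)$ solves \eqref{eq:partialInfo_boundedBSDE}. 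By uniqueness of the Lipschitz problem this product equals $\xi_K$, giving
\[\xi_K(t)=\E\Big[\big(Z_K(T)/Z_K(t)\big)^{-\frac{1-\gamma}{\gamma}}\ \Big|\ \Fpartial_t\Big],\qquad 0\le t\le T.\]

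With the representation in hand I would reduce the claim to an exponential moment bound. Writing $R_K(t)=\big(Z_K(T)/Z_K(t)\big)^{-\frac{1-\gamma}{\gamma}}>0$ and $R_K^*=\sup_{s\in[0,T]}R_K(s)$, conditional Jensen gives $\xi_K(t)\le\E[R_K^*|\Fpartial_t]$, and Doob's $L^2$ inequality applied to this nonnegative martingale yields $\E\sup_t|\xi_K(t)|^2\le 4\,\E (R_K^*)^2$, which is the first inequality anticipated in \eqref{eq:finiteXi}. Setting $\phi_K=\sigma^{-1}(\hat h_K-\mathbf r)$ and completing the square, $R_K(t)^2$ factors into the stochastic exponential of $\tfrac{2(1-\gamma)}{\gamma}\phi_K$ times the drift factor $\exp\big(\tfrac{(1-\gamma)(2-\gamma)}{\gamma^2}\int_t^T\|\phi_K\|^2\,du\big)$. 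Boundedness of $\phi_K$ makes that stochastic exponential a genuine (not merely local) martingale, so Cauchy--Schwarz isolates it while squaring the drift factor; combined with $\|\phi_K\|^2\le\tfrac1\epsilon\|\hat h_K-\mathbf r\|^2$ from \eqref{eq:sigmaBound} and a maximal inequality to absorb the supremum, this lands on a bound of the form
\[\E (R_K^*)^2\le C\,\E\exp\!\Big(\tfrac{2|\gamma-1||\gamma-2|}{\epsilon\gamma^2}\int_0^T\big(\|\hat h_K(t)\|^2+\|\mathbf r\|^2\big)\,dt\Big),\]
with $C$ a universal constant, the exponent being precisely that of Condition \ref{cond:mgf_Z}.

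The uniformity in $K$ then comes for free from the truncation itself: since $\|\hat h_K(t)\|\le\|\hat h(t)\|$ pointwise, the right-hand side above is dominated, for every $K$, by the same expression with $\hat h$ replacing $\hat h_K$, and this $K$-free majorant is finite by Condition \ref{cond:mgf_Z} (the deterministic $\|\mathbf r\|^2$ term contributing only a finite multiplicative factor). Hence $\sup_{K>0}\E\sup_{t\in[0,T]}|\xi_K(t)|^2<\infty$. I expect the genuine difficulty to sit in the middle step: controlling the path-supremum $R_K^*$ and certifying that the stochastic-exponential factor is a true martingale with expected value one rather than a strict local-martingale loss. This is exactly what truncation buys for each fixed $K$, since boundedness of $\hat h_K$ forces the requisite finite exponential moments and the martingale property, while the monotonicity $\|\hat h_K\|\le\|\hat h\|$ is what converts these $K$-dependent estimates into the single bound governed by Condition \ref{cond:mgf_Z}.
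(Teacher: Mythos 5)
Your first and last steps are the paper's: the probabilistic representation $\xi_K(t)=\E\big[(Z_K(T)/Z_K(t))^{-\frac{1-\gamma}{\gamma}}\,\big|\,\Fpartial_t\big]$ (which you in fact justify more carefully, via uniqueness of the Lipschitz problem, than the paper does), and the pointwise monotonicity $\|\hat h_K(t)\|\le\|\hat h(t)\|$ that converts a $K$-dependent exponential-moment bound into the single majorant of Condition \ref{cond:mgf_Z}. The genuine gap is your middle step. By dominating $\xi_K(t)\le\E[R_K^*\,|\,\Fpartial_t]$ you commit yourself to bounding $\E (R_K^*)^2$, the second moment of a path supremum, and the tools you name cannot deliver this under Condition \ref{cond:mgf_Z}. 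Writing $R_K(t)^2=\mathcal E_{t,T}\,D_t$, where $\phi_K=\sigma^{-1}(\hat h_K-\mathbf{r})$, $\mathcal E_{t,T}$ is the stochastic exponential of $\frac{2(1-\gamma)}{\gamma}\phi_K$ over $[t,T]$, and $D_t=\exp\big(\frac{(\gamma-1)(\gamma-2)}{\gamma^2}\int_t^T\|\phi_K(u)\|^2du\big)$, Cauchy--Schwarz gives $\E(R_K^*)^2\le\big(\E\sup_t\mathcal E_{t,T}^2\big)^{1/2}\big(\E\sup_t D_t^2\big)^{1/2}$. Squaring $D_t$ doubles the drift exponent, so after \eqref{eq:sigmaBound} the second factor requires $\E\exp\big(\frac{4|\gamma-1||\gamma-2|}{\epsilon\gamma^2}\int_0^T(\|\hat h(t)\|^2+\|\mathbf{r}\|^2)dt\big)<\infty$, twice the coefficient that Condition \ref{cond:mgf_Z} supplies; since $\E e^X<\infty$ does not imply $\E e^{2X}<\infty$, the outer square root does not repair this. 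Moreover, the first factor is not a universal constant: Cauchy--Schwarz sees second moments, not the martingale property, and $\sup_t\mathcal E_{t,T}=\mathcal E_{0,T}/\inf_t\mathcal E_{0,t}$ has a second moment that is itself an exponential-moment quantity, finite for each fixed $K$ by boundedness of $\hat h_K$ but with a bound that grows with $K$ --- which destroys exactly the uniformity in $K$ the proposition asserts. (You were arguably led here by display \eqref{eq:finiteXi}, whose middle term suggests that a path-supremum bound is what gets proved; the appendix proof establishes the proposition by a chain that never bounds that middle term.)

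The fix is the paper's ordering of operations, which avoids the path supremum of the ratio process altogether. First absorb the stochastic-exponential factor by a change of measure (Girsanov; legitimate because $\phi_K$ is bounded), so that $\xi_K(t)$ becomes a conditional expectation, under the changed measure, of the drift factor alone, $\exp\big(-\frac{\gamma-1}{2\gamma^2}\int_t^T\|\phi_K(u)\|^2du\big)$. Then use monotonicity of the time integral in $t$ to replace $\int_t^T$ by $\int_0^T$ (the correct direction when $\gamma\in(0,1)$; for $\gamma>1$ the proposition is immediate since the exponent is negative and $\xi_K\le1$). Undoing the change of measure, this dominates $\xi_K(t)$ by the closed martingale $\E\big[Z_K(T)^{-\frac{1-\gamma}{\gamma}}\,\big|\,\Fpartial_t\big]$, whose terminal value is independent of $t$; only now apply Doob's $L^2$ inequality, yielding $\E\sup_t|\xi_K(t)|^2\le4\,\E Z_K(T)^{-2\frac{1-\gamma}{\gamma}}$. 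The factorization of $Z_K(T)^{-2\frac{1-\gamma}{\gamma}}$ carries drift coefficient exactly $\frac{(\gamma-1)(\gamma-2)}{\gamma^2}$ --- no doubling --- so Condition \ref{cond:mgf_Z}, combined with $\|\hat h_K\|\le\|\hat h\|$, closes the bound uniformly in $K$. In short, the squaring must occur after domination by a $t$-free terminal variable (via Doob), not before (via the supremum of the ratio process); only then does the exponent land on the one Condition \ref{cond:mgf_Z} was designed to control.
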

\begin{proof}
Recall the notation $\hat h_K(t)$ and $Z_K(t)$ from the proof of Theorem \ref{thm:partialInfo_BSDE}. Applying It\^o's lemma to $Z_K(t)^{-2\frac{1-\gamma}{\gamma}}$ yields a forward SDE,
\begin{align*}
d\left(Z_K(t)^{-2\frac{1-\gamma}{\gamma}}\right)&=\frac{(\gamma-1)(\gamma-2)}{\gamma^2}Z_K(t)^{-2\frac{1-\gamma}{\gamma}}\|\sigma^{-1}(\hat h_K(t)-\mathbf{r})\|^2dt\\
&\hspace{2cm}-2\frac{\gamma-1}{\gamma}Z_K(t)^{-2\frac{1-\gamma}{\gamma}}(\hat h_K(t)-\mathbf{r})(\sigma^{-1})^\tr d\zeta(t)\ .
\end{align*}
This SDE is a true semi-martingale because $\hat h_K$ is bounded, and so using variation of constants (i.e., integrating factor) and taking expectations yields an upper bound,
\begin{align*}
\E Z_K(T)^{-2\frac{1-\gamma}{\gamma}}&=\E\exp\left(\frac{(\gamma-1)(\gamma-2)}{\gamma^2}\int_0^T\|\sigma^{-1}(\hat h_K(t)-\mathbf{r})\|^2dt\right)\\
&\leq\E\exp\left(\frac{2|\gamma-1||\gamma-2|}{\epsilon\gamma^2}\int_0^T\left(\|\hat h(t)\|^2+\|\mathbf{r}\|^2\right)dt\right)\ ,
\end{align*}
where $\epsilon>0$ is the constant from \eqref{eq:sigmaBound} that bounds $\sigma$. Now notice the solution to BSDE \eqref{eq:partialInfo_boundedBSDE} has the following martingale bound,
\begin{align*}
\xi_K(t)&=\E\left[\left(\frac{Z_K(T)}{Z_K(t)}\right)^{-\frac{1-\gamma}{\gamma}}\Big|\Fpartial_t\right]\\
&=\E\left[\exp\left(-\frac{\gamma-1}{2\gamma^2}\int_t^T\|\sigma^{-1}(\hat h_K(u)-\mathbf{r})\|^2du\right)\Big|\Fpartial_t\right]\\
&\leq\E\left[\exp\left(-\frac{\gamma-1}{2\gamma^2}\int_0^T\|\sigma^{-1}(\hat h_K(u)-\mathbf{r})\|^2du\right)\Big|\Fpartial_t\right]\\
&=\E\left[Z_K(T)^{-\frac{1-\gamma}{\gamma}}\Big|\Fpartial_t\right]\ ,
\end{align*}
for which the last quantity is a continuous martingale, with continuity because it has a martingale-type-representation like that in equation \eqref{eq:M-martingaleRep}. Hence, from the Doob maximal inequality it is seen that
\begin{align*}
\E\sup_{t\in[0,T]}|\xi_K(t)|^2&\leq \E\sup_{t\in[0,T]}\Big|\E\left[ Z_K(T)^{-\frac{1-\gamma}{\gamma}}\Big|\Fpartial_t\right]\Big|^2\\
&\leq  4\E Z_K(T)^{-2\frac{1-\gamma}{\gamma}}\\
&\leq 4\E\exp\left(\frac{2|\gamma-1||\gamma-2|}{\epsilon\gamma^2}\int_0^T\left(\|\hat h(u)\|^2+\|\mathbf{r}\|^2\right)du\right)\\
&<\infty\ ,
\end{align*}
where the second inequality is from Doob and where finiteness is given by Condition \ref{cond:mgf_Z}, and hence the supremum over $K$ is finite. 
\end{proof}
\section{Verification Lemma for Full Information}
\label{app:fullInfoVerification}
This Appendix contains the verification proof for Proposition \ref{prop:verification} from Section \ref{sec:fullInformation}. For any admissible $\pi\in\mathcal A^\full$ consider the stopped SDE for $U(\Xpi(t)e^{r(T-t)})\goodchi(t)$, and let $\tau_k$ be an increasing sequence of stopping times with $\tau_k\wedge T\rightarrow T$ a.s. and for which the stochastic integral is a true martingale. The expectation satisfies
\begin{align*}
&\E\left[U\left(\Xpi(T\wedge\tau_k)e^{r(T-T\wedge\tau_k)}\right)\goodchi(T\wedge\tau_k)\Big|X(0)=x,Y(0)=y\right]\\
&=U\left(xe^{rT}\right)\goodchi(0)\\
&+(1-\gamma)\E\left[\int_0^{T\wedge\tau_k}U\left(\Xpi(u)e^{r(T-u)}\right)\Bigg(\goodchi(u)\pi(u)^\tr\Big(h(Y(u))-\mathbf{r}-\frac{\gamma}{2}\Sigma\pi(u)\Big)\right.\\
&\hspace{3cm}+\left.\pi(u)^\top\sigma_\y\psi(u)-F(Y(u),\goodchi(u),\psi(u))\Bigg)du\Bigg|X(0)=x,Y(0)=y\right]\\
&+\E\left[\int_0^{T\wedge\tau_k}U\left(\Xpi(u)e^{r(T-u)}\right)\goodchi(u)\Bigg((1-\gamma)\pi(u)^\tr(\sigma_\w dW(u)+\sigma_\y dB(t))\right.\\
&\hspace{7.5cm}\left.+\frac{\psi(u)}{\goodchi(u)}dB(u)\Bigg)\Bigg|X(0)=x,Y(0)=y\right]
\end{align*}
\begin{align*}
&=U\left(xe^{rT}\right)\goodchi(0)\\
&+(1-\gamma)\E\left[\int_0^{T\wedge\tau_k}U\left(\Xpi(u)e^{r(T-u)}\right)\Bigg(\goodchi(u)\pi(u)^\tr\Big(h(Y(u))-\mathbf{r}-\frac{\gamma}{2}\Sigma\pi(u)\Big)\right.\\
&\hspace{3cm}+\left.\pi(u)^\top\sigma_\y\psi(u)-F(Y(u),\goodchi(u),\psi(u))\Bigg)du\Bigg|X(0)=x,Y(0)=y\right]\\
&\leq U\left(xe^{rT}\right)\goodchi(0)\ ,
\end{align*}
where the inequality becomes an equality by inserting $F$ from \eqref{eq:F} and $\pi(u)=\pi^*(u,Y(u),\goodchi(u),\psi(u))$ given by equation \eqref{eq:optimalPifull}. Hence, 
\begin{align*}
&U\left(xe^{rT}\right)\goodchi(0)\\
&=\E\left[U\left(X^{\pi^*}(T\wedge\tau_k)e^{r(T-T\wedge\tau_k)}\right)\goodchi(T\wedge\tau_k)\Big|X(0)=x,Y(0)=y\right]\\
&\leq\sup_{\pi\in\mathcal A^\full}\E\left[U\left(\Xpi(T)\right)\Big|X(0)=x,Y(0)=y\right]\\
&= V^\full(0,x,y)\ .
\end{align*}
Verification is to show inequality in the other direction for the limit.
\subsection{Case $0<\gamma<1$}
For $0<\gamma<1$, using Fatou's lemma in the limit as $k\rightarrow \infty$ yields 
\begin{align*}
&\E\left[U(\Xpi(T))\Big|X(0)=x,Y(0)=y\right]\\
&=\E\left[\liminf_{k}U\left(\Xpi(T\wedge\tau_{k})e^{r(T-T\wedge\tau_k)}\right)\goodchi(T\wedge\tau_{k})\Big|X(0)=x,Y(0)=y\right]\\
&\leq\liminf_{k}\E\left[U\left(\Xpi(T\wedge\tau_{k})e^{r(T-T\wedge\tau_k)}\right)\goodchi(T\wedge\tau_{k})\Big|X(0)=x,Y(0)=y\right]\\
&\leq U\left(xe^{rT}\right)\goodchi(0)\ .
\end{align*}
The above calculations can be repeated for any $t\in[0,T]$, and hence 
\[V^\full(t,x,y)=\sup_{\pi\in\mathcal A^\full}\E\left[U(\Xpi(T))\Big|X(t)=x,Y(t)=y\right]\leq 
U\left(xe^{r(T-t)}\right)\goodchi(t)\ ,\]
which completes the verification for $\gamma\in(0,1)$.
\subsection{Case $\gamma>1$}
In this case $U(x)<0$ so Fatou lemma does not apply directly. Let $\Xpi_*(T)=\inf_{0\leq t\leq T}\Xpi(t)$ and assume $\E U(\Xpi_*(T))>-\infty$. Then
\begin{align*}
0&\leq\E\left[U\left(\Xpi(T)\right)- U(\Xpi_*(T))\Big|X(0)=x,Y(0)=y\right]\\
&=\E\left[\liminf_{k}\left(U\left(\Xpi(T\wedge\tau_k)e^{r(T-T\wedge\tau_k)}\right)-U(\Xpi_*(T))\right)\goodchi(T\wedge\tau_k)\Big|X(0)=x,Y(0)=y\right]\\
&\leq\liminf_{k}\E\left[\left(U\left(\Xpi(T\wedge\tau_k)e^{r(T-T\wedge\tau_k)}\right)-U(\Xpi_*(T))\right)\goodchi(T\wedge\tau_k)\Big|X(0)=x,Y(0)=y\right]\\
&=U\left(xe^{rT}\right)\goodchi(0)+\liminf_{k}\E\left[-U(\Xpi_*(T))\goodchi(T\wedge\tau_k)\Big|X(0)=x,Y(0)=y\right]\\
&\leq U\left(xe^{rT}\right)\goodchi(0)-\E\left[U(\Xpi_*(T))\Big|X(0)=x,Y(0)=y\right]\ .
\end{align*}
Now $\E\left[U(\Xpi_*(T))\Big|X(0)=x,Y(0)=y\right]$ cancels from both sides and there is the bound
 \[\E\left[U\left(\Xpi(T)\right)\Big|X(0)=x,Y(0)=y\right]< U\left(xe^{rT}\right)\goodchi(0)\ .\]
If it cannot be shown that $\E\left[U(\Xpi_*(T))\Big|X(0)=x,Y(0)=y\right]<\infty$, then a truncation argument can be used to show the bound up to an arbitrarily small constant.

\section{Proof of Theorem \ref{thm:fullInfo_uniqueness}}
\label{app:fullInfo_BSDE}
General existence and uniqueness theory for BSDEs can be applied if the problem is truncated to have $Y(t)$ and $\pi(t)$ confined to compact sets. For some positive $K<\infty$ define the truncated set of admissible strategies
\begin{equation*}
 \mathcal A_{K}^{\full }= \mathcal A^\full\cap\left\{\pi:[0,T]\times\Omega\rightarrow \mathbb R^\dd~~\hbox{s.t.}~\sup_{t\in[0,T]}\|\pi(t)\|<K~\hbox{a.s.}\right\}\ .
\end{equation*}
Also define the stopping time
\[\tau_K = \inf\left\{t>0~~\hbox{s.t}~~ \|Y(t)\|\geq K\right\}\ ,\]
and consider the truncated BSDE:
\begin{align}
\nonumber
-d\goodchi_K(t) &= (1-\gamma)F_K\left(Y(t),\goodchi_K(t) , \psi_K(t)\right)dt-\psi_K(t)^\tr dB(t)\ ,\qquad\hbox{for } t\leq \tau_k\\
\label{eq:fullInfo_boundedBSDE}
\goodchi_K(T\wedge\tau_K)&=1\ ,
\end{align}
where $F_K(y,g,\eta) = \max_{\|\pi\|\leq K}f(y,\pi,g,\eta)$ and is well defined because $f$ given by \eqref{eq:f_concave} is a concave function of $\pi$. There is a uniform Lipschitz constant for $F_K$ for all $t\leq \tau_K$, and so \eqref{eq:fullInfo_boundedBSDE} has a unique solution $(\goodchi_K,\psi_K)\in\Sspace_T^2(\Pspace_1^\full)\times \Hspace_T^2(\Pspace_{\qq}^\full)$. The solution to the BSDE is associated with a viscosity solution, $\goodchi_K(t)=G_K^\full(t,Y(t))$ and $\psi_K(t)=a(Y(t))^\tr\nabla G_K^\full(t,Y(t))$, where $G_K^\full$ is a viscosity solution of the boundary value problem,
\begin{align}
\label{eq:HJB_GK}
\left(\frac{\partial}{\partial t}+\mathcal L\right)G_K^\full+(1-\gamma)F_K\left(y,G_K^\full,\sigma_\y\nabla G_K^\full\right)&=0\\
\nonumber
G_K^\full\Big|_{t=T}&=1\\
\nonumber
G_K^\full\Big|_{\|y\|=K}&=1 \ .
\end{align}
Equation \eqref{eq:HJB_GK} has a unique classical solution, as it meets the criterions for application of Theorem 4.1 from Chapter IV.4 of \cite{flemingSonerBook}. Moreover, as $\mathcal L$ is degenerate elliptic and the Hessian $\nabla\nabla^\top G_K^\full$ is not present in the nonlinearity of \eqref{eq:HJB_GK}, the unique solution to \eqref{eq:HJB_GK} is also a viscosity solution (see \cite{CL1992}). Hence $\goodchi_K(t)=G_K^\full(t,Y(t))$ is a viscosity solution, and is the value function
\begin{align*}
&\goodchi_K^\full(t)=1+(1-\gamma)\\
&\times\sup_{\pi\in\mathcal A_{K}^\full}\E\left[\int_{t\wedge\tau_K}^{T\wedge\tau_K}f\Big(Y(u),\pi(u),G_K^\full(u,Y(u)),\sigma_\y\nabla G_K^\full(u,Y(u))\Big)du\Bigg|\F_t\right]\ .
\end{align*}
This truncated value function can be used to show uniqueness of solutions to \eqref{eq:fullInfo_BSDE}. The proof is based on the following two propositions,
\begin{proposition}
\label{prop:chiK_bound}
Suppose there exists $(\goodchi,\psi)\in\Sspace_T^2(\Pspace_1^\full)\times \Hspace_T^2(\Pspace_{\qq}^\full)$ a solution to \eqref{eq:fullInfo_BSDE}, in particular that $\E\sup_{t\in[0,T]}|\goodchi(t)|^2<\infty$. Then
\[\sup_{K>0}\E\sup_{t\in[0,T]}|\goodchi_K(t)|^2\leq \E\sup_{t\in[0,T]}|\goodchi(t)|^2<\infty\ ,\]
where $(\goodchi_K,\psi_K)\in\Sspace_T^2(\Pspace_1^\full)\times \Hspace_T^2(\Pspace_{\qq}^\full)$ is a solution to \eqref{eq:fullInfo_boundedBSDE}.
\end{proposition}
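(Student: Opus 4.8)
The plan is to establish the bound by a pathwise comparison of $\goodchi_K$ against the given solution $\goodchi$, treating the cases $\gamma\in(0,1)$ and $\gamma>1$ separately, since the sign of $1-\gamma$ reverses the relevant monotonicity. First I would record the facts already available from the representations of Section \ref{sec:fullInformation}: both $\goodchi(t)=G^\full(t,Y(t))$ and $\goodchi_K(t)=G_K^\full(t\wedge\tau_K,Y(t\wedge\tau_K))$ are (normalized) value functions, the maximized integrand $F$ and its truncation $F_K$ are nonnegative with $F_K\le F$, and $\goodchi_K>0$ almost surely by the same comparison argument (Theorem 6.2.2 of \cite{pham}) that gives $\goodchi>0$.

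For $\gamma>1$ the bound is almost immediate. From the value-function representation of $G_K^\full$ together with $F_K\ge 0$ and $1-\gamma<0$ one has $\goodchi_K(t)=1+(1-\gamma)\cdot(\text{nonnegative})\le 1$, so that $\sup_{t}|\goodchi_K(t)|^2\le 1$; on the other hand $\goodchi(T)=1$ forces $\sup_{t\in[0,T]}|\goodchi(t)|^2\ge 1$, and hence $\E\sup_t|\goodchi_K|^2\le 1\le\E\sup_t|\goodchi|^2$ uniformly in $K$.

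For $\gamma\in(0,1)$ the real work is to prove the pathwise domination $1\le\goodchi_K(t)\le\goodchi(t)$. I would obtain this from dynamic programming rather than a BSDE comparison theorem, because the driver $F$ is neither globally Lipschitz nor of the quadratic type covered by \cite{kobylanski2000}. Stopping the full problem at $s:=T\wedge\tau_K$ gives $G^\full(t,y)=\sup_{\pi\in\mathcal A^\full}\E[\,\mathcal R^\pi(t,s)\,G^\full(s,Y(s))\mid Y(t)=y\,]$, where $\mathcal R^\pi(t,s)=(\Xpi(s)e^{r(T-s)}/(xe^{r(T-t)}))^{1-\gamma}>0$ is the genuine, non-self-referential wealth-ratio payoff; the truncated problem, whose boundary value on $\{\|y\|=K\}$ and at $t=T$ equals $1$, has the representation $G_K^\full(t,y)=\sup_{\pi\in\mathcal A_K^\full}\E[\,\mathcal R^\pi(t,s)\mid Y(t)=y\,]$. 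Using $\mathcal A_K^\full\subset\mathcal A^\full$ to shrink the supremum, the positivity of $\mathcal R^\pi$, and the elementary bound $G^\full(s,Y(s))\ge 1$ (itself immediate from $F\ge0$ and $1-\gamma>0$), I would chain $G^\full(t,y)\ge\sup_{\pi\in\mathcal A_K^\full}\E[\mathcal R^\pi(t,s)G^\full(s,Y(s))]\ge\sup_{\pi\in\mathcal A_K^\full}\E[\mathcal R^\pi(t,s)]=G_K^\full(t,y)$. Evaluating the deterministic inequality $1\le G_K^\full\le G^\full$ along $(t,Y(t))$ then yields $1\le\goodchi_K(t)\le\goodchi(t)$ for $t\le\tau_K$, whence $\sup_{t}|\goodchi_K(t)|^2\le\sup_{t\in[0,T]}|\goodchi(t)|^2$ almost surely; taking expectations and the supremum over $K$ finishes the proof.

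The step I expect to be the main obstacle is the $\gamma\in(0,1)$ comparison $G_K^\full\le G^\full$: one must invoke the dynamic programming principle with the random terminal time $s=T\wedge\tau_K$ and the Dirichlet reset to $1$ on $\{\|y\|=K\}$, and verify that truncating the admissible set and replacing the continuation value $G^\full(s,Y(s))$ by the boundary value $1$ both lower the value in the same direction. The delicate points are confirming admissibility of the extensions of truncated strategies, the positivity and integrability of $\mathcal R^\pi$ needed to pass to the supremum, and the fact that the non-Lipschitz, non-quadratic driver $F$ blocks a direct appeal to a standard BSDE comparison theorem, forcing the control-theoretic route.
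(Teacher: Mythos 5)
Your $\gamma>1$ half is fine: bounding $\goodchi_K\le 1$ (together with $\goodchi_K>0$, which is legitimate for the truncated problem since its driver is Lipschitz) and noting $\E\sup_{t\in[0,T]}|\goodchi(t)|^2\ge\E|\goodchi(T)|^2=1$ gives the claim, and this matches what the paper's argument yields. The genuine gap is in the $\gamma\in(0,1)$ half, where you record as an ``already available fact'' that $\goodchi(t)=G^\full(t,Y(t))$ with $G^\full$ \emph{the} value function of the control problem, satisfying a dynamic programming principle. The hypothesis of the proposition is only that $(\goodchi,\psi)$ is \emph{some} solution of \eqref{eq:fullInfo_BSDE} in $\Sspace_T^2(\Pspace_1^\full)\times\Hspace_T^2(\Pspace_{\qq}^\full)$; this proposition exists precisely to feed the uniqueness result (Theorem \ref{thm:fullInfo_uniqueness}), while the identification of a solution with the value function is the content of the verification result (Proposition \ref{prop:verification}), which the paper states under the assumption that the solution is unique. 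As written, your argument is therefore circular within the paper's logical structure. Remark \ref{remark:viscositySolutions} makes the underlying obstruction explicit: an arbitrary solution of \eqref{eq:fullInfo_BSDE} is only known to produce a \emph{viscosity} solution of \eqref{eq:HJB_G}, and uniqueness of viscosity solutions --- which is exactly what would let you identify $\goodchi$ with $G^\full$ --- is unavailable because $F$ violates the comparison and growth conditions of the existing theory. Your chain could be partially repaired: for $\gamma\in(0,1)$ a Fatou/supermartingale argument (as in Appendix \ref{app:fullInfoVerification}) gives the one-sided bound $G^\full(t,Y(t))\le\goodchi(t)$ for an arbitrary solution, which is all your chain actually needs; but you would still owe a proof of the dynamic programming inequality $G^\full(t,y)\ge\sup_{\pi\in\mathcal A^\full}\E\left[\mathcal R^\pi(t,s)\,G^\full(s,Y(s))\right]$ at the random time $s=T\wedge\tau_K$, and that is the direction requiring concatenation of strategies and measurable selection --- heavy machinery established nowhere in the paper.

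Moreover, your stated reason for abandoning the BSDE comparison route --- that $F$ is neither Lipschitz nor of Kobylanski's quadratic type --- is not a genuine obstruction; in fact the comparison route is exactly the paper's proof. A comparison theorem needs Lipschitz-type control on only \emph{one} of the two drivers: on the stochastic interval $[0,T\wedge\tau_K]$ the truncated driver $F_K$ is uniformly Lipschitz (there $\|Y(t)\|\le K$ and the maximization is over $\|\pi\|\le K$), while $F_K\le F$ pointwise and $\goodchi_K(T\wedge\tau_K)=1\le\goodchi(T\wedge\tau_K)$. The paper invokes Proposition 2.9 of \cite{kobylanski2000} to conclude $0\le\goodchi_K(t)=\goodchi_K(t\wedge\tau_K)\le\goodchi(t\wedge\tau_K)$ for $\gamma\in(0,1)$ (with the reversed inequality and upper bound $1$ for $\gamma>1$), after which the $L^2$ bound is immediate from $\sup_{t\in[0,T]}|\goodchi(t\wedge\tau_K)|\le\sup_{t\in[0,T]}|\goodchi(t)|$. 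This one-line comparison avoids both the verification and the DPP issues entirely, which is presumably why the paper structures the uniqueness argument this way.
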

\begin{proof}
Start with the case $\gamma\in(0,1)$. For any $(t,y,g,p)\in[0,T]\times \mathbb R^\qq\times\mathbb R^+\times\mathbb R^\qq$, $F_K(y,g,p)\leq F(y,g,p)$. Hence, $0\leq\goodchi_K(t)=\goodchi_K(t\wedge\tau_K)\leq \goodchi(t\wedge\tau_K)$ by a comparison principle (see Proposition 2.9 in \cite{kobylanski2000}), and 
\[\sup_{K>0}\E\sup_{t\in[0,T]}|\goodchi_K(t)|^2\leq \sup_{K>0}\E\sup_{t\in[0,T]}|\goodchi(t\wedge\tau_K)|^2\leq \E\sup_{t\in[0,T]}|\goodchi(t)|^2<\infty\ ,\]
because $\sup_{t\in[0,T]}|\goodchi(t\wedge\tau_K)|^2\leq \sup_{t\in[0,T]}|\goodchi(t)|^2$. 

For $\gamma>1$ the comparison is made by looking at $0\geq(1-\gamma)F_K(y,g,p)\geq (1-\gamma)F(y,g,p)$, which implies $1\geq\goodchi_K(t)=\goodchi_K(t\wedge\tau_K)\geq \goodchi(t\wedge\tau_K)$. Taking expectations of squares yields $\sup_{K>0}\E\sup_{t\in[0,T]}|\goodchi_K(t)|^2\leq \E\sup_{t\in[0,T]}|\goodchi(t\wedge\tau_K)|^2\vee 1\leq \E\sup_{t\in[0,T]}|\goodchi(t)|^2\vee 1<\infty$.
\end{proof}

\begin{proposition}
\label{prop:chiK_inside}
Suppose there exists $(\goodchi,\psi)\in\Sspace_T^2(\Pspace_1^\full)\times \Hspace_T^2(\Pspace_{\qq}^\full)$ a solution to \eqref{eq:fullInfo_BSDE}. Then
\[(\goodchi(t)-\goodchi_K(t))\indicator{\tau_K\geq T}=0\qquad\hbox{almost surely for all $t\in[0,T]$,}\]
where $(\goodchi_K,\psi_K)\in\Sspace_T^2(\Pspace_1^\full)\times \Hspace_T^2(\Pspace_{\qq}^\full)$ is a solution to \eqref{eq:fullInfo_boundedBSDE}.
\end{proposition}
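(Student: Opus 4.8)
The plan is to reproduce, in the full-information setting, the contradiction argument that established the partial-information analog in Proposition \ref{prop:xiAnd_xim_paths}, the engine of which is the \emph{uniqueness} of the truncated solution $(\goodchi_K,\psi_K)$ in $\Sspace_T^2(\Pspace_1^\full)\times\Hspace_T^2(\Pspace_\qq^\full)$. Write $\mathcal O=\{\tau_K\geq T\}$. On $\mathcal O$ one has $T\wedge\tau_K=T$, so the stopped BSDE \eqref{eq:fullInfo_boundedBSDE} runs on the full interval $[0,T]$ and its terminal datum $\goodchi_K(T\wedge\tau_K)=1$ coincides with $\goodchi(T)=1$; moreover the path never reaches the sphere $\|y\|=K$, so the boundary value imposed on $G_K^\full$ is never felt. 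Consequently the restrictions of \eqref{eq:fullInfo_BSDE} and \eqref{eq:fullInfo_boundedBSDE} to $\mathcal O$ differ only through their drivers, $(1-\gamma)F$ versus $(1-\gamma)F_K$, and the whole proposition reduces to showing that these two drivers agree along the solution path on $\mathcal O$.

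First I would record the precise criterion for $F=F_K$. Since $F(y,g,\eta)=\max_{\pi}f$ and $F_K(y,g,\eta)=\max_{\|\pi\|\leq K}f$ are maxima of the same strictly concave $f$ from \eqref{eq:f_concave}, they are equal exactly when the unconstrained maximizer lies in the ball, i.e. when the $\pi^*$ of \eqref{eq:piStarFull_BSDE}, evaluated at $(Y(t),\goodchi(t),\psi(t))$ as in \eqref{eq:optimalPifull}, satisfies $\|\pi^*(t)\|\leq K$. On $\mathcal O$ we have $\|Y(t)\|<K$ for every $t\leq T$; invoking Condition \ref{cond:cons_a_exist} (under which $G^\full=\exp(-\varphi)$ with $\|\nabla\varphi\|\leq C(1+\|y\|)$) together with the Lipschitz growth of $h$ gives $\|\psi/\goodchi\|=\|a^\tr\nabla\log G^\full\|\leq C(1+\|Y\|)$, and hence an a priori bound on $\pi^*$ in terms of $\|Y(t)\|$ alone. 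Thus $\pi^*$ stays inside the truncation region on $\mathcal O$, and, the truncation level being matched to this bound, the two drivers coincide there.

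With the drivers reconciled, I would close the argument exactly as in Proposition \ref{prop:xiAnd_xim_paths}: if $(\goodchi,\psi)\neq(\goodchi_K,\psi_K)$ on a positive-probability subset of $\mathcal O$, then the pair equal to $(\goodchi,\psi)$ on $\mathcal O$ and to $(\goodchi_K,\psi_K)$ on $\mathcal O^c$ furnishes a second solution of the truncated BSDE \eqref{eq:fullInfo_boundedBSDE}, contradicting its uniqueness; hence $(\goodchi(t)-\goodchi_K(t))\indicator{\tau_K\geq T}=0$ for all $t$. I expect the genuine obstacle to lie in the driver-coincidence step, and here the full-information problem is strictly harder than its partial-information counterpart: in Proposition \ref{prop:xiAnd_xim_paths} the coefficient $\hat h_K$ is \emph{literally} equal to $\hat h$ on $\{\tau_K\geq T\}$, whereas here the truncation acts on the control $\pi$ while the stopping time acts on the state $Y$, so the inactivity of the $\pi$-constraint must be argued rather than read off. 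This is precisely where the linear-growth estimate of Condition \ref{cond:cons_a_exist} is indispensable, and it is the delicate point forcing the control truncation and the state localization to be taken at compatible levels; the remaining steps (matching terminal data and the appeal to uniqueness) are bookkeeping that parallels the partial-information proof.
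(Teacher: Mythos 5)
Your core argument is exactly the paper's proof: the paper's entire proof of Proposition \ref{prop:chiK_inside} is the gluing-plus-uniqueness contradiction, with the hybrid pair equal to $(\goodchi,\psi)$ on $\mathcal O=\{\tau_K\geq T\}$ and to $(\goodchi_K,\psi_K)$ off $\mathcal O$, followed by an appeal to uniqueness of solutions of \eqref{eq:fullInfo_boundedBSDE}. The paper says nothing about reconciling the drivers, so your observation that the hybrid solves \eqref{eq:fullInfo_boundedBSDE} only if $F$ and $F_K$ agree along the path on $\mathcal O$ --- because the truncation acts on the control $\pi$ while the localization acts on the state $Y$ --- is a perceptive point that goes beyond what the paper writes. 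The trouble is that your repair of this step cannot be carried out under the proposition's hypotheses, so your proof is left with a genuine gap precisely at the step you identify as the crux.

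Concretely: the proposition assumes only that $(\goodchi,\psi)$ is \emph{some} solution in $\Sspace_T^2(\Pspace_1^\full)\times \Hspace_T^2(\Pspace_{\qq}^\full)$; it does not assume Condition \ref{cond:cons_a_exist}, nor that $\goodchi(t)=G^\full(t,Y(t))$ and $\psi(t)=a(Y(t))^\tr\nabla G^\full(t,Y(t))$ for a classical solution with the gradient bound $\|\nabla \varphi\|\leq C(1+\|y\|)$. For an arbitrary square-integrable solution there is no a priori bound on $\|\psi(t)/\goodchi(t)\|$ in terms of $\|Y(t)\|$, so on $\mathcal O$ the unconstrained maximizer \eqref{eq:optimalPifull} need not lie in the ball of radius $K$, and the drivers need not coincide. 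Importing the PDE representation would also be circular: Proposition \ref{prop:chiK_inside} exists to prove Theorem \ref{thm:fullInfo_uniqueness}, which applies it to two \emph{arbitrary} solutions to conclude they are equal; if your proof only covers the solution manufactured from the classical $G^\full$, the uniqueness theorem collapses to the triviality that this particular solution equals itself. Finally, even granting the gradient bound, on $\mathcal O$ you obtain only $\|\pi^*(t)\|\leq C(1+\|Y(t)\|)\leq C(1+K)$, which is not $\leq K$ in general; since $\tau_K$, $F_K$, and \eqref{eq:fullInfo_boundedBSDE} are all tied to the \emph{same} constant $K$ in the statement, ``taking the truncation and localization levels compatible'' is not a move your proof is free to make --- it amounts to proving a different proposition about differently defined objects. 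The honest conclusion is that the driver-coincidence step needs an a priori estimate on $\psi/\goodchi$ valid for every solution in the product space (or a restatement with decoupled truncation levels), and neither your argument nor the paper's supplies one.
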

\begin{proof}The proof is by contradiction and (similar to that of Proposition \ref{prop:xiAnd_xim_paths}). Letting $\mathcal O=\{\omega\in\Omega~~\hbox{s.t}~~\tau_K\geq T\}$. Suppose $(\goodchi,\psi)\neq (\goodchi_K,\psi_K)$ for some $\omega\in\mathcal O$. Then there is another solution to \eqref{eq:partialInfo_boundedBSDE},

\[(\tilde \goodchi_K,\tilde\psi_K)=\left\{
\begin{array}{ll}(\goodchi,\psi)&\hbox{for }\omega\in\mathcal O\\
(\goodchi_K,\psi_K)&\hbox{for }\omega\notin\mathcal O\ ,
\end{array}\right.
\]
with $(\tilde \goodchi_K,\tilde\psi_K)\neq( \goodchi_K,\psi_K)$, but the solution to \eqref{eq:fullInfo_boundedBSDE} is unique. Hence there is a contradiction.
\end{proof}
Using the truncated problem and associated notation, Propositions \ref{prop:chiK_bound} and \ref{prop:chiK_inside} are applied to prove Theorem \ref{thm:fullInfo_uniqueness}:

\begin{proof}[Proof of Theorem \ref{thm:fullInfo_uniqueness}]
Let $(\goodchi,\psi)$ and $(\tilde\goodchi,\tilde\psi)$ be two solutions to \eqref{eq:fullInfo_BSDE} in the space $\Sspace_T^2(\Pspace_1^\full)\times \Hspace_T^2(\Pspace_{\qq}^\full)$. Applying Propositions \ref{prop:chiK_bound} and \ref{prop:chiK_inside} and taking the limit as $K\rightarrow\infty$,
\begin{align*}
\E\sup_{t\in[0,T]}|\goodchi(t)-\tilde\goodchi(t)|&\leq \E\sup_{t\in[0,T]}|\goodchi(t)-\goodchi_K(t)|+\E\sup_{t\in[0,T]}|\tilde\goodchi(t)-\goodchi_K(t)|\\
&=\E\sup_{t\in[0,T]}|\goodchi(t)-\goodchi_K(t)|\indicator{\tau_K<T}+\E\sup_{t\in[0,T]}|\tilde\goodchi(t)-\goodchi_K(t)|\indicator{\tau_K<T}\\
&\leq \left(2\E\sup_{t\in[0,T]}|\goodchi(t)|^2\E\indicator{\tau_K<T}\right)^{1/2}+\left(2\E\sup_{t\in[0,T]}|\tilde\goodchi(t)|^2\E\indicator{\tau_K<T}\right)^{1/2}\\
&\rightarrow 0\ ,
\end{align*}
and so $\goodchi=\tilde\goodchi$ almost surely.

To show uniqueness of $\psi$, consider the integrated form of the difference,
\begin{align*}
&\goodchi(t)-\tilde\goodchi(t)\\
&=\goodchi(t_0)-\tilde\goodchi(t_0)+(1-\gamma)\int_{t_0}^t \left(F(Y(u),\goodchi(u) , \psi(u)\right)-F\left(Y(u),\tilde\goodchi(u) , \tilde\psi(u))\right)du\\
&\hspace{7cm}-\int_{t_0}^t\left(\psi(u)-\tilde\psi(u)\right)^\tr dB(u)\ ,
\end{align*}
for all $0\leq t_0\leq t\leq T$. As $\goodchi=\tilde\goodchi$ almost surely, it follows that 
\begin{align}
\label{eq:diffFexpectation}
\E\left[\int_{t_0}^t \left(F\left(Y(u),\goodchi(u) , \psi(u)\right)-F(Y(u),\tilde\goodchi(u) , \tilde\psi(u))\right)du\Big|\mathcal F_{t_0}^B\right]&=0\ ,
\end{align}
and
\begin{align}
\nonumber
\E\Bigg[\left((1-\gamma)\int_{t_0}^t \left(F\left(Y(u),\goodchi(u) , \psi(u)\right)-F(Y(u),\tilde\goodchi(u) , \tilde\psi(u))\right)du\right.\hspace{2cm}&\\
\label{eq:diffXi2expectation}
\hspace{5cm}\left.-\int_{t_0}^t\left(\psi(u)-\tilde\psi(u)\right)^\tr dB(u)\right)^2\Bigg|\mathcal F_{t_0}^B\Bigg]&=0\ .
\end{align}
If the square inside the expectation of \eqref{eq:diffXi2expectation} is multiplied out, it is found by applying \eqref{eq:diffFexpectation} that the cross term is zero, in particular
\begin{align*}
&\E \int_0^T\int_0^T\left(F\left(Y(u'),\goodchi(u') , \psi(u')\right)\right.\\
&\hspace{3cm}\left.-F(Y(u'),\tilde\goodchi(u') , \tilde\psi(u'))\right)\left(\psi(u)-\tilde\psi(u)\right)^\tr dB(u)du'=0\ .
\end{align*}
Therefore, \eqref{eq:diffXi2expectation} is equal to the sum of two non-negative quantities, and because this sum is equal to zero, it follows that both quantities must be zero. Namely, 
\begin{align*}
\E\left(\int_0^T \left(F\left(Y(u),\goodchi(u) , \psi(u)\right)-F(Y(u),\tilde\goodchi(u) , \tilde\psi(u))\right)du\right)^2&=0
 ,\\
\E\left(\int_0^T\left(\psi(u)-\tilde\psi(u)\right)^\tr dB(u)\right)^2&=0 \ .
\end{align*} 
Hence, by the It\^o isometry
\[\E\int_0^T\|\psi(u)-\tilde\psi(u)\|^2du = \E\left(\int_0^T\left(\psi(u)-\tilde\psi(u)\right)^\tr dB(u)\right)^2=0\ ,\]
which means $\psi(t)=\tilde\psi(t)$ almost surely for almost everywhere $t\in[0,T]$.
\end{proof}

\bibliographystyle{alpha}
{\small\bibliography{refs}}

\begin{thebibliography}{GKSW14}

\bibitem[Bas00]{basak2000}
S.~Basak.
\newblock A model of dynamic equilibrium asset pricing with heterogeneous
  beliefs and extraneous risk.
\newblock {\em Journal of Economic Dynamics and Control}, 24:63--95, 2000.

\bibitem[Bas05]{basak2005}
S.~Basak.
\newblock Asset pricing with heterogeneous beliefs.
\newblock {\em Journal of Banking and Finance}, 29:2849--2881, 2005.

\bibitem[BDL10]{BDL2010}
T.~Bj{\"o}rk, M~Davis, and C.~Landen.
\newblock Optimal investment under partial information.
\newblock {\em Mathematical Methods of Operations Research}, 71(2):371--399,
  2010.

\bibitem[Ben92]{bensoussanBook}
A.~Bensoussan.
\newblock {\em Stochastic Control of Partially Observable Systems}.
\newblock Cambridge University Press, Cambridge, UK, 1992.

\bibitem[BFY15]{bensoussan2015}
A.~Bensoussan, J.~Frehse, and P.~Yam.
\newblock On the interpretation of the master equation.
\newblock 2015.
\newblock Working paper, available at \url{http://arxiv.org/abs/1503.07754}.

\bibitem[Bj{\"o}09]{bjork}
T.~Bj{\"o}rk.
\newblock {\em Arbitrage Theory in Continuous Time}.
\newblock Oxford University Press, New York, NY, 3rd edition, 2009.

\bibitem[BKS09]{bensoussan2009}
A.~Bensoussan, J.~Keppo, and S.P. Sethi.
\newblock Optimal consumption and portfolio decisions with partially observable
  real prices.
\newblock {\em Mathematical Finance}, 19(2):215--236, April 2009.

\bibitem[BMZ11]{zheng2011}
B.~Bian, S.~Miao, and H.~Zheng.
\newblock Smooth value functions for a class of nonsmooth utility maximization
  problems.
\newblock {\em SIAM J. Financial Mathematics}, 2(1):727--747, 2011.

\bibitem[BR05]{bauerleRieder2005}
N.~B{\"a}uerle and U.~Rieder.
\newblock Portfolio optimization with unobservable {M}arkov-modulated drift
  process.
\newblock {\em Journal of Applied Probability}, 42(2):362--378, 2005.

\bibitem[Bre06]{brendle2006}
S.~Brendle.
\newblock Portfolio selection under incomplete information.
\newblock {\em Stochastic Processes and their Applications}, 116(5):701--723,
  2006.

\bibitem[Car09]{carmonaIP4}
R.~Carmona.
\newblock From {M}arkovian to partially observable models.
\newblock In {\em Indifference Pricing}. Princeton University Press, 2009.

\bibitem[Car15]{reneSIAM}
R.~Carmona.
\newblock {\em Lectures on {BSDE}s, Stochastic Control and Stochastic
  Differential Games}.
\newblock SIAM Publishing, Philadelphia, PA, 2015.

\bibitem[CDET13]{carmona2013}
R.~Carmona, F.~Delarue, G.~Espinosa, and N.~Touzi.
\newblock Singular forward-backward stochastic differential equations and
  emissions derivatives.
\newblock {\em Annals of Applied Probability}, 23(3):1086--1128, 2013.

\bibitem[CIL92]{CL1992}
M.~Crandall, H.~Ishii, and P.L. Lions.
\newblock User's guide to viscosity solutions and second-order partial
  differential equations.
\newblock {\em Bulletin of American Mathematical Society}, 27(1):1--67, 1992.

\bibitem[CMR05]{cappeMoulinesRyden2005}
O.~Cappe, E.~Moulines, and T.~Ryden.
\newblock {\em Inference in Hidden {M}arkov Models}.
\newblock Springer-Verlag New York, Inc., Secaucus NJ, 2005.

\bibitem[CSTV07]{touzi2006}
P.~Cheridito, M.~Soner, N.~Touzi, and N.~Victoir.
\newblock Second order backward stochastic differential equations and fully
  non-linear parabolic {PDE}s.
\newblock {\em Communications in Pure and Applied Mathematics},
  60(7):1081--1110, 2007.

\bibitem[Det86]{detemple1986}
J.~Detemple.
\newblock Asset pricing in production economy with incomplete information.
\newblock {\em Journal of Finance}, 41(2):383--391, 1986.

\bibitem[DF86]{dothanFeldman1986}
M.U. Dothan and D.~Feldman.
\newblock Equilibrium interest rates and multiperiod bonds in a partially
  observable economy.
\newblock {\em The Journal of Finance}, 41:369--382, 1986.

\bibitem[DFSZ97]{duffieFlemingSonerZaripho1997}
D.~Duffie, W.~Fleming, M.~Soner, and T.~Zariphopoulou.
\newblock Hedging in incomplete markets with {HARA} utility.
\newblock {\em Journal of Economic Dynamics and Control}, 21(4-5):753--782,
  1997.

\bibitem[DM94]{detemple1994}
J.~Detemple and S.~Murthy.
\newblock Intertemporal asset price with heterogeneous beliefs.
\newblock {\em Journal of Economic Theory}, 62(1):264--320, 1994.

\bibitem[DRM03]{detempleGarcia2003}
J.~Detemple, R.Garcia, and M.Rindisbacher.
\newblock A monte carlo method for optimal portfolios.
\newblock {\em The Journal of Finance}, 58(1):401--446, 2003.

\bibitem[DZ91]{detemple1991}
J.~Detemple and F.~Zapatero.
\newblock Asset price in an exchange economy with habit formation.
\newblock {\em Econometrica}, 59(6):1633--1657, 1991.

\bibitem[EKPQ97]{elKarouiPengQuenez1997}
N.~El~Karoui, S.~Peng, and M.~C. Quenez.
\newblock Backward stochastic differential equations in finance.
\newblock {\em Mathematical Finance}, 7(1):1--71, 1997.

\bibitem[EKR00]{elKarouiRouge2000}
N.~El~Karoui and R.~Rouge.
\newblock Pricing via utility maximization and entropy.
\newblock {\em Mathematical Finance}, 10(2):259--276, 2000.

\bibitem[Fel89]{feldman1989}
D.~Feldman.
\newblock The term structure of interest rates in a partially observed economy.
\newblock {\em Journal of Finance}, 44(3):789--812, 1989.

\bibitem[FL91]{florchingerLeGland1991}
P.~Florchinger and F.~LeGland.
\newblock Time-discretization of the {Z}akai equation for diffusion processes
  observed in correlated noise.
\newblock {\em Stochastics and Stochastic Reports}, 35(4):233--256, 1991.

\bibitem[FPS15]{FPS2015}
J.-P. Fouque, A.~Papanicolaou, and R.~Sircar.
\newblock Filtering and portfolio optimization with stochastic unobserved drift
  in asset returns.
\newblock {\em Communications in Mathematical Sciences}, 13(4):935--953, 2015.

\bibitem[FPS17]{FPS2016}
J.-P. Fouque, A.~Papanicolaou, and R.~Sircar.
\newblock Perturbation analysis for investment portfolios under partial
  information with expert opinions.
\newblock {\em SIAM Journal on Control and Optimization}, 55(3):1534--1566,
  2017.

\bibitem[FS05]{flemingSonerBook}
W.~Fleming and M.~Soner.
\newblock {\em Controlled {M}arkov Processes and Viscosity Solutions}.
\newblock Springer, New York NY, 2nd edition, 2005.

\bibitem[Gen86]{gennotte1986}
G.~Gennotte.
\newblock Optimal portfolio choice under incomplete information.
\newblock {\em Journal of Finance}, 61(1):733--749, 1986.

\bibitem[GKSW14]{GKSW2014}
A.~Gabih, H.~Kondakji, J.~Sass, and R.~Wunderlich.
\newblock Expert opinions and logarithmic utility maximization in a market with
  {G}aussian drift.
\newblock {\em Communications on Stochastic Analysis}, 8:27--47, 2014.

\bibitem[HIM05]{imkeller2005}
Y.~Hu, P.~Imkeller, and M.~M{\"u}ller.
\newblock Utility maximization in incomplete markets.
\newblock {\em Annals of Applied Probability}, 15(3):1691--1712, 2005.

\bibitem[HLTT14]{touzi2014}
P.~Henry-Labord{\`e}re, X.~Tan, and N.~Touzi.
\newblock A numerical algorithm for a class of {BSDE} via branching process.
\newblock {\em Stochastic Processes and their Applications}, 124:1112--1140,
  2014.

\bibitem[KK07]{karatzas2007}
I.~Karatzas and C.~Kardaras.
\newblock The num{\'e}raire portfolio in semimartingale financial models.
\newblock {\em Finance and Stochastics}, 11(4):447--493, 2007.

\bibitem[KO96]{kimOmberg}
T.~S. Kim and E.~Omberg.
\newblock Dynamic non-myopic portfolio behavior.
\newblock {\em Review of Financial Studies}, 9(1):141--161, 1996.

\bibitem[Kob00]{kobylanski2000}
M.~Kobylanski.
\newblock Backward stochastic differential equations and partial differential
  equations with quadratic growth.
\newblock {\em The Annals of Probability}, 28(2):558--602, 2000.

\bibitem[KS99]{KS99}
D.~Kramkov and W.~Schachermayer.
\newblock The asymptotic elasticity of utility functions and optimal investment
  in incomplete markets.
\newblock {\em The Annals of Applied Probability}, 9(3):904--950, 1999.

\bibitem[KX91]{karatzasXue1991}
I.~Karatzas and X.~Xue.
\newblock A note on utility maximization under partial observations.
\newblock {\em Mathematical Finance}, 1(2):57--70, 1991.

\bibitem[Lak98]{lakner1998}
P.~Lakner.
\newblock Optimal trading strategy for an investor: the case of partial
  information.
\newblock {\em Stochastic Processes and their Applications}, 76:77--97, 1998.

\bibitem[LP16]{papanicolaouLee2016}
S.~Lee and A.~Papanicolaou.
\newblock Pairs trading of two assets with uncertainty in co-integration's
  level of mean reversion.
\newblock {\em International Journal of Theoretical and Applied Finance},
  19(08):1--36, 2016.

\bibitem[Mer71]{merton1971}
Robert Merton.
\newblock Optimum consumption and portfolio rules in a continuous-time model.
\newblock {\em Journal of Economic Theory}, 3(4):373--413, 1971.

\bibitem[MPZ15]{zhou2015}
A.~Matoussi, D.~Possama{\"i}, and C.~Zhou.
\newblock Robust utility maximization in nondominated models with 2{BSDE}s.
\newblock {\em Mathematical Finance}, 25(2):258--287, 2015.

\bibitem[MS10]{maniaSantacroce2010}
M.~Mania and M.~Santacroce.
\newblock Exponential utility maximization under partial information.
\newblock {\em Finance and Stochastics}, 14(3):419--448, 2010.

\bibitem[Pap13]{pap2013}
A.~Papanicolaou.
\newblock Dimension reduction in discrete time portfolio optimization with
  partial information.
\newblock {\em SIAM Journal on Financial Mathematics}, 4(1):916--960, 2013.

\bibitem[Pen92]{peng1992}
S.~Peng.
\newblock Stochastic {H}amilton-{J}acobi-{B}ellman equations.
\newblock {\em SIAM J. Control and Optimization}, 30(2):284--304, March 1992.

\bibitem[Pha01]{pham2001}
H.~Pham.
\newblock Mean-variance hedging for partially observed drift processes.
\newblock {\em International Journal of Theoretical and Applied Finance},
  04(02):263--284, 2001.

\bibitem[Pha02]{pham2002}
H.~Pham.
\newblock Smooth solutions to optimal investment models with stochastic
  volatilities and portfolio constraints.
\newblock {\em Applied Mathematics and Optimization}, 46(1):55--78, 2002.

\bibitem[Pha09]{pham}
H.~Pham.
\newblock {\em Continuous-time Stochastic Control and Optimization with
  Financial Applications}.
\newblock Springer, 2009.

\bibitem[PR14]{pardouxRascanu}
E.~Pardoux and A.~R{\v a}{\c s}canu.
\newblock {\em Stochastic Differential Equations, Backward SDEs, Partial
  Differential Equations}, volume~69.
\newblock Springer, Switzerland, 2014.

\bibitem[Rog02]{rogers2002}
L.C.G. Rogers.
\newblock Duality in constrained optimal investment and consumption problems: a
  synthesis.
\newblock In {\em Paris-Princeton Lectures on Mathematical Finance}. Springer,
  2002.

\bibitem[SH04]{sassHaussman2004}
J.~Sass and U.~Haussmann.
\newblock Optimizing the terminal wealth under partial information: The drift
  process as a continuous time {M}arkov chain.
\newblock {\em Finance and Stochastics}, 8(4):553--577, November 2004.

\bibitem[SZ05]{stoikovZaripho2005}
S.~Stoikov and T.~Zariphopoulou.
\newblock Dynamic asset allocation and consumption choice in incomplete
  markets.
\newblock {\em Australian Economic Papers}, 44(4):414--454, 2005.

\bibitem[WW08]{wangWu2008}
G.~Wang and Z.~Wu.
\newblock Kalman-{B}ucy filtering equations of forward and backward stochastic
  systems and applications to recursive optimal control problems.
\newblock {\em Journal of Mathematical Analysis and Applications}, 342(2):1280
  -- 1296, 2008.

\bibitem[Zar01]{zariphopoulou2001}
T.~Zariphopoulou.
\newblock A solution approach to valuation with unhedgeable risks.
\newblock {\em Finance and Stochastics}, 5(1):61--82, 2001.

\end{thebibliography}

\end{document}